\definecolor{cinnabar}{rgb}{0.89, 0.26, 0.2}
\newcommand{\vl}[1]{\textcolor{black}{#1}}
\newcommand{\cmark}{\textnormal{\ding{51}}}
\newcommand{\xmark}{\textnormal{\ding{55}}}
\newcommand{\tn}{\textnormal}
\newcommand{\inn}{\textnormal{in}}
\newcommand{\out}{\textnormal{out}}
\newcommand{\epr}{\textnormal{EPR}}
\newcommand{\kt}{{\kappa}}
\newcommand{\pavg}{{P}_{\cmark}}
\newcommand{\prep}{\textsc{Prep~}}
\theoremstyle{definition}
\newtheorem{definition}{Definition}
\newtheorem{corollary}{Corollary}
\theoremstyle{theorem}
\newtheorem{theorem}{Theorem}
\newtheorem*{theorem*}{Theorem}
\newtheorem{lemma}{Lemma}
\newcolumntype{C}[1]{>{\centering\arraybackslash}m{#1}}
\renewcommand{\ALG@name}{Test}
\begin{document}
\title{\vl{Certification of a functionality in a quantum network stage}}

\author{Victoria Lipinska}\email{v.lipinska@tudelft.nl}
\affiliation{QuTech, Delft University of Technology, Lorentzweg 1, 2628 CJ Delft, The Netherlands}
\affiliation{Kavli Institute of Nanoscience, Delft University of Technology, Lorentzweg 1, 2628 CJ Delft, The Netherlands}
\author{L\^{e} Phuc Thinh}
\affiliation{QuTech, Delft University of Technology, Lorentzweg 1, 2628 CJ Delft, The Netherlands}
\affiliation{Institut f\"{u}r Theoretische Physik, Leibniz Universit\"{a}t Hannover, Appelstr.~2, D-30167 Hannover, Germany}
\author{J\'{e}r\'{e}my Ribeiro}
\affiliation{QuTech, Delft University of Technology, Lorentzweg 1, 2628 CJ Delft, The Netherlands}
\affiliation{Kavli Institute of Nanoscience, Delft University of Technology, Lorentzweg 1, 2628 CJ Delft, The Netherlands}
\author{Stephanie Wehner}
\affiliation{QuTech, Delft University of Technology, Lorentzweg 1, 2628 CJ Delft, The Netherlands}
\affiliation{Kavli Institute of Nanoscience, Delft University of Technology, Lorentzweg 1, 2628 CJ Delft, The Netherlands}

\date{\today}
\begin{abstract}
We consider testing the ability of quantum network nodes to execute multi-round quantum protocols. Specifically, 
we examine protocols in which the nodes are capable of performing quantum gates, storing qubits and exchanging said qubits over the network a certain number of times. 
We propose a simple ping-pong test, which provides a certificate for the capability of the nodes to run certain multi-round protocols. 
We first show that in the noise-free regime the only way the nodes can pass the test is if they do indeed possess the desired capabilities. We then proceed to consider the case where operations are noisy, and provide
an initial analysis showing how our test can be used to estimate parameters that allow us to draw conclusions about the actual performance of such protocols on the tested nodes. 
Finally, we investigate the tightness of this analysis using example cases in a numerical simulation.
\end{abstract}

\maketitle

\section{Introduction}
Quantum communication allows us to solve tasks that are impossible to achieve using classical communication alone. The most well known example of such a task is quantum key distribution 
(QKD)~\cite{Bennett1984, Ekert1991}, but many more application protocols are already known (see e.g.~\cite{Wehner2018}). Such application protocols run on the end nodes of a quantum network. These 
may range from simple photonic devices capable of preparing and measuring qubits, to sophisticated quantum processors. 
Recently, stages of development for a quantum internet were identified~\cite{Wehner2018}, where each stage is distinguished by a specific functionality that is offered to a user
wishing to execute quantum network applications. Higher stages bring an increase of functionality -- and thus a richer set of possible application protocols -- at the expense of increased experimental difficulty. 

Given such stages of development, one can ask whether there exists an efficient test to certify that a network offers the capabilities of a specific stage, and with what quality parameters. Here, we will examine this question with a focus on a specific set of protocols in the stage called a \emph{quantum memory network} \cite{Wehner2018}:

\begin{adjustwidth}{2cm}{2cm}
\vspace{1em}
\vl{\textit{
``For any two end nodes $A$ and $B$ the network allows
the execution entanglement generation
and the following additional tasks in any
order: 
(i) preparation of a single-qubit
ancilla state $\ket{\psi}$ by end node $A$ or $B$,
(ii) measurements of any subset of the
qubits at node $A$ or $B$,
(iii) application of
an arbitrary unitary gate $G$ at node $A$ or $B$.
(iv) storage of the qubits for a minimum
time $k \cdot (\ell_z + \tau)$, where $\tau$ is defined as the
time that is required to generate one
Einstein–Podolsky–Rosen (EPR) pair
and send a classical message from node $A$ to
$B$ maximized over all pairs of nodes, $\ell_z$ is the time that it takes for the execution of a depth $z$ quantum circuit at the
end node.''}}
\vspace{1em}
\end{adjustwidth}

\vl{Note that to realize useful application protocols, the storage time $\tau$ needs to be understood as the communication time in the network. In particular, the nodes that are far apart must exhibit longer storage capabilities to achieve this stage of development. Moreover, the stage is only attained if \emph{any} two nodes in the network can realize the functionality, even those that are farthest apart. Therefore, time $\tau$ can be thought of as the maximum time which takes any two nodes to communicate.}

\vl{To certify that a quantum network achieves a functionality defined by this stage of development, we will consider a set of protocols which pass a qubit state $\ket{\psi}$ a number $k$ of times between the nodes $A$ and $B$, apply the gates and measure at the end. We will choose the testing nodes $A$ and $B$ to be farthest apart in the network.}

Many existing tests are known that can be used to estimate whether the operations above can each be performed individually with high accuracy. Examples include quantum state~\cite{Chuang1997} and process tomography~\cite{Poyatos1997}, gate set tomography~\cite{Merkel2013,BlumeKohout2013}, randomized benchmarking~\cite{Emerson2005,Knill2008,Magesan2011} or capacity estimation to verify the quality of qubit transmission~\cite{Pfister2018}. The concept of self-testing even allows such estimates to be made with only partial trust in the devices \vl{(so called device-independent setting)}~\cite{Montanaro2016,Bancal2018}. Having estimated the quality of each individual operation with metrics such as the diamond distance, it is straightforward to derive an overall estimate on how well protocols in this stage may be executed~\cite{Wehner2018}. Yet, running many individual tests is rather inefficient, and one may wonder whether there might exist an integrated test that instills confidence that we are capable of performing protocols up to a certain number of rounds using the quantum memory network.

Another approach to testing quantum devices comes from the literature of (interactive) proof systems where a verifier interacts \vl{with one or more provers,} who are trying to convince the verifier that a certain assertion is true, or indeed that they possess certain capabilities. A well known example of such work is the question whether a classical polynomially bounded verifier can convince herself that \vl{(two non-communicating) provers holding a quantum computer do indeed have full quantum computing capabilities~\cite{Reichardt2013}. Restricting to only a single prover, there exists also a verification protocol under complexity theoretic assumptions~\cite{Mahadev2018}.} This line of research is not concerned with the quality of specific operations, but rather aims to obtain a certificate of the provers' general abilities to solve certain tasks. Such tests are appealing as they measure general aptitude -- for example in the domain of quantum computation the ability to execute quantum algorithms -- but do not typically make specific statements such as the actual number of physical qubits involved.  Consequently, such tests usually require large amount of resources to be executed.

\section{Results}
Here we take a first step towards finding \vl{effective tests to certify that a network has reached the quantum memory stage of development (see Def.~\ref{DEF:class_of_protocols})}. We propose a test which can be interpreted from two different angles. First, we interpret it as a prover-verifier type protocol inspired by interactive proof literature, to certify that the network has certain capabilities. Second, we interpret it as a tomography-type protocol where we estimate certain properties of operations. 

\begin{itemize}

\item \textbf{Ping-pong test.} We formulate our test in a bipartite scenario where nodes $A$ and $B$ exchange quantum registers according to a defined set of rules.  We call our test the ping-pong game as it is executed by passing qubits back and forth between two nodes. Additionally, the nodes apply gates specified by a gate set $\mathsf{G}$. An important parameter of our test is the number of times $k$ that the nodes pass (ping-pong) the state around. 

\item \textbf{Prover-verifier view.} Our protocol can be viewed as a simple game that the provers (the nodes) play against the verifier with the objective of convincing the verifier that they are capable of executing any protocol in the 
quantum memory stage, which has a specific form. 
In particular, we show that the provers win the $k$-round ping-pong game
with probability one if and only if they are capable of executing perfectly any protocol of the following form: for any possible starting state $\ket{\psi}$, each 
node is capable of executing one possible gate $G \in \mathsf{G}$, before sending 
the resulting state to the other node. The nodes continue in this form for $k$ rounds, before measuring at the end. Moreover, in the case when the winning probability is strictly less than one, we certify that the nodes sent information about the state at least a certain number $m<k$ of times. 

\item \textbf{Estimation view.} In the estimation view we take on a different perspective with the objective to estimate the quality of the operations performed by the nodes, as opposed to certifying their capabilities.
We use the statistics of the ping-pong test to assess a measure of the overall quality of the network. We then compare this to the quality one would expect from combining the estimates of the individual devices used in the network.
What is more, we estimate the performance of $k$-round protocols based on our ping-pong test. In order to evaluate the accuracy of our analytical results, we compare our analytical estimate with numerical estimates for a specific example of a $k$-round protocol influenced by noise.

\end{itemize}

This paper is organized as follows. In Sec.~\ref{sec:thetest} we define the $k$-round protocols and introduce our test. Then, inspired by the interactive proof literature, in Sec.~\ref{sec:prover_verifier} we view our test in the prover-verifier setting. In Sec.~\ref{sec:estimation} we view our test in the context of estimation.

\newpage
\section{Ping-pong test} \label{sec:thetest}
\vl{\subsection{Assumptions}}

\vl{\textbf{IID.} Protocol \ref{PROT:generic} (see Section \ref{SEC:test}) implements $n$ executions of a ping-pong procedure, each of them containing at most $k$ rounds. In Sec. \ref{sec:prover_verifier} (prover-verifier view) we assume that the execution of each of these ping-pong procedures is independent of the others and identical. In particular, this means that the provers' strategy will be the same in every execution of the ping-pong procedure, i.e. their strategy is independent and identically distributed (IID) across executions. However, within one execution, the provers' strategy can involve arbitrary correlation across rounds. 
Furthermore, in Sec. \ref{sec:estimation} (estimation view), we assume that every round of the ping-pong procedure is independent of the others, although does not have to be identical.}

\vl{\textbf{EPR pairs.} The main objective in the quantum memory network stage is using quantum memory in the presence of local gates. Therefore, for simplicity, we assume that any pair of nodes can generate a perfect EPR pair between them. This assumption is strictly speaking not necessary, but merely serves as an aid in understanding our test. In Sec. \ref{SEC:more_imperf} we show how to remove this assumption and how the noise associated with an EPR pair can be absorbed into the noise of the quantum memory.}

\vl{\textbf{State preparation and measurement.} For the same reasons as above, we also assume that any node can perfectly prepare a local qubit state and perfectly measure at the end of a protocol. In Sec. \ref{SEC:more_imperf} we also discuss how to relax this assumption.}

\vl{\textbf{Hilbert space dimension.} For the sake of clarity, throughout the rest of the manuscript we will assume that protocols run on a single qubit. We remark that the results we present generalize for any number $Q$ of qubits (for details see App. \ref{APP:SEC:q_qubit_ext}).}

\vl{\textbf{Device stability.}  In the estimation view Sec. \ref{sec:estimation} (in particular in Theorem \ref{THM:bound_protocols}) we assume that after the devices were tested with the ping-pong test, their behavior does not change. That is, the devices used during the test and in a $k$-round protocol are identical. Note that this can be understood as a consequence of the above IID assumption for the estimation view.}

\subsection{$k$-round protocols}\label{SEC:functionality}

We start with formally describing $k$-round protocols. 
A bipartite $k$-round protocol between any two nodes $A$ and $B$ consists of the following consecutive operations:
\begin{enumerate}
\item Local preparation \prep of a perfect qubit state $\ket{\psi}$ by node $A$.
\item Sending deterministically the local qubit from node $A$ to node $B$ and vice versa, using a quantum channel $\mathcal{E}_{A\rightarrow B} $. \vl{Note that the time $t_{\tn{send}}$ it takes to send a qubit (or a classical bit) from node $A$ to $B$ is upper-bounded by the distance between them and the transmission speed for the qubit carrier. For example, for optical qubits the transmission speed can be understood as the speed of photons in a fiber \cite{Wehner2018}.}
\item Storing the local qubit by nodes $A$ or $B$, denoted by $M_A$ and $M_B$ respectively. Storage of the qubit takes time $t_M$. 
\item Applying an arbitrary local operation by a node on the local qubit. We describe this operation by a gate $G_A \in \mathsf{G}$ and $G_B \in \mathsf{G}$, where $\mathsf{G}$ is an arbitrary set of gates, for example the single-qubit Clifford gates.
Executing a circuit of depth $z$ takes time $\ell_z$.
\item Perfect local measurement of the local qubit at the end of the protocol. The measurements are specified by operators $\Pi_A$ and $\Pi_B$ for nodes $A$ and $B$ respectively. 
\end{enumerate}
Steps 2. -- 4. are performed in rounds $j=1,...,k$ a total number of $k$ times. We call $k$ the depth of the protocol. Each round takes time $\Delta t = t_\tn{send} +  t_M + \ell_z$, so that $t_{j+1}-t_j = \Delta t$, for all $j$. Without loss of generality we assume here that the protocols always start at node $A$. Note that the parity of $j$ indicates at which node the single qubit is located, i.e., for odd $j$ the qubit is held (sent) by $A$ and for even $j$ -- by $B$. 
Therefore, we denote the local operations performed by $A$ or $B$ at a $j$-th round by simply putting $M_{j}$, $G_{j}$. In particular, in this notation $\mathcal{E}_j$ means that a qubit is sent by $A$ and received by $B$ for odd $j$ ($\mathcal{E}_j \equiv \mathcal{E}_{A_j \rightarrow B_j}$), and vice-versa for even $j$.  

\begin{definition}[$k$-round protocols]\label{DEF:class_of_protocols}
We define a $k$-round protocol as a map of the form $\Pi \circ {\mathcal{P}}^{k}\circ \tn{\prep}$, where:
\begin{itemize}[noitemsep,topsep=0pt]
\item \prep is a preparation of a local qubit $\ket{\psi}$ (Step 1).

\item ${\mathcal{P}}^{k}$ is a map describing $k$ rounds of local operations -- memories $M_j$ and gates $G_j$, as well as sending a qubit between $A$ and $B$ (Steps 2 -- 4),
\begin{align}
{\mathcal{P}}^{k} = \bigcirc_{j = 1}^{k} ~ G_{j} \circ M_{j} \circ \mathcal{E}_j.
\end{align}

\item $\Pi$ is a local measurement of all the local qubits (Step 5). Note that depending on the parity of $k$ the measurement is performed either on $A$'s or $B$'s side.
\end{itemize}
\end{definition}

\subsection{Test}\label{SEC:test}

In this section we describe our ping-pong test. The test is a simple instance of a $k$-round protocol as in Def. \ref{DEF:class_of_protocols}. As we will see in next sections, passing the test will allow us to draw conclusion about the whole class of $k$-round protocols.

Since our test will be later on viewed from two different angles, we introduce a node $V$ which will interact with the nodes $A$ and $B$. In the prover-verifier view, Sec.~\ref{sec:prover_verifier}, the node $V$ will act as a verifier. Whereas, in the estimation view, Sec.~\ref{sec:estimation}, the nodes $A$ and $B$ can take up the role of $V$. We choose the testing nodes $A$ and $B$ to be farthest apart in the network. \vl{For those nodes it is the hardest to fulfill the test, since they must account for the longest communication delays.}

\begin{figure}[h]
  \centering
  \includegraphics[width=0.45\linewidth]{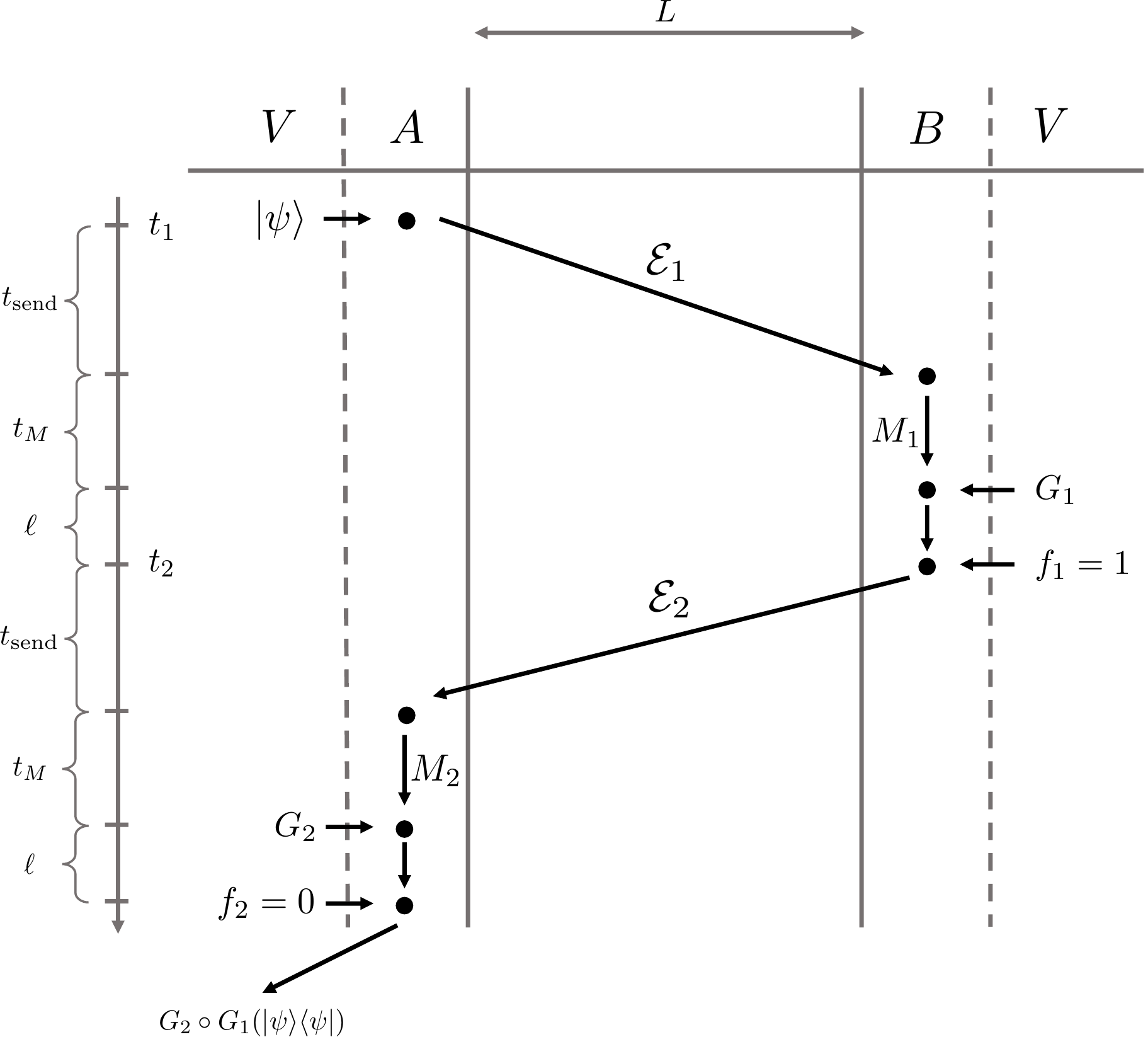}  
\caption{A schematic illustrating a single execution of depth $\kt=2$ of the ping-pong test, Test \ref{PROT:generic}.}\label{FIG:test_figure}
\end{figure}

\subsubsection{General ping-pong test}

In our test, the task of the nodes is to send (``ping-pong'') an unknown state an unknown number of times and at every ping-pong \emph{round} apply a quantum operation given by $V$, see Fig.~\ref{FIG:test_figure}. Additionally, at every round $V$ gives the nodes a challenge \vl{denoted by $f$} -- either to output the quantum state or continue the ping-pong. 
At the end of each \emph{execution} of the test, $i = 1,\dots,n$, the nodes output a state. $V$ measures this output and produces a single classical bit $v^i$: 1 means ``accept'' and 0 means ``reject'', see Test \ref{PROT:generic}. \vl{As stated before}, we assume that the nodes' operations are independent and identical across executions $i$ of the test. This implies that $v^i$ are independent and identically distributed (IID) random variables. We define a winning rate in such a game as the ratio of wins to the total number of executions:
\begin{align}\label{EQ:rate}
R = \frac{1}{n} \sum_{i = 1}^n v^i.
\end{align}

\begin{algorithm}[H]
\linespread{1.15}
\begin{minipage}{\linewidth}
\hspace{5pt}
\normalsize
\algrenewcommand\algorithmicloop{\textbf{repeat }}
  \caption{General ping-pong test (k,$\mathsf{G}$,$\mathcal{X}$)}\label{PROT:generic}
  \label{EPSA}
Fix maximum depth $k$, gate set $\mathsf{G}$ and set of states $\mathcal{X}$. Fix a total number of executions $n$.

\begin{algorithmic}[1]
\For{$i = 1,\dots,n$}
	\State $V$ chooses depth $\kt$ uniformly at random and constructs a challenge string $\vec{f}_\kt = 1 \dots 110 $ of length $\kt$
	\State $V$ samples independently $\kt$ gates from the set $\mathsf{G}$ and creates a sequence $\vec{g}_\kt = G_1 \dots G_\kt$
   \State $V$ samples a state $\ket{\psi} \in \mathcal{X}$ and distributes it to $A$
   \Comment{$t_1 = 0$}
   \For{$j = 1,..., \kt$} 
   \If{$j$ odd}
   		\State$A$ sends $\ket{\psi}$ to $B$ using $\mathcal{E}_j$
   		\Comment{ $t_j$}
   		\State $B$ stores the received state in memory $M_j$
   		\Comment{$t_j + t_\tn{send}$}
   		\State $V$ gives a classical description of $G_j$ to $B$
   		\Comment{$t_j + t_\tn{send} + t_M$}
   		\State $B$ applies $G_j$ to the state in the memory
   		\State $V$ distributes a challenge bit $f_j \in \{0,1 \}$ to $B$ according to the string $\vec{f}_\kt$
   		\Comment{$t_j + t_\tn{send} + t_M + \ell$}
   		\If{$f_j = 1$}
   			\State $j = j+1$
   			\State \textbf{continue}
   		\Else
   			\State $B$ outputs his state
   			\State $V$ measures $\{\Pi^\cmark_{\kt} =  \bigcirc_{j = 1}^{\kt} G_j (\ketbra{\psi}{\psi}) , \Pi^\xmark_{\kt} = \mathds{1} - \bigcirc_{j = 1}^{\kt} G_j(\ketbra{\psi}{\psi})\}$ 
   			\State $V$ decides on the value $v^i$ ('0' accept, '1' reject)
   			\State \textbf{break}
   		\EndIf
   	\ElsIf{$j$ even}	
   	   	\State$B$ sends $\ket{\psi}$ to $A$ using $\mathcal{E}_j$
   		\Comment{ $t_j$}
   		\State $A$ stores the received state in memory $M_j$
   		\Comment{$t_j + t_\tn{send}$}
   		\State $V$ gives a classical description of $G_j$ to $A$
   		\Comment{$t_j + t_\tn{send} + t_M$}
   		\State $A$ applies $G_j$ to the state in the memory
   		\State $V$ distributes a challenge bit $f_j \in \{0,1 \}$ to $A$ according to the string $\vec{f}_\kt$
   		\Comment{$t_j + t_\tn{send} + t_M + \ell$}
   		\If{$f_j = 1$}
   			\State $j = j+1$
   			\State \textbf{continue}
   		\Else
   			\State $A$ outputs her state
   			\State $V$ measures $\{\Pi^\cmark_{\kt} =  \bigcirc_{j = 1}^{\kt} G_j (\ketbra{\psi}{\psi}) , \Pi^\xmark_{\kt} = \mathds{1} - \bigcirc_{j = 1}^{\kt} G_j(\ketbra{\psi}{\psi})\}$ 
   			\State $V$ decides on the value $v^i$ ('0' accept, '1' reject)
   			\State \textbf{break}
   		\EndIf
   	\EndIf
   \EndFor
   \EndFor
   \end{algorithmic}
   \end{minipage}
\end{algorithm}

\newpage
The ping-pong test of depth $\kt$ for a sequence of chosen gates $\vec{g}_\kt = G_1,\dots,G_\kt$ can be associated with the following operator 
\begin{align}\label{EQ:teleport-based_test}
\mathcal{S}_\kt = \bigcirc_{j = 1}^{\kt} ~ G_{j} \circ M_{j} \circ \mathcal{E}_j.
\end{align} 
In a single execution of Test \ref{PROT:generic}, the test can succeed with a certain probability. For all executions $i$, we define such probability, conditioned on a specific input state $\ket{\psi}$, a fixed depth $\kt$ and a fixed sequence of gates $\vec{g}_\kt$ as 
\begin{align} \label{EQ:p_win}
p_{\cmark|\psi,\vec{g}_\kt, \kt} = \Tr [\mathcal{S}_\kt(\ketbra{\psi}{\psi})\cdot  \Pi^\cmark_{\kt}]
\end{align}
and similarly the probability of failure,
$
p_{\xmark|\psi,\vec{g}_\kt, \kt} = \Tr [\mathcal{S}_\kt(\ketbra{\psi}{\psi})\cdot  \Pi^\xmark_{\kt}]
$. Note that $p_{\cmark|\psi,\vec{g}_\kt, \kt}$ does not depend on the execution $i$, since we assume that executions are IID. Here $\{\Pi^\cmark_{\kt},\Pi^\xmark_{\kt}\}$ denotes the measurement performed by $V$ at the end of each execution $i$. We fix the figure of merit to be the average probability $\pavg$ that the nodes succeed ($v_i=1$) in the test. 
\begin{definition}[average probability of success for Test \ref{PROT:generic}]\label{DEF:prob_succ_test1}
The probability of success in the \vl{general} ping-pong test, Test \ref{PROT:teleport}, averaged over depths $\kt$, strings of gates $\vec{g}_\kt$ of length $\kt$, and states $\ket{\psi} \in \mathcal{X}$ 
is defined as
\begin{align}
\begin{split}
\pavg & = \frac{1}{n} \sum_{i} \frac{1}{k}  \sum_\kt \frac{1}{|\mathcal{X}|}\sum_{\psi} \frac{1}{|\mathsf{G}|^\kt} \sum_{\vec{c}_\kt} 
p_{\cmark|\psi,\vec{g}_\kt, \kt} \\
& = \frac{1}{k}  \sum_\kt \frac{1}{|\mathcal{X}|}\sum_{\psi} \frac{1}{|\mathsf{G}|^\kt} \sum_{\vec{c}_\kt} 
p_{\cmark|\psi,\vec{g}_\kt, \kt},
\end{split}
\end{align}
where the last equality holds due to the IID assumption. 
Here $k$ is the maximum depth of the test, $\mathcal{X}$ is the chosen set of states and $\mathsf{G}$ is the chosen set of gates. 
\end{definition}

\subsubsection{Teleportation-based ping-pong test}

In the case when $\mathcal{X}$ is the set of all single-qubit states, the average probability of success gives us an estimate on the average fidelity of the test, see Sec.~\ref{sec:estimation}. This would require sampling from $\mathcal{X}$ according to the Haar measure in the test. However, the same can be achieved more efficiently, by using sampling from the finite set of the six Pauli states $\mathsf{X}$. The reason for this is that $\mathsf{X}$ has a property of a 2-design, meaning that discrete uniform averaging over states \vl{(polynomials of degree 2)} in $\mathsf{X}$, reproduces the Haar average over the full state space. A similar observation holds for Haar sampling from a set of gates $\mathsf{G}$ in the case when $\mathsf{G}$ is a full unitary group. Then, it is enough to consider sampling from the Clifford group of single-qubit gates $\mathsf{Cliff}$ to reproduce the average probability of success. Note that this allows us to estimate the average fidelity of the test, even in the case when one is not able to implement the full unitary group. \vl{Lastly, we remark that any set of states and unitary gates with 2-design properties can be used in place of the Pauli states and Clifford gates.} For more details on 2-design properties of the above sets see App. \ref{APP:SEC:2designs}.

Therefore, we consider a more efficient version of the ping-pong test, Test \ref{PROT:teleport}. Motivated by the above and the fact that for a quantum network quantum channels between the nodes are realized by quantum teleportation, we choose:
\begin{enumerate}[noitemsep,topsep=0pt]
\item the set of states is the set of six Pauli eigenstates, $\ket{\psi} \in \mathsf{X}$ with a uniform probability distribution $\tfrac{1}{|\mathsf{X}|} = \tfrac{1}{6}$;
\item the set of gates is the Clifford set for a single qubit, $C_j \in \mathsf{Cliff}$ with a uniform probability distribution $\tfrac{1}{|\mathsf{Cliff}|}$;
\item sending a qubit from node $A$ to $B$ is done with perfect deterministic teleportation.
\end{enumerate}
We describe the teleportation-based ping-pong test with a triple $(k,\mathsf{Cliff},\mathsf{X})$.
Note that in this case the quantum channel at round $j$, $\mathcal{E}_j$, is equivalent to applying a quantum memory $M_j^\mathcal{T}$ to a half of the EPR pair by one of the provers. We can put $\tau = t_M + t_\tn{send}$, which is the time required to generate one maximally entangled state and send over a classical message from node $A$ to $B$. Hence, a teleportation-based ping-pong test of depth $\kt$ for a sequence of chosen Clifford gates $\vec{c}_\kt = C_1,\dots,C_\kt$ can be associated with the following operator 
\begin{align}\label{EQ:teleport-based_test}
\mathcal{T}_\kt = \bigcirc_{j = 1}^{\kt} ~ C_{j} \circ M_{j}^\mathcal{T}.
\end{align} 
For detailed mathematical description of the test, we refer the reader to App. \ref{APP:test_description}.

\begin{algorithm}[H]
\linespread{1.155}
\begin{minipage}{\linewidth}
\hspace{5pt}
\normalsize

\algrenewcommand\algorithmicloop{\textbf{repeat }}
  \caption{Teleportation-based ping-pong test $(k,\mathsf{Cliff},\mathsf{X})$}\label{PROT:teleport}
  \label{EPSA}

Fix maximum depth $k$, fix the gate set to Clifford set $\mathsf{Cliff}$ and the set of states to the set of six Pauli states $\mathsf{X}$. Fix the total number of executions $n$.

\begin{algorithmic}[1]
\For{$i = 1,\dots,n$}
	\State $V$ chooses depth $\kt$ and constructs a challenge string $\vec{f}_\kt = 1 \dots 110 $ of length $\kt$
	\State $V$ samples independently and uniformly at random $\kt$ gates from the set $\mathsf{Cliff}$ and creates a sequence $\vec{c}_\kt = C_1 \dots C_\kt$
   \State $V$ samples independently and uniformly at random a state $\ket{\psi} \in \mathsf{X}$ and distributes it to $A$
   \Comment{$t_1 = 0$}
   \For{$j = 1,..., \kt$} 
   \If{$j$ odd}
   		\State $A$ sends $\ket{\psi}$ to $B$ using deterministic teleportation
   		\Comment{ $t_j$}
   		\State $B$ stores half of his teleportation EPR pair in memory $M^\mathcal{T}_j$ for time $\tau$
   		\State $V$ gives a classical description of $C_j$ to $B$
   		\Comment{$t_j + \tau$}
   		\State $B$ applies $C_j$ to the state in the memory
   		\State $V$ distributes a challenge bit $f_j \in \{0,1 \}$ to $B$ according to the string $\vec{f}_\kt$
   		\Comment{$t_j + \tau + \ell$}
   		\If{$f_j = 1$}
   			\State Set $B=A$ and $A=B$
   			\State \textbf{continue}
   		\Else
   			\State $B$ outputs his state
   			\State $V$ measures $\{\Pi^\cmark_{\kt} =  \bigcirc_{j = 1}^{\kt} C_j (\ketbra{\psi}{\psi}) , \Pi^\xmark_{\kt} = \mathds{1} - \bigcirc_{j = 1}^{\kt} C_j(\ketbra{\psi}{\psi})\}$
   			   			\State $V$ decides on the value $v^i$ ('0' reject, '1' accept)
   			\State \textbf{break}
   		\EndIf
   	\ElsIf{$j$ even}
   		\State $B$ sends $\ket{\psi}$ to $A$ using deterministic teleportation
   		\Comment{ $t_j$}
   		\State $A$ stores half of her teleportation EPR pair in memory $M^\mathcal{T}_j$ for time $\tau$
   		\State $V$ gives a classical description of $C_j$ to $A$
   		\Comment{$t_j + \tau$}
   		\State $A$ applies $C_j$ to the state in the memory
   		\State $V$ distributes a challenge bit $f_j \in \{0,1 \}$ to $A$ according to the string $\vec{f}_\kt$
   		\Comment{$t_j + \tau + \ell$}
   		\If{$f_j = 1$}
   			\State \textbf{continue}
   		\Else
   			\State $A$ outputs her state
   			\State $V$ measures $\{\Pi^\cmark_{\kt} =  \bigcirc_{j = 1}^{\kt} C_j (\ketbra{\psi}{\psi}) , \Pi^\xmark_{\kt} = \mathds{1} - \bigcirc_{j = 1}^{\kt} C_j(\ketbra{\psi}{\psi})\}$
   			   			\State $V$ decides on the value $v^i$ ('0' reject, '1' accept)
   			\State \textbf{break}
   		\EndIf
   	\EndIf
   \EndFor
   \EndFor
   \end{algorithmic}
   \end{minipage}
\end{algorithm}

By using Def.~\ref{DEF:prob_succ_test1} with the set of Pauli states $\mathsf{X}$ and the set of Clifford gates $\mathsf{Cliff}$, the average probability of success for the teleportation-based ping-pong, Test \ref{PROT:teleport}, is
\begin{align}\label{EQ:avg_prob}
\begin{split}
\pavg & = \frac{1}{k}  \sum_\kt \frac{1}{|\mathsf{X}|}\sum_{\psi} \frac{1}{|\mathsf{Cliff}|^\kt} \sum_{\vec{c}_\kt} 
p_{\cmark|\psi,\vec{c}_\kt, \kt}.
\end{split}
\end{align}

Note that in Test \ref{PROT:teleport} the sampling of depths, gates and states is done uniformly at random. Using the definition of the expected value and the IID assumption ($\forall i,j ~\mathds{E}[v_i] = \mathds{E}[v_j]$), we can write that the winning rate has the expected value $\mathds{E}[R] = \frac{1}{k}  \sum_\kt \frac{1}{|\mathsf{X}|}\sum_{\psi} \frac{1}{|\mathsf{Cliff}|^\kt} \sum_{\vec{c}_\kt} p_{\cmark|\psi,\vec{c}_\kt, \kt}\cdot~1  + p_{\xmark|\psi,\vec{c}_\kt, \kt}\cdot 0$. 
\begin{lemma}
The expected value of the winning rate $R$ in Test \ref{PROT:teleport}, Eq. \eqref{EQ:rate}, is equal to the average probability of success $\pavg$,
\begin{align}
\mathds{E}[R] = \pavg.
\end{align}
\end{lemma}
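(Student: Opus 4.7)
The plan is to unwind the definition of the expected value and invoke the IID assumption directly. The claim is really a routine consequence of linearity of expectation plus the tower property, so the proof should be short; the only care needed is to keep track of which random variables are being averaged over.

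First I would apply linearity of expectation to $R = \tfrac{1}{n}\sum_i v^i$, giving $\mathds{E}[R] = \tfrac{1}{n}\sum_i \mathds{E}[v^i]$. The IID assumption on the executions $i = 1,\dots,n$ then yields $\mathds{E}[v^i] = \mathds{E}[v^1]$ for every $i$, so that $\mathds{E}[R] = \mathds{E}[v^1]$. This collapses the sum over executions and removes $n$ from the statement, matching the desired form of $\pavg$.

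Next I would compute $\mathds{E}[v^1]$ via the law of total expectation, conditioning on the random variables that the verifier samples in a single execution: the depth $\kt$ (uniform on $\{1,\dots,k\}$), the gate sequence $\vec{c}_\kt$ (uniform on $\mathsf{Cliff}^\kt$), and the state $\ket{\psi}$ (uniform on $\mathsf{X}$). Since $v^1$ is Bernoulli with conditional success probability $p_{\cmark|\psi,\vec{c}_\kt,\kt}$ by definition \eqref{EQ:p_win}, we have $\mathds{E}[v^1 \mid \kt, \vec{c}_\kt, \psi] = p_{\cmark|\psi,\vec{c}_\kt,\kt}$. Taking the expectation over the three independent uniform samplings produces exactly
\begin{equation*}
\mathds{E}[v^1] = \frac{1}{k}\sum_{\kt} \frac{1}{|\mathsf{X}|}\sum_{\psi} \frac{1}{|\mathsf{Cliff}|^\kt}\sum_{\vec{c}_\kt} p_{\cmark|\psi,\vec{c}_\kt,\kt},
\end{equation*}
which is $\pavg$ as given in Eq.~\eqref{EQ:avg_prob}.

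There is essentially no obstacle here: the proof is a direct bookkeeping argument. The only subtlety worth stating carefully is that the IID assumption is what removes the $i$-dependence from $\mathds{E}[v^i]$ (so that averaging over $i$ becomes trivial), while the uniform sampling inside a single execution is what reproduces the uniform averages appearing in the definition of $\pavg$. Both facts were already spelled out in the text preceding the lemma, so the proof can be written in a few lines.
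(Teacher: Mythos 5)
Your argument is correct and follows essentially the same route as the paper: the text preceding the lemma derives $\mathds{E}[R]$ exactly by combining the IID assumption ($\mathds{E}[v^i]=\mathds{E}[v^j]$ for all $i,j$) with the fact that depths, gates and states are sampled uniformly, so that the expectation of a single $v^i$ reproduces the uniform averages in $\pavg$. Your use of the law of total expectation simply makes this bookkeeping explicit; there is no substantive difference.
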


\begin{corollary} [finite statistics]
The probability that the winning rate $R$ differs from the average probability of success $\pavg$ by more than $\epsilon$ is exponentially small in $\epsilon$,
\begin{align}\label{EQ:hoeffding}
\tn{Pr} \left[ | R-\pavg | \leq \epsilon \right] \geq 1 - 2e^{-2n\epsilon^2}.
\end{align}
Furthermore, let us set $\delta = 2e^{-2n\epsilon^2}$. If one fixes confidence $\delta$ and accuracy $\epsilon$, then the minimum number of rounds $n$ necessary to attain these parameters is given by
$
n \geq \frac{\ln(2\delta^{-1})}{2\epsilon^2}.
$
\end{corollary}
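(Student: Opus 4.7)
The plan is to recognize this as a direct application of Hoeffding's inequality for bounded IID random variables, combined with the preceding lemma that identifies $\mathds{E}[R]$ with $\pavg$.

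First I would set up the random variables. By the IID assumption stated at the start of Section \ref{sec:thetest}, the verifier's accept/reject outcomes $v^1, v^2, \dots, v^n$ are independent and identically distributed. Each $v^i$ takes values in $\{0,1\}$, so in particular $v^i \in [0,1]$ almost surely. The winning rate $R = \frac{1}{n}\sum_{i=1}^n v^i$ is then a sample mean of $n$ bounded IID random variables, and by the preceding lemma its expectation equals $\pavg$.

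Next I would invoke Hoeffding's inequality in the standard two-sided form: for IID random variables $X_1, \dots, X_n$ taking values in an interval $[a,b]$,
\begin{equation*}
\Pr\!\left[\left|\tfrac{1}{n}\textstyle\sum_{i=1}^n X_i - \mathds{E}[X_1]\right| \geq \epsilon\right] \;\leq\; 2\exp\!\left(-\tfrac{2n\epsilon^2}{(b-a)^2}\right).
\end{equation*}
Substituting $X_i = v^i$ with $a=0$, $b=1$, so that $(b-a)^2 = 1$, and using $\mathds{E}[R] = \pavg$ from the previous lemma, yields
\begin{equation*}
\Pr\!\left[\,|R - \pavg| \geq \epsilon\,\right] \;\leq\; 2 e^{-2n\epsilon^2},
\end{equation*}
which upon complementing gives exactly the stated inequality \eqref{EQ:hoeffding}.

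For the second part of the corollary I would simply solve $\delta = 2e^{-2n\epsilon^2}$ for $n$: taking logarithms, $\ln(2/\delta) = 2n\epsilon^2$, so $n = \ln(2\delta^{-1})/(2\epsilon^2)$, and any larger $n$ also satisfies the confidence requirement. There is no real obstacle here; the only thing worth double-checking is that the IID assumption from Section \ref{sec:thetest} is strong enough to license Hoeffding (it is, since Hoeffding only requires independence and a common bounded range, both of which follow immediately). Everything else is a routine rearrangement.
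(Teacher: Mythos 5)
Your proposal is correct and follows essentially the same route as the paper: the paper's argument is precisely an application of Hoeffding's inequality (stated as a lemma in its preliminaries) to the IID $\{0,1\}$-valued variables $v^i$, using $\mathds{E}[R]=\pavg$ from the preceding lemma, with the sample-size bound obtained by solving $\delta = 2e^{-2n\epsilon^2}$ for $n$. Nothing is missing.
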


\section{Prover-verifier view} \label{sec:prover_verifier}

In this section we interpret our test, Test \ref{PROT:teleport} in the prover-verifier view. 
Specifically, we view our test as an interactive game played between a verifier $V$ (trusted third party), and two provers (the nodes $A$ and $B$)~\cite{Vidick2016}.
An interactive game is a situation where provers exchange a fixed-sized quantum register with the verifier $n$ times. The verifier is honest and wants to verify a certain statement, operating according to a defined set of rules.
However, potentially dishonest provers optimize towards a strategy that causes a verifier to output 1 (accept). 
We further assume assume a standard scenario, where the provers agree on their strategy prior to the beginning of the test and they do not communicate to readjust it during the execution, see Def.~\ref{DEF:m-cheating}. 
In contrast to the interactive proof literature, in our framework we consider finitely many test executions and therefore, we can also make non-asymptotic statistical statements.

In this view, performing Test \ref{PROT:teleport} allows us to certify that the provers have capabilities to perform $k$-round protocols. Indeed, if the provers follow the test then they can convince the verifier that they do so and achieve a high average probability of success. On the other hand, if the provers do not follow the test they cannot achieve a high probability of success and the verifier detects this behavior with high probability.
Formally, we require that the test satisfies:
\begin{itemize}[noitemsep,topsep=0pt]
\item \emph{completeness} -- if the provers are able to execute protocols that are certified by the test then they succeed in a game against the verifier, i.e. achieve a winning rate above a certain winning threshold $t$, $R>t$, see Eq.~\eqref{EQ:rate};
\item \emph{soundness} -- if the provers are not able to execute protocols certified by the test, then they can only achieve a winning rate $R\leq t$.
\end{itemize}

\subsection{Sending channel}
Let us now introduce a framework that formalizes what we mean by a round of a quantum communication. 
Whereas numerous schemes to describe local operations exist
\cite{Chuang1997,Poyatos1997,Merkel2013,BlumeKohout2013,Emerson2005,Knill2008,Magesan2011} it is not clear how to certify a round of quantum communication. To achieve this, we will assume that the provers are not honest, and might therefore employ an arbitrary strategy leading to a high probability of success. In particular, they might even try to not use a communication channel at all in some rounds of the protocol.  As a consequence, we have to specify what we mean by a round of communication.

For sending classical bits one typically considers the following scenario: $A$ chooses a random bit $b_{A_0} \in_R \{0,1 \}$ at time $t_0$ and wishes to send it to $B$. We then say that the nodes used a classical channel $\mathcal{E}_{cl}: A \rightarrow
B$ if the probability at time $t_1$ that $B$'s bit is the same as $A$'s, is equal to 1, $\tn{Pr}[b_{B_1 }= b_{A_0} ]=1$. 
In analogy, we could say that quantum communication through a quantum channel $\mathcal{E}: A_0 \rightarrow B_1$ occurred if at time $t_0$ a quantum state $\ket{\psi}_{A_0}$ was input on node $A$ and at time $t_1$ it appeared on node $B$ with probability $1$, $\tn{Pr}[\rho_{B_1} = \ketbra{\psi}_{A_0} ]=1$. 

Note that in the classical case, we can prove that the channel was used to send information about the bit only for one round, by giving a uniformly random bit to $A$ and ask $B$ to guess it. Indeed if $B$ guesses it with probability higher than $1/2$ then some information must have traveled from $A$ to $B$. Given a single bit as an input, one cannot generalize that to many rounds with a ``ping-pong'' type of protocol like Test \ref{PROT:teleport}. This is due to the fact that before $A$ sends information to $B$ in the first round, she can keep a copy of the bit. However, this issue can be avoided in a quantum setting due to the no-cloning theorem~\cite{Wootters1982}. Indeed, if $A$ gets a random unknown state and $B$ is able to output the exact same state (with probability $1$), then not only did all the (quantum) information about the state traveled from $A$ to $B$, but also $A$ could not have kept any information about the state to herself (see Thm.~\ref{THM:corr_uniq}). 

While the above definition provides a good intuition of what is going on, it becomes impractical when states do not have a unit probability of being transmitted through a channel (which in relation to our test means $t<1$). In such a scenario, classically, we can say that the nodes used a classical channel $\mathcal{E}_{cl}: A_0 \rightarrow B_1$ if the probability of correctly identifying $A$'s input bit on $B$'s side increased in time, $\tn{Pr}[\tn{out}_{B_1} = b_{A_0}] > \tn{Pr}[\tn{out}_{B_0} = b_{A_0}]$. This implies that some information about the bit must have been transferred from $A$ to $B$, see Fig. \ref{FIG:sendingchannel}. 
Our definition of quantum communication is, therefore, a generalization of the above to the quantum case. We say that quantum communication $\mathcal{E}: A_0 \rightarrow B_1$ occurred if the probability of correctly outputting $A$'s input quantum state on $B$'s side increased in time, see Def.~\ref{DEF:sending_channel_formal}. 

\begin{figure}[h]
\begin{minipage}{0.8\linewidth}
\centering 
\includegraphics[width=\linewidth]{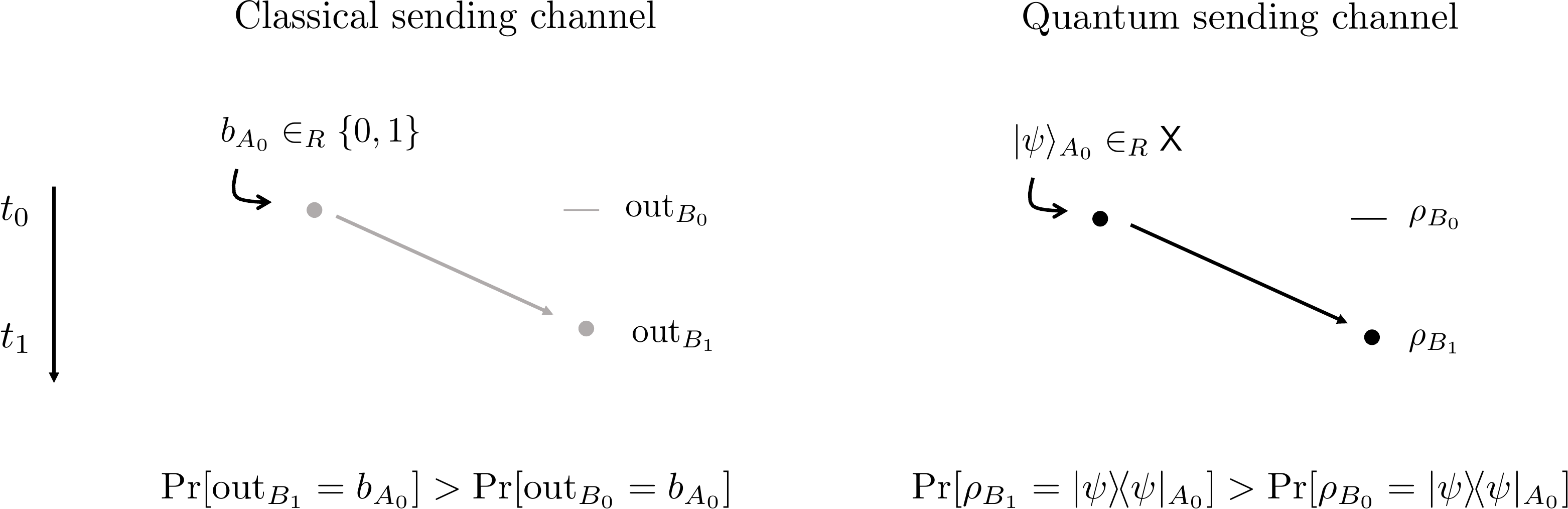}
\caption{Informal representation of a sending channel in a classical and quantum case.}\label{FIG:sendingchannel}
\end{minipage}
\end{figure}

In words, we say that a sending channel was used by the nodes if the fidelity averaged over all states, and optimized over all operations $\Gamma$ that the nodes can locally do, increased from instant $t_0$ to $t_1$.
Note that the above definition implies that any communication, quantum or classical, which increases fidelity of the state is considered a sending channel. As an example consider the following strategy. Node $A$ receives an unknown state from the verifier, measures it in the standard basis and sends the measurement outcome to $B$. Without loss of generality, let this measurement outcome be 0. Before receiving $A$'s measurement outcome, $B$ has average probability $\tfrac{1}{2}$ of correctly passing verifier's test. However, after receiving $A$'s measurement outcome, $B$ can locally prepare $\ket{0}$ state which increases the average probability of correctly identifying verifier's state to $\tfrac{2}{3}$. Therefore, there exists a purely classical strategy which satisfies our definition. As a consequence, we say that whenever the nodes do not use a sending channel $\mathcal{E}$, no communication (quantum or classical) occurred between them.

\begin{definition}[sending channel]\label{DEF:sending_channel_formal}
A channel $\mathcal{E}_{A_0 \rightarrow B_1}$ is a sending channel if there exists a CPTP map $\Omega_{A_0 \rightarrow A_0B_0}$ such that $\forall \ket{\psi}_{A_0}$  it creates a state $\rho_{A_0B_0}^{\psi} = \Omega(\ketbra{\psi}_{A_0})$ and
\begin{align}
\sup_{\Gamma_{B_0B_1}} \int \tn{d}\psi~ \Tr \left[ \Gamma_{B_0B_1}\left(\rho_{B_0B_1}^{\psi}\right) \cdot \ketbra{\psi}_{A_0} \right] > \sup_{\Gamma_{B_0}} \int \tn{d}\psi~ \Tr \left[ \Gamma_{B_0}\left(\rho_{A_0B_0}^{\psi}\right) \cdot \ketbra{\psi}_{A_0} \right],
\end{align}
where $\rho_{B_0B_1}^{\psi} = \mathcal{E}_{A_0 \rightarrow B_1}(\rho_{A_0B_0}^{\psi})$, $\Gamma$ is a CPTP map which traces out additional registers of $A$ and $B$ and outputs a qubit state. 
In particular, if 
\begin{align}
\sup_{\Gamma_{B_0B_1}} \int \tn{d}\psi~ \Tr \left[ \Gamma_{B_0B_1}\left(\rho_{B_0B_1}^{\psi}\right) \cdot \ketbra{\psi}_{A_0} \right] = 1 \quad \tn{and}\quad  \sup_{\Gamma_{B_0}} \int \tn{d}\psi~ \Tr \left[ \Gamma_{B_0}\left(\rho_{A_0B_0}^{\psi}\right)
 \cdot \ketbra{\psi}_{A_0} \right] = \frac{1}{2} 
\end{align}then we talk about an \emph{exact sending} channel.
\end{definition}

\begin{definition}[$m$-cheating]\label{DEF:m-cheating}
Provers $A$ and $B$ are $m$-cheating if their cheating strategy uses a sending channel $\mathcal{E}$ between them at most $m$ times. We assume that the provers choose a strategy -- in which round they use a sending channel and in which they do not --  prior to the beginning of the test.
\end{definition}

\subsection{Exact completeness and soundness}

To investigate the power of Test \ref{PROT:teleport} in verifying capabilities of the network, we first consider an instructive case when $\pavg = 1$. If the nodes are able to perfectly execute the test then they succeed with a unit probability, trivially satisfying the completeness, see Thm. \ref{THM:exact_corr}. On the other hand, if we demand that the nodes always succeed in the game, we can ask the question whether the nodes have the ability to perfectly execute protocols that have the form of Test \ref{PROT:teleport}, i.e., whether the test is sound. We answer this question positively in Thm. \ref{THM:corr_uniq} below.

\begin{theorem}[exact completeness]\label{THM:exact_corr}
If the provers are honest and they are able to perfectly execute Test \ref{PROT:teleport} then they succeed $\pavg = 1$.  
\end{theorem}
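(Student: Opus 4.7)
The plan is to unwind the definition of the test under the assumption that every operation is performed ideally, and check that the state presented to the verifier at the end of every execution is exactly the state against which the verifier projects.

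First I would fix an arbitrary execution $i$ with depth $\kappa$, gate sequence $\vec{c}_\kappa = C_1,\dots,C_\kappa$, and input state $\ket{\psi}\in\mathsf{X}$. Because the provers are honest and perfect, each sending step is perfect deterministic teleportation, so the channel $\mathcal{E}_j$ acts as the identity on the logical qubit; each storage $M_j^\mathcal{T}$ acts as the identity (no decoherence); and each gate $C_j$ is applied exactly. Plugging these into the test operator of Eq.~\eqref{EQ:teleport-based_test} gives
\begin{align}
\mathcal{T}_\kappa(\ketbra{\psi}{\psi})
= \Bigl(\bigcirc_{j=1}^{\kappa} C_j \circ M_j^\mathcal{T}\Bigr)(\ketbra{\psi}{\psi})
= \bigcirc_{j=1}^{\kappa} C_j(\ketbra{\psi}{\psi}),
\end{align}
which is precisely the projector $\Pi^{\cmark}_{\kappa}$ defined in Test~\ref{PROT:teleport}.

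Next I would substitute into the single-execution success probability of Eq.~\eqref{EQ:p_win} to obtain
\begin{align}
p_{\cmark\mid \psi,\vec{c}_\kappa,\kappa}
= \Tr\bigl[\mathcal{T}_\kappa(\ketbra{\psi}{\psi})\cdot \Pi^{\cmark}_{\kappa}\bigr]
= \Tr\bigl[(\Pi^{\cmark}_{\kappa})^2\bigr] = 1,
\end{align}
using that $\Pi^{\cmark}_{\kappa}$ is a rank-one projector. Since this equality holds for every choice of $\kappa$, $\vec{c}_\kappa$, and $\ket{\psi}$, averaging as in Def.~\ref{DEF:prob_succ_test1} (Eq.~\eqref{EQ:avg_prob}) yields $\pavg = 1$.

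I do not anticipate any real obstacle: the statement is the completeness side of an ``if honest and noiseless then perfect score'' claim, so the entire argument reduces to unfolding definitions. The only minor subtlety worth flagging is that the verifier's branch choice $\vec{f}_\kappa$ simply determines at which round the output is requested, and does not affect the above reasoning because at any such round the honest state is $\bigcirc_{j=1}^{\kappa} C_j(\ketbra{\psi}{\psi})$ by the same argument, matching the measurement operator the verifier uses.
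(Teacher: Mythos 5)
Your proposal is correct and follows essentially the same route as the paper: both arguments unfold the definition of $\mathcal{T}_\kt$ under perfect operations to show the output state equals $\bigcirc_{j=1}^{\kt} C_j(\ketbra{\psi}{\psi}) = \Pi^{\cmark}_{\kt}$ (the paper phrases this as a short induction over rounds, you compose the maps directly), and then conclude $p_{\cmark|\psi,\vec{c}_\kt,\kt}=1$ for every $\psi$, $\vec{c}_\kt$, $\kt$, so the average is $1$.
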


\begin{theorem}[exact soundness]\label{THM:corr_uniq}
{If the provers win the test with $\pavg = 1$ then they must be able to perfectly execute Test \ref{PROT:teleport} and they use an exact sending channel $\mathcal{E}$ between them $k$ times.}
\end{theorem}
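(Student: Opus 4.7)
The plan is to split the argument into two halves: first deduce that the provers implement the test exactly, then verify the exact-sending-channel condition round by round.

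First, since $\pavg$ is a uniform average of non-negative terms $p_{\cmark|\psi,\vec{c}_\kt,\kt}\le 1$, the hypothesis $\pavg=1$ forces $p_{\cmark|\psi,\vec{c}_\kt,\kt} = 1$ for every $\kt \in \{1,\ldots,k\}$, every $\vec{c}_\kt \in \mathsf{Cliff}^{\kt}$, and every Pauli eigenstate $\ket{\psi}\in\mathsf{X}$. Because $\Pi^\cmark_\kt$ is a rank-one projector onto the pure target $\bigcirc_{j=1}^\kt C_j(\ketbra{\psi}{\psi})$, the provers' depth-$\kt$ map $\mathcal{T}_\kt$ must produce exactly that pure state at the current node on each such input. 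Since the six Pauli density matrices span the Hermitian operator space of a qubit, linearity of $\mathcal{T}_\kt$ promotes this pointwise equality on $\mathsf{X}$ to the operator identity $\mathcal{T}_\kt = \bigcirc_{j=1}^\kt C_j$, valid for every $\kt$. This already establishes the ``perfect execution'' half of the claim.

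Next, fix a round $j$ and identify, for Def.~\ref{DEF:sending_channel_formal}, $\Omega_j$ with the composition of all prover operations strictly before the $j$-th sending, and $\mathcal{E}_j:A_0 \to B_1$ with the channel actually employed by the provers during that sending, where $A_0$ and $B_0$ are the sender's and receiver's registers just before $\mathcal{E}_j$. For the post-channel fidelity, apply Step~1 at depth $j$: after $\mathcal{E}_j$, the local memory, and the application of $C_j$ on the receiver's side, the receiver holds $\bigcirc_{j'=1}^j C_{j'}(\ketbra{\psi}{\psi})$ exactly for every $\ket{\psi}$. All these post-$\mathcal{E}_j$ receiver operations are local CPTP maps on $B_0 B_1$ that may be absorbed into the optimizer $\Gamma_{B_0 B_1}$ together with a final $C_1^\dagger\cdots C_j^\dagger$, yielding $\sup_{\Gamma_{B_0 B_1}}\int\tn{d}\psi\,\Tr[\Gamma(\rho^\psi_{B_0 B_1})\ketbra{\psi}{\psi}]=1$.

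The main work is the pre-channel fidelity, where a monogamy argument is needed. Step~1 applied at depth $j-1$ shows that the sender's register $A_0$ at time $t_{j-1}^+$ contains $\bigcirc_{j'=1}^{j-1} C_{j'}(\ketbra{\psi}{\psi})$ in one of its sub-registers, so the local map $\Lambda = C_1^\dagger\cdots C_{j-1}^\dagger$ (which the sender can apply because it knows the Cliffords distributed so far) perfectly recovers $\ketbra{\psi}{\psi}$ from $\rho^\psi_{A_0}$ for every pure $\ket{\psi}$. I then invoke the following monogamy lemma: if a CPTP map $\Omega_j:A_0 \to A_0 B_0$ admits such a perfect recovery on its $A_0$-marginal, then $\rho^\psi_{B_0}$ is independent of $\ket{\psi}$. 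To prove it, take Stinespring dilations $V_\Omega,V_\Lambda$ and observe that $(V_\Lambda\otimes \mathds{1}_{B_0 E_\Omega})V_\Omega\ket{\psi}$ has pure marginal $\ketbra{\psi}{\psi}$ on the recovery output, hence factorizes as $\ket{\psi}\otimes \ket{\phi^\psi}$; linearity of the overall isometry in $\ket{\psi}$ (evaluated on $\ket{0},\ket{1}$ and their superpositions) forces $\ket{\phi^\psi}$ to be $\psi$-independent. Therefore $\rho^\psi_{B_0}=\rho_{B_0}$, and for any $\Gamma_{B_0}$, $\int\tn{d}\psi\,\Tr[\Gamma(\rho_{B_0})\ketbra{\psi}{\psi}] = \Tr[\Gamma(\rho_{B_0})\cdot\tfrac{1}{2}\mathds{1}]=\tfrac{1}{2}$. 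Combined with Step~2 this verifies Def.~\ref{DEF:sending_channel_formal} for $\mathcal{E}_j$ in each round $j=1,\ldots,k$.

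The principal obstacle is precisely this monogamy step: the passage from ``perfect recovery on $A_0$'' to ``$B_0$ is $\psi$-independent'' must not allow residual $\psi$-information to leak into any ancilla or environment register. Exactness of the recovery is what makes the Stinespring-plus-linearity argument go through cleanly; an approximate version would require quantitative monogamy inequalities, which are not needed here but would be essential for the noisy/robust analogue discussed in later sections.
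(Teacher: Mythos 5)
Your proposal is correct and follows essentially the same route as the paper's proof: deduce pointwise success for every depth, Clifford string and input, then for each round show the post-channel fidelity is $1$ while the sender-side purity forces, via a Stinespring/no-cloning factorization (your monogamy lemma is the paper's purity corollary), that the receiver's pre-channel average fidelity is exactly $\tfrac{1}{2}$, so an exact sending channel is used in all $k$ rounds. The only cosmetic difference is that you extend from the six Pauli states to all states by a spanning/linearity argument, whereas the paper invokes the 2-design property to pass to the Haar average.
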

\begin{proof}[Idea of the proof]
To prove the theorem, we argue that $\pavg = 1$ implies that the probability of winning $p_{\cmark|\psi,\vec{c}_\kt, \kt} $ for all states, all Clifford gates and all depths should be 1 (in particular, this implies that the provers are able to apply the required Clifford gates on the input state). Therefore, the average fidelity at every depth $\kt$ should be 1. 
That is, if at step $\kt-1$ $A$ has fidelity 1 it means that the state on A is pure, and by a purifying argument, $B$'s average fidelity at step $\kt-1$ must be $1/2$. At step $\kt$ $B$ has fidelity 1, which means that whatever channel $A$ and $B$ have used between step $\kt-1$ and $\kt$, it must be an exact sending channel (see Def. \ref{DEF:sending_channel_formal}). For more details see App.~\ref{APP:SUBSEC:exact_corr_sound}.
\end{proof} 

Note that in practice we are only able to observe the winning rate $R$ and, due to the finite statistics of our test, we cannot certify $\pavg = 1$.

\subsection{Completeness and soundness}

Therefore, let us explore the implications of Test \ref{PROT:teleport}, given that the winning rate $R>t$ is observed.  If the provers are honest and their devices are sufficiently good, their winning rate should be larger than threshold $t$ with high probability. More specifically, let memories and gates at every round $j$ be described in terms of the average fidelity. Assume that the quality of memory and gates is the same at every round $j$, i.e. for all $j$, the average fidelity $\bar{\mu} = \int \tn{d} \psi \Tr[ C_j \circ M^\mathcal{T}_j(\ketbra{\psi}) \cdot C_j(\ketbra{\psi})].$ Below we show that for honest provers, a certain fidelity of operations implies a bound on the winning rate. \vl{In order to satisfy both completeness and soundness we choose the winning threshold $t > \tfrac{5}{6}$, since the Test \ref{PROT:teleport} does not lead to any conclusion in the case when $t \leq \tfrac{5}{6}$, see Thm.~\ref{THM:eps-sound}.} Let $h_k(\bar{\mu}) = \frac{\bar{\mu}(\bar{\mu}^k-1)}{k(\bar{\mu}-1)}$.

\begin{figure}[t]
\begin{minipage}{0.6\linewidth}
\centering 
\includegraphics[width=\linewidth]{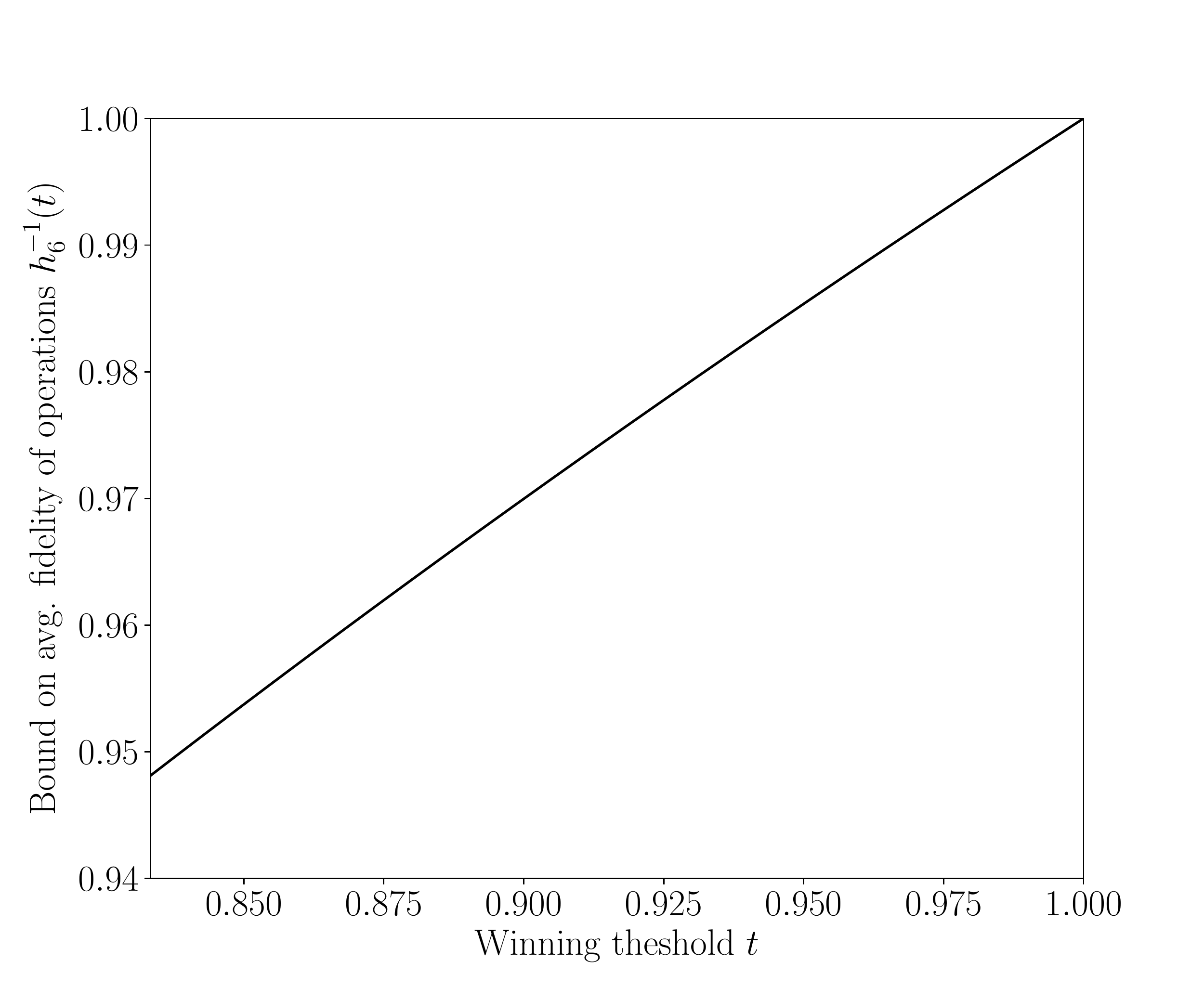}
\caption{\vl{The average fidelity of individual operations $\bar{\mu}$ as a function of the winning threshold $t$ (see Thm. \ref{THM:eps-corr} completeness). The plot shows the inverse of the function $h_k$, i.e. $h_k^{-1}(t)$ for $k=6$ and relevant values of $t$.}}\label{FIG:bound_h}
\end{minipage}
\end{figure}

\begin{theorem}[completeness]\label{THM:eps-corr}
If provers are honest and their individual operations satisfy $\bar{\mu} \geq h_k^{-1}(t) + \epsilon$, then the winning rate $R$ in Test \ref{PROT:teleport} is bounded by $R \geq t$ with probability at least $1 - e^{-n\epsilon^2}$, where \vl{$t \in(\tfrac{5}{6},1]$} is a winning threshold and $\epsilon $ is given by Eq. \eqref{EQ:hoeffding}.
\end{theorem}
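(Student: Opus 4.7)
\textbf{Step 1: Reduce $\pavg$ to a function of $\bar{\mu}$.}
My first step would be to establish the estimate $\pavg \geq h_k(\bar{\mu})$ under the honest IID assumption with per-round average fidelity $\bar{\mu}$. Using the 2-design properties of the single-qubit Clifford group and of the six Pauli states (App.~\ref{APP:SEC:2designs}), I would evaluate the averages in Eq.~\eqref{EQ:avg_prob}. The change of variables $V_j := C_j \cdots C_1$ is a bijection on the group that produces IID uniformly random $V_j$'s and lets one move the Cliffords past the memory channels, so that conjugation by $V_{j-1}$ twirls each round's noise $M^\mathcal{T}_j$ into a depolarizing channel $\Lambda_p$ with parameter $p = 2\bar{\mu}-1$. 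The input-state average over $\mathsf{X}$ is absorbed as an additional Clifford twirl on the composite channel, which closes the only gap (there is no preceding Clifford to twirl the first-round noise). Composing $\kt$ such depolarizing channels produces $\Lambda_{p^\kt}$, whose average fidelity is $(p^\kt + 1)/2$. By Jensen's inequality applied to the convex map $y \mapsto y^\kt$ on $[0,1]$, this is at least $\bar{\mu}^\kt$; averaging over $\kt \in \{1,\ldots,k\}$ then yields $\pavg \geq h_k(\bar{\mu})$.

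\textbf{Step 2: Extract a margin and concentrate.}
Next, I would exploit the hypothesis. The function $h_k$ is continuous, strictly increasing, and convex on $[0,1]$, so $h_k^{-1}$ exists there. Combining $\bar{\mu} \geq h_k^{-1}(t) + \epsilon$ with monotonicity and convexity of $h_k$,
\[
\pavg \geq h_k\bigl(h_k^{-1}(t) + \epsilon\bigr) \geq t + h_k'\bigl(h_k^{-1}(t)\bigr)\,\epsilon.
\]
An elementary check using $h_k'(\bar{\mu}) = \tfrac{1}{k}\sum_{\kt=1}^k \kt\,\bar{\mu}^{\kt-1}$ shows that $h_k'(h_k^{-1}(t)) \geq 1$ whenever $t > 5/6$, for every $k \geq 1$---this is precisely the reason the theorem restricts to $t \in (5/6, 1]$. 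Thus $\pavg \geq t + \epsilon$. Applying the one-sided Hoeffding inequality to the IID Bernoulli indicators $v^i$ with mean $\pavg$,
\[
\Pr[R \geq t] \;\geq\; \Pr[R \geq \pavg - \epsilon] \;\geq\; 1 - e^{-2n\epsilon^2} \;\geq\; 1 - e^{-n\epsilon^2},
\]
as desired.

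\textbf{Main obstacle.}
The bulk of the technical work lies in Step 1, and in particular in handling the first-round noise $M^\mathcal{T}_1$: unlike the later rounds it is not twirled by the Clifford averaging alone, since $V_0 = I$. The resolution is to recognize that the Pauli 2-design average on $\ket{\psi}$ is equivalent to an extra Clifford twirl on the composite error channel, which restores the missing twirl and yields a cleanly depolarizing composite of parameter $p^\kt$. Arranging this together with the $V_j$ change of variables---and verifying that the $V_j$ are indeed IID uniform on $\mathsf{Cliff}$---is the main technical step. Once Step 1 is secured, the Jensen comparison to $\bar{\mu}^\kt$ and the concentration in Step 2 are routine.
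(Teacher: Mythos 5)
Your proposal is correct and follows the same core route as the paper's own proof: use the 2-design properties of the Pauli states and of the Clifford group (the content of Lemma~\ref{APP:LEM:cliff_2design}) to turn the averages in $\pavg$ into a twirl of each round's noise $M^\mathcal{T}_j$, identify each twirl with a depolarizing channel of average fidelity $\bar{\mu}$, reduce to the function $h_k$, and finish with a one-sided Hoeffding bound. Two of your refinements are worth recording, because they tighten points where the paper is loose. First, the paper's derivation asserts $\pavg=\tfrac1k\sum_\kt\prod_{j=1}^{\kt}\bar{F}(\mathcal{D}_j)=h_k(\bar{\mu})$, but the average fidelity of a composition of $\kt$ depolarizing channels is $\tfrac{1+p^\kt}{2}$ with $p=2\bar{\mu}-1$, not the product $\bar{\mu}^\kt$ of the individual fidelities; your route through $\Lambda_{p^\kt}$ plus the Jensen comparison correctly replaces that equality by the inequality $\pavg\ge h_k(\bar{\mu})$, which is exactly what completeness needs (the paper's slip is in the harmless direction). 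Second, you make explicit the margin argument that the paper leaves implicit: convexity of $h_k$ together with $h_k'\bigl(h_k^{-1}(t)\bigr)\ge1$ gives $\pavg\ge t+\epsilon$ before Hoeffding is invoked. That derivative bound is indeed true, but you only assert it; a one-line justification should be added, e.g. $\tfrac56\le h_k(x)\le\tfrac{1}{k(1-x)}$ forces $x\ge1-\tfrac{6}{5k}$, and then $h_k'(x)\ge\tfrac{k+1}{2}x^{k-1}$ settles $k\ge6$, with $k\le5$ checked directly. One small caveat: in the paper the restriction $t\in(\tfrac56,1]$ is motivated by the soundness statement (Thm.~\ref{THM:eps-sound}), not by completeness, so presenting it as "precisely the reason" for the threshold is your own (legitimate) sufficient-condition reading rather than the paper's rationale.
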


\begin{proof}[Idea of the proof]
Using 2-design properties of the set of states $\mathsf{X}$ and the set of gates $\mathsf{Cliff}$, we show that in the regime where fidelity $\bar{\mu}$ is the same for every round $j$, we can express the average probability of success as a sum of powers of $\bar{\mu}$. That is, $\pavg = \frac{1}{k}\sum_{\kt = 1}^k \bar{\mu}^\kt = \frac{\bar{\mu}(\bar{\mu}^k-1)}{k(\bar{\mu}-1)} = h_k(\bar{\mu})$, see App. \ref{APP:SUBSEC:corr_sound} for details. Since we want the winning rate $R$ to be higher than the threshold $t$, we invert the function $h_k$ to obtain a bound on the fidelity of the devices $\bar{\mu}$. We plot the inverse $h^{-1}_k(t)$ in Fig. \ref{FIG:bound_h} \vl{for $t \in(\tfrac{5}{6},1]$.}
\end{proof}

Moreover, we can ask whether the converse of the above statement is true, i.e. whether a certain winning rate $R>t$ implies something about Test \ref{PROT:teleport}. When the provers are honest, we can reverse the completeness statement obtaining a bound on the quality of their devices. If the provers are dishonest ($m$-cheating) then they do not have to exactly follow the test. However, in this case we will show that the winning rate $R$ allows us to certify that the provers used a sending channel (Def. \ref{DEF:sending_channel_formal}) a certain number of times.

\begin{theorem}[soundness]\label{THM:eps-sound}
\vl{If the provers are $m$-cheating then the winning rate in Test \ref{PROT:teleport} is bounded by $R\leq \frac{1}{k} \left(m + \frac{5}{6}(k-m)\right) + \epsilon$, with probability exponentially close to 1, i.e. at least $1 - e^{-n\epsilon^2}$, where $\epsilon \in (0,1)$.}
\end{theorem}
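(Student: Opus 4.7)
The plan is to upper bound the expected winning rate $\pavg = \mathds{E}[R]$ for an arbitrary $m$-cheating strategy, and then to use a concentration argument to control the fluctuation of $R$ around $\pavg$.

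First, I would decompose $\pavg$ by depth. Since $\kt$ is chosen uniformly in $\{1,\dots,k\}$ and by the IID assumption across executions, one has $\pavg = \tfrac{1}{k}\sum_{\kt=1}^{k} \bar{p}_\kt$, where $\bar{p}_\kt$ is the success probability at depth $\kt$ averaged over $\ket{\psi}\in\mathsf{X}$ and Clifford sequences $\vec{c}_\kt$. The rest of the argument is to bound each $\bar{p}_\kt$ in terms of the provers' pre-committed subset $S \subseteq \{1,\dots,k\}$ of rounds in which a sending channel is used (with $|S|\leq m$), and then to sum.

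The per-depth bound I would try to establish is the following dichotomy. If all of rounds $1,\dots,\kt$ are in $S$, then $\bar{p}_\kt \leq 1$ trivially. Otherwise, at least one of those rounds is non-sending and I claim $\bar{p}_\kt \leq \tfrac{5}{6}$. To see the latter, note that by Def.~\ref{DEF:sending_channel_formal} a non-sending channel at some round $j \leq \kt$ cannot strictly improve the receiver's average fidelity to the state present just before the channel was used, even allowing arbitrary preprocessing $\Omega$. Combining this with the 2-design property of $\mathsf{X}$ and $\mathsf{Cliff}$ — which makes the target $\bigcirc_{j=1}^{\kt} C_j(\ket{\psi})$ a uniformly distributed Clifford-rotated Pauli state — and optimizing over the best local output the receiver can produce at round $\kt$ given the information it has accumulated, one obtains the $\tfrac{5}{6}$ ceiling. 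This value matches the optimal symmetric $1\!\to\!2$ quantum cloning fidelity for qubits, which is the natural information-theoretic limit for the receiver approximating an unknown qubit state without a faithful channel.

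With the per-depth bound in hand, the combinatorics is straightforward: the number of depths $\kt$ satisfying $\{1,\dots,\kt\} \subseteq S$ is at most $|S|\leq m$, and this is maximized by the prefix strategy $S=\{1,\dots,m\}$. Summing the two cases gives $\sum_\kt \bar{p}_\kt \leq m + \tfrac{5}{6}(k-m)$, so $\pavg \leq \tfrac{1}{k}\bigl(m + \tfrac{5}{6}(k-m)\bigr)$. Finally, since the $v^i$ are IID Bernoulli with mean $\pavg$, the one-sided Hoeffding inequality yields $\Pr[R \geq \pavg + \epsilon] \leq e^{-2n\epsilon^2} \leq e^{-n\epsilon^2}$, and combining this with the bound on $\pavg$ gives the claimed statement. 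The main obstacle will be the per-depth $\tfrac{5}{6}$ bound: one has to handle all adversarial strategies consistent with having a non-sending round somewhere among the first $\kt$, including ones that exploit information accumulated during earlier sending rounds, so the argument must combine the sending-channel definition with the 2-design machinery of the Pauli–Clifford setting cleanly, and show that no such strategy can push the average fidelity at depth $\kt$ above $\tfrac{5}{6}$.
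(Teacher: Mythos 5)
There is a genuine gap: your per-depth dichotomy is false. You claim that if at least one of the rounds $1,\dots,\kt$ is non-sending, then the averaged success probability at depth $\kt$ satisfies $\bar{p}_\kt \leq \tfrac{5}{6}$. Counterexample with $k=2$, $m=1$: the provers use no channel at round $1$ ($A$ simply keeps $\ket{\psi}$, so $\bar{p}_1 \approx \tfrac12$), and at round $2$ (direction $B\to A$) $B$ sends $A$ the classical description of $C_1$, which is a legitimate sending-channel use under Def.~\ref{DEF:sending_channel_formal}. Then $A$, who still holds $\ket{\psi}$ and has received $C_2$ from the verifier, applies $C_2\circ C_1$ and outputs perfectly, so $\bar{p}_2 = 1$ even though round $1$ was non-sending. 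This also invalidates your combinatorial step (counting depths $\kt$ with $\{1,\dots,\kt\}\subseteq S$): the set of depths at which $\bar{p}_\kt$ can equal $1$ is not controlled by prefixes of $S$. The approximate-cloning ceiling of $\tfrac{5}{6}$ simply does not apply to a single depth's success probability; it is a constraint on \emph{two} parties simultaneously approximating the same unknown state.

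The paper's proof resolves exactly this obstacle by pairing consecutive depths instead of bounding each depth separately. One writes $\pavg$ twice, shifts the depth index in one copy (setting $p_{\cmark|\psi,\vec{c}_0,0}:=1$ and using $p_{\cmark|\psi,\vec{c}_k,k}\leq 1$), so that $\pavg \leq \tfrac{1}{k}\sum_{\kt=1}^{k}\tfrac{1}{2}\bigl(\bar{F}_{\kt}+\bar{F}_{\kt-1}\bigr)$, where the two fidelities refer to the designated output parties at depths $\kt-1$ and $\kt$. Crucially, because the provers do not know the depth in advance (and are non-adaptive, with the channel-use configuration $\nu$ chosen independently of $\kt$ and $\vec{c}_\kt$), both terms are computed from the \emph{same} joint state $\rho^{\psi}_{\vec{c}_{\kt-1},\kt-1}$ at time $\kt-1$. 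If the round-$\kt$ channel is a sending channel, the pair is trivially bounded by $1$; if it is not, Def.~\ref{DEF:sending_channel_formal} reduces the depth-$\kt$ term to what the receiver could already extract from his share at time $\kt-1$, so the pair consists of $A$'s and $B$'s local approximations of the same unknown (Clifford-rotated, Haar-averaged via the 2-design) pure state, and the Gisin--Massar bound caps the average at $\tfrac{5}{6}$. Counting at most $m$ sending rounds and at least $k-m$ non-sending rounds gives $\pavg \leq \tfrac{1}{k}\bigl(m+\tfrac{5}{6}(k-m)\bigr)$, and the Hoeffding step you propose then goes through unchanged. So the statistical part of your proposal is fine, but the central per-depth bound must be replaced by this consecutive-pair argument.
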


\begin{proof}[Idea of the proof.]
\vl{In the case when the provers are $m$-cheating they can agree on a cheating strategy which uses a quantum channel $\mathcal{E}$ between them at most $m$ times, see Def. \ref{DEF:m-cheating}.} To prove soundness in this case we look at the average probability of winning for $A$ and $B$ at time steps $\kt-1$ and $\kt$. In App. \ref{APP:SUBSEC:corr_sound} we argue that whenever the provers use the channel $\mathcal{E}$, this probability is bounded by 1. On the other hand, whenever they do not use the channel and no communication occurred, we argue that the average probability of winning at both time steps is bounded by $\frac{5}{6}$ which is the bound provided by the approximate cloning theorem \cite{Gisin1997}. Since the nodes use the channel $\mathcal{E}$ at least $m$ times, their overall average probability of winning $\pavg$ is bounded by $\frac{1}{k} \left(m + \frac{5}{6}(k-m)\right)$. 
\end{proof}

The above theorem implies that in the situation when we do not trust the nodes, the higher $m$ we would like to certify, the higher the winning threshold should be. 
Indeed, for $\pavg \geq t $ we obtain $m \geq k(6t - 5) $. If we now set $ t = 1- \eta$, for some small $\eta$, then $m \geq k - 6k\eta$. For $m\sim k$, one should set at least $\eta = \mathcal{O}(k^{-1})$.

\vspace{1em}
\noindent \textbf{Remark.} Note that in Thm.~\ref{THM:corr_uniq} we are able to fully certify the action of the provers, even if they are not trusted. In particular, we know that they have perfectly sent the state to each other $k$ times. On the other hand, Thm.~\ref{THM:eps-sound} only certifies the use of some quantum or classical channel regardless of its quality. In particular, in the limit where $\pavg =1$, Thm.~\ref{THM:eps-sound} show that $m=k$ sending channels have been used, but we cannot explicitly certify the quality of the channel. However, the exact soundness statement, Thm.~\ref{THM:corr_uniq}, suggests that even in the imperfect case, the test should be able to certify the quality of each individual operation used by the provers.

\section{Estimation view}\label{sec:estimation}

In this section we interpret our test in the context of estimation in order to obtain measures of confidence in the nodes' ability to perform the test. We assume that the nodes $A$ and $B$ are honest and follow the protocol. 
Specifically, we use the winning rate $R$ in the teleportation-based ping-pong test, as a figure of merit to estimate the quality of the network. We then provide a consistency check which allows us to compare this to the quality one would expect from combining the individual devices. Furthermore, we use the statistics of the test to estimate the performance of $k$-round protocols.

Throughout this section we will use a tilde to denote noisy counterparts of operations, for example $\tilde{\mathcal{T}}_\kt$ will denote a noisy realization of the $\kt$-round teleportation-based ping-pong test $\mathcal{T}_\kt$, Test \ref{PROT:teleport}. 

\subsection{Preliminaries} \label{SUBSEC:math_tools}

In this section we introduce mathematical tools which will be useful for \emph{(i)} checking whether the test is consistent when the honest nodes use devices of a certain quality, Sec.~\ref{SUBSEC:consistency_check}, and \emph{(ii)} drawing conclusions about the performance of $k$-round protocols, Sec.~\ref{SUBSEC:performance} and \ref{SUBSEC:simulatedresults}.

We describe the quality of individual devices with a noise model.
Specifically, we assume that the individual operations used in the test, i.e. memories $M_j$ and gates $C_j$, have been tested individually for each round $j$, to obtain an estimate on their performance. More formally, let the quality of a noisy gate $\tilde{C}_j$ at round $j$, be described with the average fidelity, $\bar{F}(\tilde{C}_j) = \int \tn{d} \psi \Tr\left[ \tilde{C}_j (\ketbra{\psi}{\psi}) \cdot C_j(\ketbra{\psi}{\psi}) \right]$, for all $j=1,\dots,k$. Furthermore, let the average fidelity have an empirical estimate $r_{C_j}$, which is known with certain precision~\cite{Hoeffding1963}, such that
\begin{align}\label{EQ:hoeffding_C}
\tn{Pr}\left[ |r_{C_j} - \bar{F}(\tilde{C}_j)| \leq \epsilon_{C_j} \right] \geq 1 - \delta_{C_j},
\end{align} where $\delta_{C_j} = 2e^{-2n_{C_j}\epsilon_{C_j}^2}$. Here $n_{C_j}$ is the number of repetitions with which the estimate $r_{C_j}$ was obtained. Similarly, for  $\tilde{M}^\mathcal{T}_j$ a noisy quantum memory at round $j$,  average fidelity is $\bar{F}(\tilde{M}^\mathcal{T}_j) = \int \tn{d} \psi \Tr\left[ \tilde{M}^\mathcal{T}_j (\ketbra{\psi}{\psi}) \cdot M_j^\mathcal{T}(\ketbra{\psi}{\psi}) \right]$. This average fidelity has an empirical estimate $r_{M^\mathcal{T}_j}$ and a precision bound 
\begin{align}\label{EQ:hoeffding_M}
\tn{Pr}\left[ |r_{M^\mathcal{T}_j} - \bar{F}(\tilde{M}^\mathcal{T}_j)| \leq\epsilon_{M_j^\mathcal{T}} \right] \geq  1- \delta_{M_j^\mathcal{T}},
\end{align}where $\delta_{M_j^\mathcal{T}} = 2e^{-2n_{M_j^\mathcal{T}}\epsilon_{M_j^\mathcal{T}}^2}$. Furthermore, we assume that the nodes can locally and perfectly prepare and measure a quantum state.

\vspace{1em}
The teleportation-based ping-pong test, Test \ref{PROT:teleport}, is performed the total of $n$ times. Note that one can easily record which executions $i$ were performed for depths $\kt$, states $\psi$ and strings of Clifford gates $\vec{c}_\kt$. Then, in analogy to Eq. \eqref{EQ:rate}, we can define the winning rate for a \emph{fixed} depth $\kt$ and string $\vec{c}_\kt$,
\begin{align}
R_{\vec{c}_\kt,\kt} = \frac{1}{n_{\vec{c}_\kt,\kt} } \sum_{i} v^i_{\vec{c}_\kt,\kt} ,
\end{align}
where ${n_{\vec{c}_\kt,\kt} }$ is a total number of executions for fixed $\kt$ and $\vec{c}_\kt$, and $v^i_{\vec{c}_\kt,\kt}$ is a corresponding random variable assuming values 0 and 1 for 'lose' and 'win' events respectively. Analogously, we can record which executions correspond to a fixed depth $\kt$ only. We define
\begin{align}
R_\kt = \frac{1}{n_\kt} \sum_i v^i_\kt
\end{align}
as the winning rate for a fixed $\kt$. Here $n_\kt$ is a total number of executions for depth $\kt$ and $v^i_\kt$ is a corresponding random variable recording the wins in the test.

Now we will relate the above winning rates to the measures of quality of the test. Intuitively, the higher the winning rate the better the test performs and the less noise is present in the setup. In the remaining part of this section we make that statement rigorous.

\begin{lemma}\label{LEM:avgfid_expected}
Let the average fidelity of a noisy realization of Test \ref{PROT:teleport}, $\tilde{\mathcal{T}}_\kt$, for a fixed depth $\kt$ and a fixed string of Clifford gates $\vec{c}_\kt$ be defined as $\bar{F}_{\vec{c}_\kt,\kt}(\tilde{\mathcal{T}}_\kt) = \int \tn{d}\psi \Tr [\tilde{\mathcal{T}}_\kt(\ketbra{\psi}{\psi})\cdot  \Pi^\cmark_{\kt}]$, where $\mathcal{T}_\kt$ is defined as in Eq. \ref{EQ:teleport-based_test}.
The expected value of the winning rate $R_{\vec{c}_\kt,\kt}$ over the set of states $\mathsf{X}$, is equal to the average fidelity of the test $\tilde{\mathcal{T}}_\kt$,
\begin{align}
\mathds{E}\left[ R_{\vec{c}_\kt,\kt}\right]_\mathsf{X} = \bar{F}_{\vec{c}_\kt,\kt}(\tilde{\mathcal{T}}_\kt). 
\end{align}
\end{lemma}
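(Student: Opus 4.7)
The plan is to apply linearity of expectation, use the IID assumption across executions, and then invoke the 2-design property of $\mathsf{X}$ (established in App.~\ref{APP:SEC:2designs}) to replace the discrete average over Pauli eigenstates by a Haar integral.

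First I would expand the definition of the winning rate. Since each $v^i_{\vec{c}_\kt,\kt}\in\{0,1\}$ is an indicator that execution $i$ (at fixed depth $\kt$ and fixed gate string $\vec{c}_\kt$) is accepted by $V$, linearity of expectation together with the IID assumption across executions yields
\begin{align}
\mathds{E}\left[R_{\vec{c}_\kt,\kt}\right]_\mathsf{X} = \frac{1}{n_{\vec{c}_\kt,\kt}}\sum_{i} \mathds{E}\left[v^i_{\vec{c}_\kt,\kt}\right]_\mathsf{X} = \mathds{E}\left[v^1_{\vec{c}_\kt,\kt}\right]_\mathsf{X}.
\end{align}
Since the input state $\ket{\psi}$ for a single execution is sampled uniformly from $\mathsf{X}$ and the acceptance probability conditioned on $\psi$ is $p_{\cmark\mid\psi,\vec{c}_\kt,\kt}=\Tr[\tilde{\mathcal{T}}_\kt(\ketbra{\psi}{\psi})\cdot\Pi^\cmark_\kt]$ (cf.~Eq.~\eqref{EQ:p_win}, now with the noisy map $\tilde{\mathcal{T}}_\kt$), the tower property gives
\begin{align}
\mathds{E}\left[v^1_{\vec{c}_\kt,\kt}\right]_\mathsf{X} = \frac{1}{|\mathsf{X}|}\sum_{\psi\in\mathsf{X}}\Tr\!\left[\tilde{\mathcal{T}}_\kt(\ketbra{\psi}{\psi})\cdot\Pi^\cmark_\kt\right].
\end{align}

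Next I would observe that the integrand is a polynomial of degree $2$ in the state: the channel $\tilde{\mathcal{T}}_\kt$ acts linearly on $\ketbra{\psi}{\psi}$, while the target projector $\Pi^\cmark_\kt=\bigcirc_{j=1}^{\kt}C_j(\ketbra{\psi}{\psi})$ is itself linear in $\ketbra{\psi}{\psi}$, so the full expression is quadratic in $\ketbra{\psi}{\psi}$. This is exactly the class of functions on which a $2$-design reproduces the Haar average. Because $\mathsf{X}$ is a state $2$-design (App.~\ref{APP:SEC:2designs}),
\begin{align}
\frac{1}{|\mathsf{X}|}\sum_{\psi\in\mathsf{X}}\Tr\!\left[\tilde{\mathcal{T}}_\kt(\ketbra{\psi}{\psi})\cdot\Pi^\cmark_\kt\right] = \int\tn{d}\psi\,\Tr\!\left[\tilde{\mathcal{T}}_\kt(\ketbra{\psi}{\psi})\cdot\Pi^\cmark_\kt\right] = \bar{F}_{\vec{c}_\kt,\kt}(\tilde{\mathcal{T}}_\kt),
\end{align}
which is the claimed identity.

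The only subtle point, which I would highlight carefully, is that the measurement operator $\Pi^\cmark_\kt$ itself depends on $\ket{\psi}$, so the integrand is \emph{quadratic} (not linear) in $\ket{\psi}$; it is precisely this quadratic dependence that forces us to invoke a $2$-design rather than a simple $1$-design argument. Everything else is bookkeeping: IID across executions collapses the sample average to a single-execution expectation, and the tower property turns the conditional acceptance probability into the trace expression. I expect the $2$-design step to be the one worth stating explicitly, while the reduction to a single execution is essentially immediate from the IID assumption stated in Sec.~\ref{sec:thetest}.
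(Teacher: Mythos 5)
Your proposal is correct and follows essentially the same route as the paper's proof: reduce the expectation of the winning rate to the single-execution acceptance probability averaged over $\mathsf{X}$, and then use the state 2-design property of the Pauli eigenstates to replace the discrete average by the Haar integral. The only difference is one of detail: the paper justifies the degree-2 step explicitly (via unitary invariance of the trace inner product and the Choi isomorphism in Lemma~\ref{APP:LEM:paulistates_2design}, which brings the expression into the form $\Tr[\,\cdot\,(\ketbra{\psi}{\psi}\otimes\ketbra{\psi}{\psi})]$), whereas you assert directly that quadratic dependence on $\ketbra{\psi}{\psi}$ suffices, which is accurate but left implicit.
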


\begin{proof}[Idea of the proof.]
The first step of the proof is to notice that the expected value of the variable $v^i_{\vec{c}_\kt,\kt}$ is the probability of success in a single round averaged over all the states in $\mathsf{X}$,
\begin{align}
\mathds{E}\left[ v^i_{\vec{c}_\kt,\kt} \right]_\mathsf{X} & = \frac{1}{|\mathsf{X}|} \sum_{\psi\in \mathsf{X}} \left( p_{\cmark|\psi,\vec{c}_\kt, \kt} \cdot 1 + p_{\xmark|\psi,\vec{c}_\kt, \kt} \cdot 0 \right)  = \frac{1}{|\mathsf{X}|} \sum_{\psi\in \mathsf{X}}\Tr [\tilde{\mathcal{T}}_\kt(\ketbra{\psi}{\psi})\cdot  \Pi^\cmark_{\kt}]
\end{align}
The second step is based on relating the above quantity to the average fidelity. Here the key idea is to observe that the expression under the trace contains only polynomials of degree 2 in $\ketbra{\psi}$. Therefore one can use the 2-design properties of the set $\mathsf{X}$ to equate the discrete averaging over the six Pauli states to the continuous Haar averaging over the whole state space in average fidelity. The details of the proof can be found in App. \ref{APP:SUBSEC:proof_avgfid_expected}.
\end{proof}
The above lemma has a simple useful corollary, namely, that the average fidelity and the winning rate $R_{\vec{c}_\kt,\kt}$ can be related through the Hoeffding inequality,
\begin{align}\label{EQ:hoeffding_avgfid}
\tn{Pr}\left[ |R_{\vec{c}_\kt,\kt} - \bar{F}_{\vec{c}_\kt,\kt}(\tilde{\mathcal{T}}_\kt)| \geq \epsilon_{\vec{c}_\kt,\kt} \right] \geq 1 - 2e^{-2n_{\vec{c}_\kt,\kt}\epsilon_{\vec{c}_\kt,\kt}^2}.
\end{align}

Before we make a similar connection for the rate $R_{\kt}$, let us define a useful quantity.

\begin{definition}[double-averaged fidelity]
Let $\bar{F}_{\vec{c}_\kt,\kt}(\tilde{\mathcal{T}}_\kt)$ be the average fidelity of a the teleportation-based ping-pong test, Test 2, defined for a fixed depth $\kt$ and a fixed sting of Clifford gates $\vec{c}_\kt$. We define the quantity
\begin{align}
\bar{\bar{F}}_{\kt}(\tilde{\mathcal{T}}_\kt) := \int \tn{d}C_1 ... \int \tn{d}C_\kt \bar{F}_{\vec{c}_\kt,\kt}(\tilde{\mathcal{T}}_\kt).
\end{align}
as double-averaged fidelity. The averaging for every gate $C_j$ is taken according to the Haar measure. 
\end{definition}

\begin{lemma}\label{LEM:doubleavgfid_expected}
The expected value of the winning rate $R_{\kt}$ in Test \ref{PROT:teleport}, for a fixed depth $\kt$, taken over the set of states $\mathsf{X}$ and set of Clifford gates, is equal to the double-averaged fidelity of the test $\tilde{\mathcal{T}}_\kt$,
\begin{align} \label{EQ:doubleavgfid_expected}
\mathds{E}\left[ R_{\kt}\right]_{\mathsf{X},\mathsf{Cliff}} = \bar{\bar{F}}_{\kt}(\tilde{\mathcal{T}}_\kt).
\end{align}
\end{lemma}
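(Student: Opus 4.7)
The plan is to mirror the proof strategy of Lemma~\ref{LEM:avgfid_expected}, extending the 2-design argument from the set of states $\mathsf{X}$ to the set of Clifford gates $\mathsf{Cliff}$. First I would use the IID assumption across executions together with linearity of expectation to reduce $\mathds{E}[R_\kt]_{\mathsf{X},\mathsf{Cliff}}$ to $\mathds{E}[v^i_\kt]_{\mathsf{X},\mathsf{Cliff}}$, and then write this expectation explicitly as the uniform average
\begin{equation*}
\mathds{E}[v^i_\kt]_{\mathsf{X},\mathsf{Cliff}} = \frac{1}{|\mathsf{Cliff}|^\kt}\sum_{\vec{c}_\kt} \frac{1}{|\mathsf{X}|}\sum_{\psi} p_{\cmark|\psi,\vec{c}_\kt,\kt} = \frac{1}{|\mathsf{Cliff}|^\kt}\sum_{\vec{c}_\kt} \frac{1}{|\mathsf{X}|}\sum_{\psi} \Tr[\tilde{\mathcal{T}}_\kt(\ketbra{\psi})\cdot \Pi^\cmark_\kt].
\end{equation*}

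Next I would apply Lemma~\ref{LEM:avgfid_expected} to the inner average over states, which immediately yields
\begin{equation*}
\mathds{E}[v^i_\kt]_{\mathsf{X},\mathsf{Cliff}} = \frac{1}{|\mathsf{Cliff}|^\kt}\sum_{\vec{c}_\kt} \bar{F}_{\vec{c}_\kt,\kt}(\tilde{\mathcal{T}}_\kt).
\end{equation*}
The remaining step is to replace each discrete Clifford average by a Haar integral over the unitary group. For this I would unfold $\bar{F}_{\vec{c}_\kt,\kt}(\tilde{\mathcal{T}}_\kt) = \int d\psi\, \Tr[\tilde{\mathcal{T}}_\kt(\ketbra{\psi}) \cdot \bigcirc_{j=1}^{\kt} C_j(\ketbra{\psi})]$ and count the degree in each $C_j$: assuming the standard noise decomposition $\tilde{C}_j = \mathcal{N}_j \circ C_j$ with $\mathcal{N}_j$ independent of the target gate, each $C_j$ enters the expression exactly twice as $C_j(\cdot) C_j^\dagger$ (once in the noisy implementation, once in the projector), which is a polynomial of total degree $2$ in $C_j$ and $2$ in $C_j^*$. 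Invoking the 2-design property of $\mathsf{Cliff}$ stated in App.~\ref{APP:SEC:2designs} for each gate slot independently and iterating gives exactly $\int dC_1 \cdots \int dC_\kt \bar{F}_{\vec{c}_\kt,\kt}(\tilde{\mathcal{T}}_\kt) = \bar{\bar{F}}_\kt(\tilde{\mathcal{T}}_\kt)$, which proves \eqref{EQ:doubleavgfid_expected}.

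The main obstacle I anticipate is the degree-counting for each Clifford slot: one must carefully check that the noise model does not introduce higher-degree dependence on $C_j$ (for instance, if the noise were an arbitrary gate-dependent channel the polynomial degree could exceed $2$ and the 2-design identity would fail). Under the assumption used elsewhere in the paper, namely that each noisy gate is modeled as an ideal gate followed by a gate-independent CPTP noise map, the bookkeeping is straightforward and the argument closes by direct iteration of the single-gate 2-design identity. Everything else is essentially the same polynomial-in-$\ketbra{\psi}$ observation already invoked for Lemma~\ref{LEM:avgfid_expected}.
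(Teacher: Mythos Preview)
Your proposal is correct and matches the paper's high-level structure: reduce $\mathds{E}[R_\kt]$ to $\mathds{E}[v^i_\kt]$ via linearity and the IID assumption, then convert the discrete $\mathsf{X}$- and $\mathsf{Cliff}$-averages to Haar integrals using 2-design properties. The paper's own proof of this lemma is in fact a two-line invocation of a separate technical result (Lemma~\ref{APP:LEM:cliff_2design}), and the substantive difference between your argument and the paper's lies in how that result is established. The paper manipulates the trace explicitly into a nest of twirls $C_j^\dagger \circ M_{j+1}^{\mathcal{T}} \circ C_j$, uses that each twirled map is depolarizing and hence commutes with the remaining Cliffords, and then applies the twirl form of the unitary 2-design (Def.~\ref{APP:DEF:unit2design}) one slot at a time; a Haar-invariance trick with a dummy Clifford handles the leftover $M_1^{\mathcal{T}}$. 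You instead count the polynomial degree in each $C_j$ directly and invoke the equivalent polynomial characterization of unitary 2-designs. Your route is more concise and cleanly isolates the gate-independent-noise assumption you correctly flag; the paper's route is more hands-on but stays strictly within the twirl definition it actually states, so no appeal to the equivalence of 2-design characterizations is needed.
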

The intuition behind the above lemma is that discrete averaging in $\mathds{E}\left[ R_{\kt}\right]_{\mathsf{X},\mathsf{Cliff}}$ over the Clifford gates is equal to the continuous averaging in the definition of $\bar{\bar{F}}_\kt(\tilde{\mathcal{T}}_\kt)$. This statement follows from the unitary 2-design properties of the Clifford set, see App. \ref{APP:SUBSEC:proof_doubleavgfid_expected} for details.

Finally, the probability that the empirical data $R_\kt$ differs from double-averaged fidelity by more than $\epsilon_{\kt}$ is bounded by the Hoeffding inequality,
\begin{align}\label{EQ:hoeffding_doubleavg}
\tn{Pr}\left[ |R_\kt - \bar{\bar{F}}_\kt(\tilde{\mathcal{T}}_\kt)| \leq \epsilon_\kt \right] \geq 1 - 2e^{-2n_\kt \epsilon^2_{\kt} }.
\end{align}

\subsection{Consistency check}\label{SUBSEC:consistency_check}
\vl{In the following we demonstrate how to use the winning rates defined above to check for consistency, i.e. that devices with certain fidelities were used {\em together} in Test \ref{PROT:teleport}. Specifically, we provide a relation between the quality of the test in terms of $R_{\vec{c}_\kt,\kt}$ and what one may expect given individual devices with estimates of average fidelities $r_{M^\mathcal{T}_j}$ and $r_{C_j}$. Not satisfying this consistency-check relation implies that there is an internal contradiction in the reported values of individual average fidelities and observed rate $R_{\vec{c}_\kt,\kt}$.}

\begin{theorem}[consistency check]\label{THM:inconsistency_check} Let $r_{M^\mathcal{T}_j}$ and $r_{C_j}$, $j=1,\dots,k$, be empirical estimates of the average fidelity of all individual memories and gates respectively. Moreover let $R_{\vec{c}_\kt,\kt}$ be an empirical estimate of the average fidelity of the teleportation-based ping-pong test, Test \ref{PROT:teleport}, for a fixed depth $\kt$ and a fixed string of Clifford gates $\vec{c}_\kt$. Devices with estimates $r_{M^\mathcal{T}_j}$ and $r_{C_j}$ were used together in the test $\tilde{\mathcal{T}}_\kt$ if the following inequality is satisfied~\cite{Dugas2016},
\begin{widetext}
\begin{align}\label{EQ:bound_incons}
R_{\vec{c}_\kt,\kt} \geq \frac{2 \cos^2 \left( \sum_{j=1}^{\kt }
\acos \sqrt{\frac{3r_{M^\mathcal{T}_j}-1}{2}} + \acos \sqrt{\frac{3r_{C_j}-1}{2}}
 \right) + 1}{3} - \epsilon_{\vec{c}_\kt,\kt} 
\end{align}
\end{widetext}
The bound holds for any $2\kt$ quantum channels such that $\sum_{j=1}^{\kt }
\acos \sqrt{\frac{3r_{M^\mathcal{T}_j}-1}{2}} + \acos \sqrt{\frac{3r_{C_j}-1}{2}} \leq \frac{\pi}{2}$, and $\epsilon_{\vec{c}_\kt,\kt}$ is given by Eq. \eqref{EQ:hoeffding_avgfid}.
\end{theorem}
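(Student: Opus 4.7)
The plan is to combine three ingredients: (i) Lemma \ref{LEM:avgfid_expected}, which links $R_{\vec{c}_\kt,\kt}$ to the average fidelity $\bar{F}_{\vec{c}_\kt,\kt}(\tilde{\mathcal{T}}_\kt)$ of the composed noisy test; (ii) the Hoeffding bound of Eq.~\eqref{EQ:hoeffding_avgfid}, which controls the sampling error $\epsilon_{\vec{c}_\kt,\kt}$; and (iii) a chain-rule (subadditivity) inequality for the fidelity of composed quantum channels, which is precisely the content of the referenced result of Dugas et al.

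First, I would write $\tilde{\mathcal{T}}_\kt$ as the composition $\bigcirc_{j=1}^\kt \tilde{C}_j \circ \tilde{M}_j^{\mathcal{T}}$ of $2\kt$ single-qubit CPTP maps, whose individual average fidelities $\bar F(\tilde{C}_j)$ and $\bar F(\tilde{M}_j^{\mathcal{T}})$ are known via the empirical estimates $r_{C_j}, r_{M_j^\mathcal{T}}$ with precision controlled by Eqs.~\eqref{EQ:hoeffding_C}--\eqref{EQ:hoeffding_M}. Next, I would pass to the entanglement fidelity through the standard identity $F_e = (d\bar F - 1)/(d-1)$ with $d=2$, i.e.\ $F_e = (3\bar F - 1)/2$; this is exactly the quantity that appears inside the $\arccos\sqrt{\cdot}$ in the statement. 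The purpose of this change of variable is that the map $\mathcal{E}\mapsto \theta(\mathcal{E}) := \arccos\sqrt{F_e(\mathcal{E})}$ satisfies a triangle-type inequality $\theta(\mathcal{E}_1\circ\mathcal{E}_2) \leq \theta(\mathcal{E}_1) + \theta(\mathcal{E}_2)$, which is the chain rule for channel fidelities of Dugas et al.

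Applying this chain rule inductively to the $2\kt$ factors, and then substituting the empirical estimates for the individual fidelities, gives
\begin{align*}
\theta\!\left(\tilde{\mathcal{T}}_\kt\right) \;\leq\; \sum_{j=1}^{\kt} \left( \arccos\sqrt{\tfrac{3\,r_{M_j^\mathcal{T}}-1}{2}} + \arccos\sqrt{\tfrac{3\,r_{C_j}-1}{2}} \right).
\end{align*}
Under the hypothesis that the right-hand side lies in $[0,\pi/2]$, the function $\cos^2$ is monotonically decreasing there, so this upper bound on $\theta(\tilde{\mathcal{T}}_\kt)$ inverts cleanly to a lower bound on $F_e(\tilde{\mathcal{T}}_\kt)$, and hence on $\bar F_{\vec{c}_\kt,\kt}(\tilde{\mathcal{T}}_\kt) = (2 F_e(\tilde{\mathcal{T}}_\kt) + 1)/3$, which already has the form $(2\cos^2(\cdot)+1)/3$ appearing in the statement. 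Finally, I would use Lemma~\ref{LEM:avgfid_expected} together with Eq.~\eqref{EQ:hoeffding_avgfid} to replace $\bar F_{\vec{c}_\kt,\kt}(\tilde{\mathcal{T}}_\kt)$ by $R_{\vec{c}_\kt,\kt} + \epsilon_{\vec{c}_\kt,\kt}$ and rearrange to obtain Eq.~\eqref{EQ:bound_incons}.

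The main obstacle I anticipate is the careful bookkeeping of the directions of the inequalities: we want a \emph{lower} bound on $R_{\vec{c}_\kt,\kt}$, while the chain rule yields an \emph{upper} bound on the composite angle. The monotonicity of $\arccos\sqrt{\cdot}$ on $[0,1]$ and of $\cos^2$ on $[0,\pi/2]$ is what makes the two directions compatible, and the hypothesis on the sum of angles is essential to remain in the monotone regime of $\cos^2$; outside that regime the inequality would flip or become vacuous. A secondary and more routine obstacle is replacing the true individual fidelities by their empirical estimates uniformly in $j$, which in a fully quantitative version would require a union-bound combination of the $\delta_{C_j}, \delta_{M_j^\mathcal{T}}$ and $\delta_{\vec{c}_\kt,\kt}$ confidence parameters; the statement as given absorbs all of this into the single sampling error $\epsilon_{\vec{c}_\kt,\kt}$.
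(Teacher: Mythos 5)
Your proposal matches the paper's approach: the paper gives no self-contained proof of this theorem but obtains Eq.~\eqref{EQ:bound_incons} exactly as you do, by invoking the composite-channel fidelity bound of~\cite{Dugas2016} for the $2\kt$ channels $\tilde{M}^{\mathcal{T}}_j,\tilde{C}_j$ (with the angle-sum condition ensuring the monotone regime of $\cos^2$) and then converting $\bar{F}_{\vec{c}_\kt,\kt}(\tilde{\mathcal{T}}_\kt)$ into $R_{\vec{c}_\kt,\kt}\pm\epsilon_{\vec{c}_\kt,\kt}$ via Lemma~\ref{LEM:avgfid_expected} and Eq.~\eqref{EQ:hoeffding_avgfid}. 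Your remark about combining the individual confidence parameters by a union bound is precisely what the paper relegates to Corollary~\ref{COR:probabilities_bound}, so nothing essential is missing.
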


Recall that the individual estimates are known with certain confidence. That means that the above consistency check will be satisfied with a certain probability. We state it formally in the corollary below. 

\begin{corollary}\label{COR:probabilities_bound} Given the estimates of average fidelities for memories $r_{M^\mathcal{T}_j}$ and gates $r_{C_j}$ are known with confidence $\epsilon_{M_j^\mathcal{T}}$ and $\epsilon_{C_j}$ respectively, the bound from Thm.~\ref{THM:inconsistency_check} is satisfied by noisy devices with probability at least $1-2\sum_{j=1}^{\kt}\left(e^{-2n_{C_j}\epsilon_{C_j}^2} + e^{-2n_{M_j^\mathcal{T}}\epsilon_{M_j^\mathcal{T}}^2}\right)$.
\end{corollary}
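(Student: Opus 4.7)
The plan is to view Corollary~\ref{COR:probabilities_bound} as the natural probabilistic counterpart of Theorem~\ref{THM:inconsistency_check}: the theorem supplies the deterministic structural inequality (the Dugas \emph{et al.} bound) relating the average fidelity of the composite test $\tilde{\mathcal{T}}_\kt$ to the fidelities of its $2\kt$ constituents, and what the corollary has to add is the propagation of finite-sample uncertainty of the individual characterisations through that inequality. The slack $\epsilon_{\vec{c}_\kt,\kt}$ already written into \eqref{EQ:bound_incons} absorbs the deviation between the empirical rate $R_{\vec{c}_\kt,\kt}$ and the true fidelity $\bar{F}_{\vec{c}_\kt,\kt}(\tilde{\mathcal{T}}_\kt)$ (Lemma~\ref{LEM:avgfid_expected} together with \eqref{EQ:hoeffding_avgfid}); what remains to be controlled is the discrepancy introduced on the right-hand side of \eqref{EQ:bound_incons} when the true per-device fidelities $\bar{F}(\tilde{C}_j)$ and $\bar{F}(\tilde{M}_j^\mathcal{T})$ are replaced by their empirical estimates $r_{C_j}$ and $r_{M_j^\mathcal{T}}$.

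First I would invoke the Hoeffding bounds of \eqref{EQ:hoeffding_C} and \eqref{EQ:hoeffding_M} at each of the $2\kt$ individual characterisation experiments: for each $j \in \{1,\dots,\kt\}$, the event $|r_{C_j} - \bar{F}(\tilde{C}_j)| \leq \epsilon_{C_j}$ fails with probability at most $\delta_{C_j} = 2e^{-2n_{C_j}\epsilon_{C_j}^2}$, and analogously $|r_{M_j^\mathcal{T}} - \bar{F}(\tilde{M}_j^\mathcal{T})| \leq \epsilon_{M_j^\mathcal{T}}$ fails with probability at most $\delta_{M_j^\mathcal{T}} = 2e^{-2n_{M_j^\mathcal{T}}\epsilon_{M_j^\mathcal{T}}^2}$. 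A union bound over these $2\kt$ events then gives
\begin{equation*}
\tn{Pr}\left[\text{all } 2\kt \text{ estimates within their }\epsilon\text{-windows}\right] \geq 1 - \sum_{j=1}^{\kt}(\delta_{C_j}+\delta_{M_j^\mathcal{T}}) = 1 - 2\sum_{j=1}^{\kt}\left(e^{-2n_{C_j}\epsilon_{C_j}^2}+e^{-2n_{M_j^\mathcal{T}}\epsilon_{M_j^\mathcal{T}}^2}\right),
\end{equation*}
which matches exactly the probability appearing in the corollary statement.

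Conditioned on this joint concentration event, Theorem~\ref{THM:inconsistency_check} --- whose underlying deterministic content is the Dugas \emph{et al.} inequality applied to the true per-device fidelities --- directly delivers \eqref{EQ:bound_incons}. The only subtlety I would take care to check is that the local Lipschitz behaviour of the nested $\cos^2$ and $\arccos\sqrt{(3x-1)/2}$ maps, together with the regime constraint on their arguments inherited from the theorem, lets one absorb the $\epsilon_{C_j}$ and $\epsilon_{M_j^\mathcal{T}}$ perturbations of each $r$ into the slack $\epsilon_{\vec{c}_\kt,\kt}$ already present on the right-hand side of \eqref{EQ:bound_incons}. This is a calibration-of-parameters check rather than a structural obstacle, so the substantive content of the proof is essentially the union-bound bookkeeping above.
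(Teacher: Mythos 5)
Your proposal is correct and follows essentially the same route as the paper's proof in App.~\ref{APP:SUBSEC:proof_cor_probabilties_bound}: the event that the consistency check holds is reduced to the event that all $2\kt$ individual estimates lie in their $\epsilon$-windows, and the complement is controlled by a union bound combined with the Hoeffding bounds \eqref{EQ:hoeffding_C} and \eqref{EQ:hoeffding_M}, yielding exactly $1-2\sum_{j=1}^{\kt}\bigl(e^{-2n_{C_j}\epsilon_{C_j}^2}+e^{-2n_{M_j^\mathcal{T}}\epsilon_{M_j^\mathcal{T}}^2}\bigr)$. Your closing remark about propagating the $\epsilon_{C_j}$, $\epsilon_{M_j^\mathcal{T}}$ perturbations through the $\arccos$/$\cos^2$ expression is extra care the paper's proof does not spell out, but it does not change the argument.
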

\begin{proof}[Idea of the proof]
The probability that the bound \eqref{EQ:bound_incons} is satisfied is equal to the unity, minus the probability that at least one of the bounds for individual devices is not satisfied. By properties of probability one arrives at the statement above, see App. \ref{APP:SUBSEC:proof_cor_probabilties_bound} for details.
\end{proof}

\subsection{Performance of $k$-round protocols } \label{SUBSEC:performance}

In this section we investigate the implications of Test \ref{PROT:teleport} for the performance of more general $k$-round protocols $\tilde{\mathcal{P}}^{k}$, see Def.~\ref{DEF:class_of_protocols}. We show that their performance can be bounded using the winning rate $R_\kt$ (Sec.~\ref{SUBSEC:math_tools}) in the teleportation-based ping-pong test.

To explore the performance of protocols $\tilde{\mathcal{P}}^{k}$ we consider the diamond distance~\cite{Watrous2018}, $\parallel \Pi\circ \tilde{\mathcal{P}}^{k}\circ \tn{Prep} - \Pi\circ {\mathcal{P}}^{k}\circ \tn{Prep} \parallel_{\diamond}$. 
However, since $\tn{Prep}$ and $\Pi$ are perfect by assumption, the above diamond distance is upper-bounded by $\parallel \tilde{\mathcal{P}}^{k} - \mathcal{P}^{k} \parallel_\diamond$, which we fix to be the figure of merit in this section. It can be shown that the diamond distance is related to the average fidelity in the following way~\cite{Wallman2014},
\begin{align}\label{EQ:diamond_dist_prot}
\parallel \tilde{\mathcal{P}}^{k} - \mathcal{P}^{k} \parallel_\diamond \leq 2\sqrt{6} \sqrt{ \left(1-\bar{F}_{k,\vec{g}_k}(\tilde{\mathcal{P}}^{k})\right)}, 
\end{align}
where $\bar{F}_{k, \vec{g}_k}(\tilde{\mathcal{P}}^{k}) = \int \tn{d} \psi \Tr\left[ \tilde{\mathcal{P}}^{k} (\ketbra{\psi}) \cdot {\mathcal{P}}^{k}(\ketbra{\psi}) \right]$ is the average fidelity of a protocol $\tilde{\mathcal{P}}^{k}$ of a fixed depth $k$ and for a fixed string of gates $\vec{g}_k$. Note that the average fidelity differs depending on the sequence of gates one chooses to apply. Therefore, to estimate the behavior of protocol $\tilde{\mathcal{P}}^{k}$ one would have to know fidelities $\bar{F}_{k,\vec{g}_k}(\tilde{\mathcal{P}}^{k})$ for all possible gate sequences $G_1,...,G_k$, which is unfeasible in practice. 
For this reason, it is much more convenient to use double-averaged fidelity to bound the performance of a protocol $\tilde{\mathcal{P}}^{k}$. We formalize this argument in the following theorem.

\begin{theorem}[Performance of $k$-round protocols] \label{THM:bound_protocols}
The performance of single-qubit $k$-round protocols, Def. \ref{DEF:class_of_protocols}, can be bounded in terms of an estimate for the double-averaged fidelity $R_k$ of the $k$-round teleportation-based ping-pong test, Test \ref{PROT:teleport}, in the following way
\begin{align}\label{EQ:bound_protocols}
\parallel \tilde{\mathcal{P}}^{k} - \mathcal{P}^{k} \parallel_\diamond \leq 2 \sqrt{6} \sqrt{ |\mathsf{Cliff}|^{k}\left(1-R_k + \epsilon_k \right)}
\end{align}
where $|\mathsf{Cliff}|$ is the size of the Clifford group for dimension $2$ and $\epsilon_k$ is given by Eq. \eqref{EQ:hoeffding_doubleavg}. The bound is satisfied with probability $1 - e^{-2n_\kt \epsilon^2_{\kt}}$.
\end{theorem}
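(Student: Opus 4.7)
The plan is to start from the bound in Eq.~\eqref{EQ:diamond_dist_prot} and reduce the single-sequence infidelity appearing there to the double-averaged quantity that $R_k$ actually estimates. For any fixed Clifford string $\vec{c}_k = (C_1,\dots,C_k)$, Eq.~\eqref{EQ:diamond_dist_prot} immediately gives
\begin{equation}
\|\tilde{\mathcal{P}}^{k}-\mathcal{P}^{k}\|_\diamond \;\leq\; 2\sqrt{6}\sqrt{\,1-\bar{F}_{k,\vec{c}_k}(\tilde{\mathcal{P}}^{k})\,},
\end{equation}
so the whole task reduces to lower-bounding $\bar{F}_{k,\vec{c}_k}$ by a function of the observable $R_k$.

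The second step is to trade the single-sequence fidelity for the double-averaged fidelity $\bar{\bar{F}}_k$. By definition and the 2-design property of the Clifford group that was already exploited in Lemma~\ref{LEM:doubleavgfid_expected}, one may write $\bar{\bar{F}}_k(\tilde{\mathcal{P}}^{k}) = |\mathsf{Cliff}|^{-k}\sum_{\vec{c}_k}\bar{F}_{k,\vec{c}_k}(\tilde{\mathcal{P}}^{k})$. Since every term satisfies $0 \leq 1-\bar{F}_{k,\vec{c}_k}\leq 1$, the elementary ``one term bounded by the whole sum'' inequality yields
\begin{equation}
1-\bar{F}_{k,\vec{c}_k}(\tilde{\mathcal{P}}^{k}) \;\leq\; |\mathsf{Cliff}|^{k}\bigl(1-\bar{\bar{F}}_k(\tilde{\mathcal{P}}^{k})\bigr),
\end{equation}
which is precisely the step that produces the multiplicative prefactor $|\mathsf{Cliff}|^{k}$ in the theorem.

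Finally, I would invoke Lemma~\ref{LEM:doubleavgfid_expected} together with a one-sided Hoeffding estimate (the unilateral analogue of Eq.~\eqref{EQ:hoeffding_doubleavg}) to replace $\bar{\bar{F}}_k$ by the observable $R_k$: with probability at least $1-e^{-2n_k\epsilon_k^2}$,
\begin{equation}
\bar{\bar{F}}_k(\tilde{\mathcal{P}}^{k}) \;\geq\; R_k - \epsilon_k.
\end{equation}
The device-stability assumption stated in Section~\ref{sec:thetest} enters at exactly this point, since it lets us identify the noisy memories $\tilde{M}^\mathcal{T}_j$ and gates $\tilde{C}_j$ probed during Test~\ref{PROT:teleport} with those appearing inside $\tilde{\mathcal{P}}^{k}$. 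Chaining the three displayed inequalities back into Eq.~\eqref{EQ:diamond_dist_prot} then produces the claimed bound~\eqref{EQ:bound_protocols} with the quoted confidence.

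The step that deserves the most scrutiny is the worst-case estimate in the middle paragraph: it is tight only when the infidelity is uniform across gate sequences, so the prefactor $|\mathsf{Cliff}|^{k}$ is generically very loose. Sharpening it would require extra control over the spread of $\bar{F}_{k,\vec{c}_k}$ across sequences (for instance under a Pauli/depolarising noise model or by exploiting additional structure of $\tilde{\mathcal{P}}^{k}$), but the universal statement of the theorem relies only on the coarse estimate that Test~\ref{PROT:teleport} supplies directly, and it is precisely this coarseness that fixes the form of the bound in~\eqref{EQ:bound_protocols}.
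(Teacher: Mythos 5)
Your route is essentially the paper's own: start from Eq.~\eqref{EQ:diamond_dist_prot}, convert the single-string infidelity into the double-averaged one at the cost of a factor $|\mathsf{Cliff}|^k$ via a ``one term versus the average'' estimate, identify $\bar{\bar{F}}_k(\tilde{\mathcal{P}}^{k})$ with $\bar{\bar{F}}_k(\tilde{\mathcal{T}}_k)$ using device stability and $M_j^\mathcal{T}\equiv M_j\circ\mathcal{E}_j$, and finish with a one-sided Hoeffding bound relating $\bar{\bar{F}}_k$ to $R_k$ (Lemma~\ref{LEM:doubleavgfid_expected}). Those steps are correct.

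However, there is a genuine gap in the middle step. You establish
\begin{equation}
1-\bar{F}_{k,\vec{c}_k}(\tilde{\mathcal{P}}^{k}) \;\leq\; |\mathsf{Cliff}|^{k}\bigl(1-\bar{\bar{F}}_k(\tilde{\mathcal{P}}^{k})\bigr)
\end{equation}
only for gate strings $\vec{c}_k$ drawn from the Clifford group, because only those strings occur as terms in the discrete sum $|\mathsf{Cliff}|^{-k}\sum_{\vec{c}_k}\bar{F}_{k,\vec{c}_k}$. Theorem~\ref{THM:bound_protocols} is stated for all $k$-round protocols of Def.~\ref{DEF:class_of_protocols}, whose gates $G_j\in\mathsf{G}$ are arbitrary unitaries and need not be Clifford; for such a string $\vec{g}_k$, the fidelity $\bar{F}_{k,\vec{g}_k}$ is simply not one of the summands, so the ``single term bounded by the whole sum'' inequality does not apply as written, and your argument covers only the special case of Clifford circuits. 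The paper closes exactly this hole (Lemma~\ref{APP:LEM:protocol_double_avg_fid}): left-translating each copy of the Clifford group by a fixed unitary $\mathcal{U}_j$ preserves the unitary 2-design property, hence the translated discrete average still equals $\bar{\bar{F}}_k$, and the $\mathcal{U}_j$ can be chosen so that the translated set contains the minimizing (equivalently, the actual) gate string; this yields $1-\min_{\vec{g}_k}\bar{F}_{\vec{g}_k}(\tilde{\mathcal{P}}^{k})\leq|\mathsf{Cliff}|^{k}\bigl(1-\bar{\bar{F}}_k(\tilde{\mathcal{P}}^{k})\bigr)$, valid for every gate string. Adding that rotation argument to your proposal makes it coincide with the paper's proof; without it, the claimed bound is not justified for non-Clifford protocols, which is precisely the generality the theorem (and the later discussion of bounding arbitrary unitary gates) relies on.
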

\begin{proof}[Idea of the proof]

To prove the theorem, one first needs to observe that the double-averaged fidelity, $\bar{\bar{F}}(\tilde{\mathcal{P}}^{k})$, can be lower-bounded by $\bar{F}_{\vec{g}_k}(\tilde{\mathcal{P}}^{k})$ minimized over all possible strings of gates $\vec{g}_\kt$, see App. \ref{APP:other_proofs} for details.

Moreover we have that $\bar{\bar{F}}_k(\tilde{\mathcal{P}}^{k}) = \bar{\bar{F}}_{\kt=k}(\tilde{\mathcal{T}}^{\kt= k})$. It follows from the fact that averaging over the Clifford group is equivalent to averaging over the entire unitary group, since the Clifford group forms a 2-design. Furthermore, the equality is possible, since we have put $M_j^\mathcal{T} \equiv M_j \circ \mathcal{E}_j$, and $M_j^\mathcal{T}$ encompasses operations associated with sending (in the test -- teleporting) and storing the qubit. Combining the above with Eqs. \eqref{EQ:doubleavgfid_expected} and  \eqref{EQ:diamond_dist_prot} yields the desired result.
\end{proof}

Finally, observe the above results can be straightforwardly generalized to bound the performance of protocols $\tilde{\mathcal{P}}^{K}$ for depth $K>k$. Since the teleportation-based ping-pong test is performed for all $1\leq \kt \leq k$, we can define a set $S$ such that  $\sum_{\kappa\in S} \kappa = K$. Then ${\tilde{\mathcal{P}}}^{K} = \bigcirc_{\kappa \in S}~  \tilde{{\mathcal{P}}}^{\kappa}$. Using the triangle inequality for the diamond distance, Thm. \ref{THM:bound_protocols} can be, therefore, rewritten as  
\begin{align}
\parallel \tilde{\mathcal{P}}^{K} - \mathcal{P}^{K} \parallel_\diamond \leq 2 \sqrt{6} \sum_{\kappa\in S} \sqrt{ |\mathsf{Cliff}|^\kappa\left(1-R_\kt + \epsilon_\kt\right)},
\end{align}
where $\epsilon_\kt$ is given by Eq. \eqref{EQ:hoeffding_doubleavg}.

\subsection{Simulated results} \label{SUBSEC:simulatedresults}
To gain intuition on how the test performs in this section we consider a few numerical examples. First, we discuss the implications of the consistency check, Thm. \ref{THM:inconsistency_check} and articulate the relation between the average fidelity of individual devices and the maximal depth of the test $k$. Second, we discuss the performance of the test under common noise models, depolarizing and dephasing noise. Finally, we comment on bounding the noisy protocols $\tilde{\mathcal{P}}^{k}$ based on numerical results from the teleportation-based ping-pong test.

Assume a test of maximum depth $k=2$, where we teleport a single qubit state at most two times between $A$ and $B$. Moreover, for simplicity say that $A$ and $B$ have access to memories and gates of equal fidelities, $r_{M^\mathcal{T}_j} = r_{C_j} = r$. 
\vl{Observe that the higher depth of the test $\kt$, i.e. the more devices one is testing, the higher individual fidelities should be, see Figure \ref{FIG:stuff_k}.}
\vl{Finally, note that the bound used for consistency check \eqref{EQ:bound_incons} was derived for a generic noise model and  it was shown to be tight~\cite{Dugas2016}. This means that if one does not have any additional knowledge about the noise present in the devices then the results presented here cannot be further improved. }

Let us now look at two specific noise models. Namely, let us model memories and gates to be \textit{(i)} single-qubit depolarizing channels, i.e. $\mathcal{D}(\rho) = p\rho + (1-p)\mathds{1}/2$ and \textit{(ii)} single-qubit dephasing channels, i.e. $\mathcal{F}(\rho) = q\rho + (1-q)(Z\rho Z^\dagger)/2$, where $Z$ is the Pauli $Z$ gate. Again, in these two cases let us fix the average fidelity estimate of individual devices $r$. Fig.~\ref{FIG:Simulation} presents the simulated behavior of the test as a function of individual estimates $r$ in the two cases. Observe that the test performs according to intuitive expectations -- if the noise is modeled as dephasing, the average fidelity of the test is higher than in the case of depolarizing noise, since the dephasing channel subjects any input state only to the $Z$ component of the Pauli noise, whereas depolarizing channel to all $X$, $Y$ and $Z$ components. Therefore, we expect ``more'' noise when the state is subjected to the depolarizing noise.

\begin{figure}
\begin{minipage}{0.6\linewidth}
\centering 
\includegraphics[width=\linewidth]{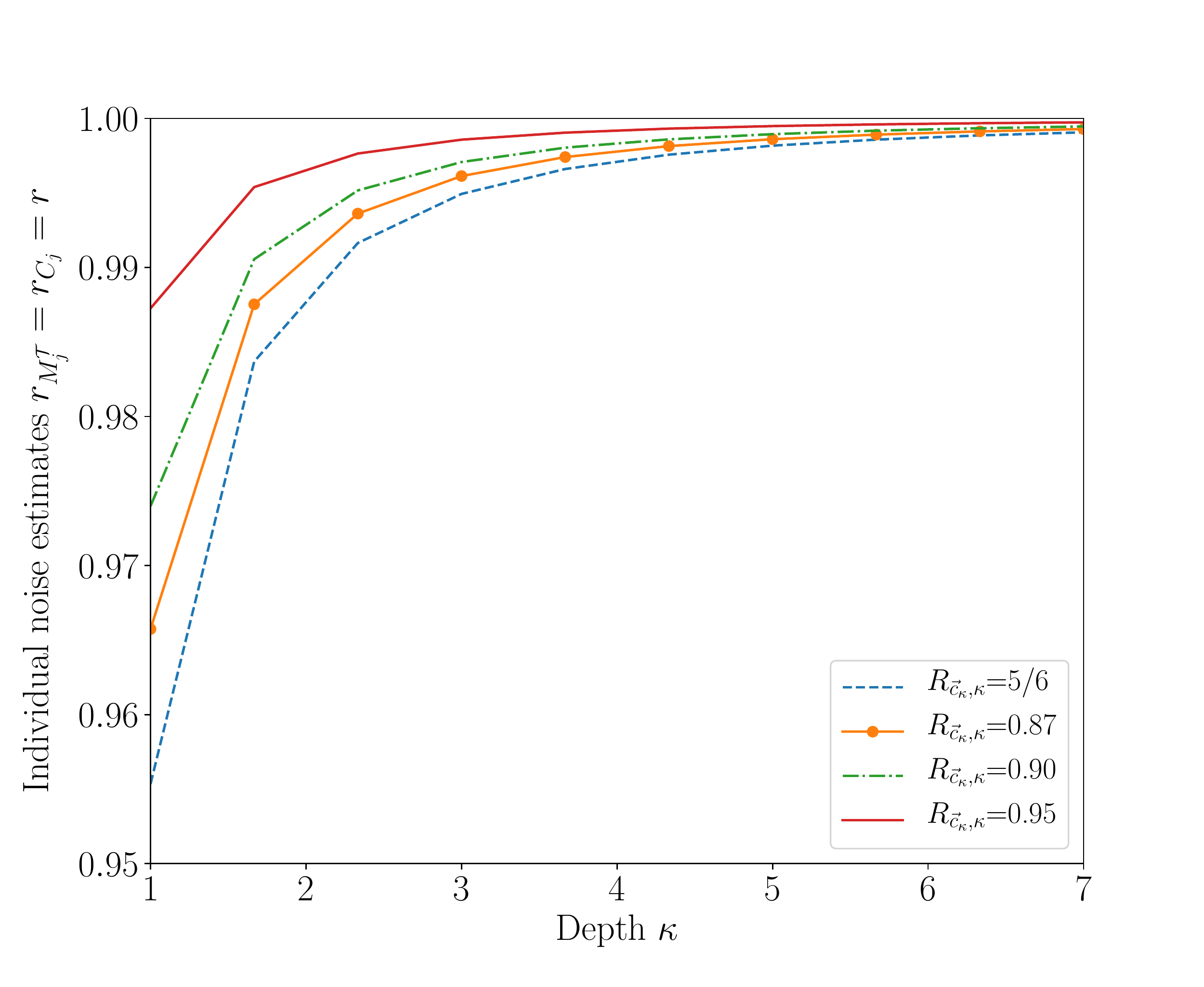}
\caption{Minimum average fidelity $r$  of individual operations (individual noise estimates) as a function of fixed depth $\kt$ for different winning rates $R_{\vec{c}_\kt,\kt}$. The plot shows the bound derived in Eq. \eqref{EQ:bound_incons}.}\label{FIG:stuff_k}
\end{minipage}
\end{figure}

\begin{figure}[!h]    
\centering
    \begin{minipage}{.45\textwidth}
        \centering
        \includegraphics[width=\textwidth]{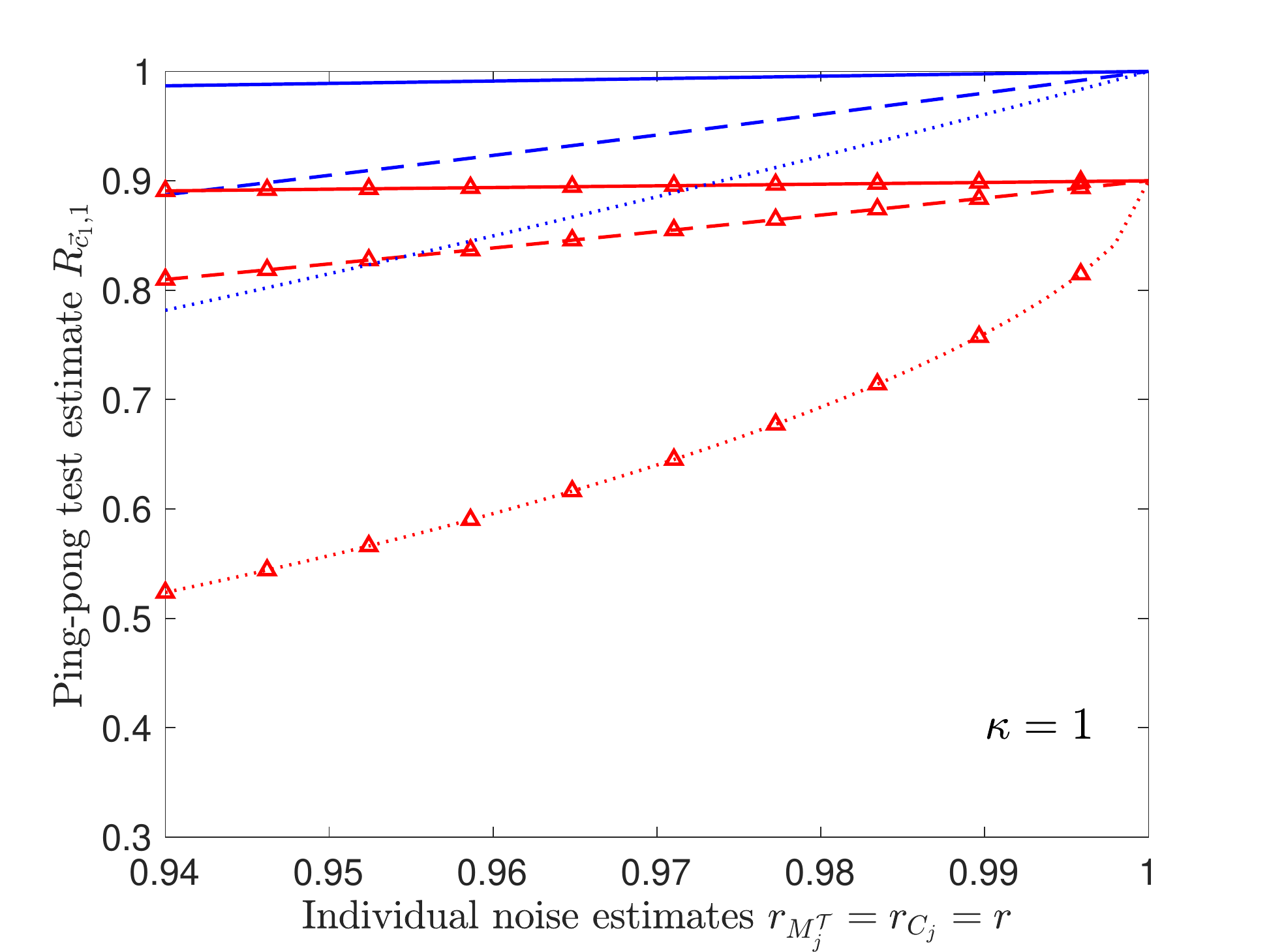}
        \label{fig:prob1_6_2}
    \end{minipage}
    \begin{minipage}{0.45\textwidth}
        \centering
        \includegraphics[width=\textwidth]{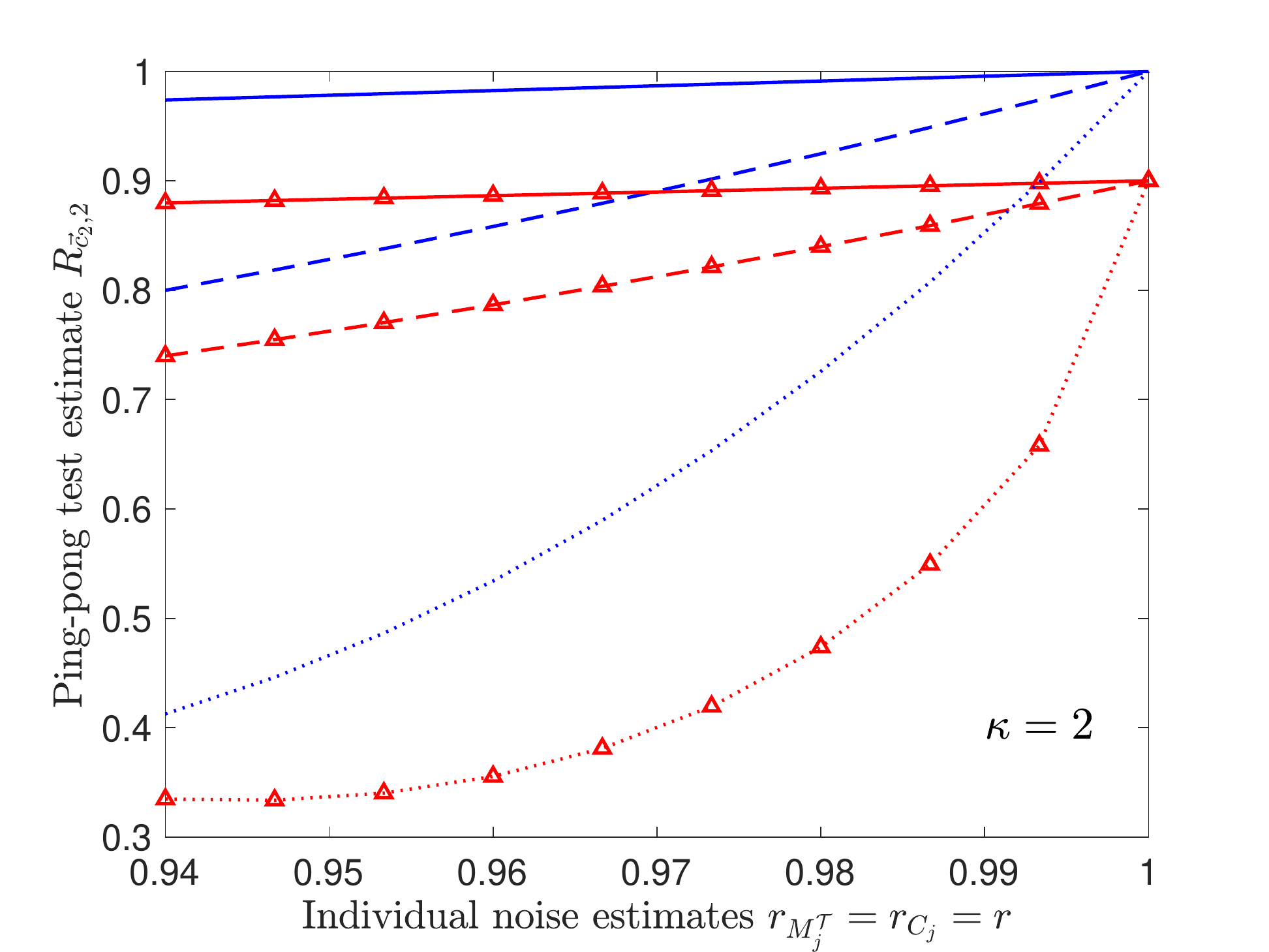}
        \label{fig:prob1_6_1}
    \end{minipage}
\caption{Average fidelity of the test as a function of the average fidelity of individual devices $r$, left plot for $\kt = 1$, right plot for $\kt=2$. Solid lines represent simulation of the test with the dephasing channel $\mathcal{D}$, dashed lines the simulation of the test with the depolarizing channel $\mathcal{F}$, and dotted lines the value of the bound \eqref{EQ:bound_incons}. No markers (blue color) correspond to the case where the input state is perfect, whereas the triangle markers (red color) to the case where the input state is dephased to initial fidelity 0.9.}\label{FIG:Simulation}
\end{figure}

Although in our network model we assume that the state preparation is perfect, it is interesting to see the behavior of the test once imperfect states are used. Fig.~\ref{FIG:Simulation} shows a result of simulation of the test when the initial state is submitted to a small dephasing noise, such that fidelity of the input state is 0.9. Note that if one has access to the average fidelity estimate of the noisy channel acting on the initial state, then one can use it in the consistency check \eqref{EQ:bound_incons}, simply treating the noise of the state as an additional channel in the protocol.

Let us also comment on the bound from Thm. \ref{THM:bound_protocols}. Already for a single qubit one obtains a constant prefactor of $2\sqrt{d(d+1)} \approx 4.9$. In addition to that, bound \eqref{EQ:bound_protocols} contains a factor associated with the size of the Clifford group -- for a single qubit $|\mathsf{Cliff}| = 24$. If one considers protocols of maximum depth $k=2$ then to obtain a non-trivial bound on the behavior of protocols in the class, the estimate of double-averaged fidelity must be of order $R_\kt = 1-10^{-5}$. This puts a very high precision requirement on double-averaged fidelity and, consequently, on individual devices. 

As an example consider \vl{the} quantum gambling (QG) protocol~\cite{Goldenberg1999}. In the protocol, $A$ chooses one of the states $\{\ket{0_z}, \ket{0_x}\}$ and sends it to $B$. After receiving the state $B$ stores the state and communicates classically his guess on the state sent by $A$. $A$ upon receiving the classical message from $B$, communicates back whether $B$ won or lost. After this round of communication $B$ measures the state either in $Z$ basis or $X$ basis. Let the protocol be described with a map $\mathcal{P}_{QG}$ which consists of local operations on the state (except measurement and state preparation, as before). Then $\mathcal{P}_{QG}$ consists of $k = 2$ rounds of communication during which $B$ has to store the state. Assume that in the protocol quantum memory is modeled as a depolarizing channel with fidelity \vl{$1-10^{-5}$}. Then explicit evaluation of the diamond distance $\parallel \tilde{\mathcal{P}}_{QG} - \mathcal{P}_{QG} \parallel_\diamond$ yields value $6 \cdot 10^{-5}$. On the other hand if one uses a two-round test to bound the behavior of the protocol, without explicit \textit{a priori} knowledge about the noise model of the memories then the bound from Thm. \ref{THM:bound_protocols} has the value 0.7436. \vl{However, note that in the quantum gambling protocol one does not perform any gates. Using this explicit knowledge about the protocol one could in principle tailor a ping-pong teleportation-based test without any gates. In this case, there would be no need to average over gates and therefore, the bound from Thm. \ref{THM:bound_protocols} would not carry the $|\mathsf{Cliff}|^k$ term. Consequently, the bound could be improved to value 0.0310.}

\subsection{More noise}\label{SEC:more_imperf}
In our network model we have assumed that \vl{state preparation, measurement}, sending qubits as well as preparing a EPR pair can be done perfectly. In particular, this implies that in our test teleportation is carried out perfectly. However, the test can still be performed without major changes if one wishes to take into account noisy teleportation. 

We consider two main noise sources arising in teleportation -- noise coming from performing imperfect Bell measurement and recovery operation, and noise originating from creation of a EPR pair. In App.~\ref{APP:absorbing_noise} we show that in a single round $j$ of our test both of these noise types can be absorbed into the noise coming from the memory $\tilde{M}^\mathcal{T}_j$, for all $j$. For the former noise source we assume a noise model where the imperfections follow the Bell measurement but precede recovery operation. For the latter noise source, we assume that noise is local for each half of the EPR pair and that it can be modeled as mixed-unitary noise. That is, each half of the EPR pair is subjected to $N(\cdot) = \sum_l p_l U_l^\dagger(\cdot)U_l$, where $U_l$ is a unitary operation, and $p_l$ is a probability. Then all the teleportation noise can be included in the noise of the memory and we can carry out the test as described above, i.e. by sending qubits via perfect teleportation.

\vl{Similarly to the analysis outlined in the previous paragraph, we can treat the noise of the state preparation as if it arose in the teleportation. Indeed, one can absorb the noise in the initial state similarly to the analysis in App. \ref{APP:absorbing_noise}. Note that in Fig.~\ref{FIG:Simulation} we indicate what one might expect from the test if the initial state is noisy. As for the noise in the final measurement, if we consider that the noisy measurement is described by a noise map $N$ followed by a perfect measurement, then $N$ can be treated as another noisy memory applied to the state before measuring. In this case, the analysis carried out in Lemma \ref{APP:LEM:cliff_2design} of App. \ref{APP:SEC:2designs} still holds.}

Finally, we remark that our test can be extended onto multi-qubit settings, where the number of qubits in the $k$-round protocol is $Q$. For a detailed description we refer the reader to App. \ref{APP:SEC:q_qubit_ext}.

\section{Conclusions and outlook}
In this work we considered the problem of certifying that a quantum network achieves the ability to perform a subset of protocols within a certain stage of development, i.e. a stage called quantum memory network. We designed the first testing protocol, which certifies that nodes have the capability to control and send qubits around the network $k$ times. We provided completeness and soundness statements for our protocol and expressed them in the interactive proof language. Moreover, in an honest implementation, we demonstrated that passing our test allows us to estimate statistical quantities about the devices used in the test and conclude about the performance of other $k$-round protocols in a quantum network.

An important question is how our estimate of performance for the class of multi-round protocols can be improved. Note that in our simple analysis we bound a very general class of protocols using a single test -- we bound the behavior of any unitary gate in terms of behavior of a small subset of gates. Therefore, it is not surprising that there must exist a trade-off between universality of the protocols and the precision of estimating their performance. One improvement could result from designing tests for a more specific  
(and therefore smaller) class of protocols. Alternatively, tailoring tests using additional knowledge of the underlying noise in a quantum network could improve the bound on the performance of $k$-round protocols.

Furthermore, as mentioned before, our test does not certify that any universal gate can be implemented. Due to the mathematical structures of unitary designs that we used, we can only make a statement about implementability of the gates from the Clifford set or \vl{any gate set with 2-design properties}. It is, therefore, an open problem how to test a quantum memory in the presence of the set powerful enough to generate any unitary operation. Such a universal set is, for example, a Clifford set extended with a $T$ gate~\cite{Nebe2001,Boykin1999}.

\section{Acknowledgments}
We thank J. Helsen, and G. Murta for inspiring discussions and useful comments on this work. We also thank B. Dirkse, T. Coopmans, M. Steudtner and K. Goodenough for feedback on the manuscript. This work was supported by STW Netherlands, NWO VIDI grant, ERC Starting grant and NWO Zwaartekracht QSC.

%

\appendix

\newpage

In the following we present technical details of our work. We first provide mathematical preliminaries necessary for our further considerations in App.~\ref{APP:SEC:preliminaries}. Then, in App.~\ref{APP:test_description} we give a detailed mathematical description of the general ping-pong test, Test \ref{PROT:generic}, and the teleportation-based ping-pong test, Test \ref{PROT:teleport}. In App.~\ref{APP:SEC:teleport_memory} we justify why in the teleportation-based ping-pong test, it is possible to absorb the (possibly noisy) teleportation channel into a memory $M_j^\mathcal{T}$. Next, we discuss 2-design properties of sets of Pauli states and Clifford gates in App.~\ref{APP:SEC:2designs}. In App.~\ref{APP:corr_sound} we prove completeness and soundness statements of our Test \ref{PROT:teleport}. Then, in App.~\ref{APP:other_proofs} we give proofs of statements discussed in the estimation view of our test. Finally, we discuss how to extend our results to $Q$-qubit protocols in App.~\ref{APP:SEC:q_qubit_ext}.

\begin{widetext}
\section{Preliminaries}\label{APP:SEC:preliminaries}

Communication between nodes of a quantum network can be described by quantum channels. A quantum channel can be described by a completely positive trace-preserving (CPTP) linear map $\Lambda: \mathds{D}(\mathcal{H}) \rightarrow \mathds{D}(\mathcal{H})$, where $\mathds{D}(\mathcal{H})$ denotes the space of density operators acting on Hilbert space $\mathcal{H}$. In a realistic setup, quantum channels are not perfect (or ideal) and instead of applying a perfect channel $\Lambda$ one applies its noisy counterpart $\tilde{\Lambda}$. If the perfect $\Lambda$ is unitary, then without loss of generality, a noisy channel $\tilde{\Lambda}$ can be written as a noise map $N$ followed by a perfect channel $\Lambda$, i.e. $\tilde{\Lambda} = \Lambda\circ N$. A sequence of $n$ operations can be represented as a composition of $n$ maps, $\tilde{\Lambda}_n \circ ... \circ \tilde{\Lambda}_1$.

One can quantify the difference between a noisy channel and its perfect implementation using the average fidelity.
\begin{definition}[Average fidelity]
The average fidelity of the channel $\tilde{\Lambda}$ (to $\Lambda$) is defined as
\begin{align}
\bar{F}({\tilde{\Lambda}}) = \int \tn{d}\psi \Tr\left[ \tilde{\Lambda}(\ketbra{\psi}{\psi}) ~ \Lambda(\ketbra{\psi}{\psi}) \right],
\end{align}
where $\tn{d}\psi$ is the Haar measure on pure states.
\end{definition}

Average fidelity is a quantity which can be accessed empirically and as such it is widely used as a parameter estimating the quality of a quantum channel. One cannot hope, however, to empirically average over the continuum of all pure states. Realistically, to access average fidelity one can use the properties of so called \textit{quantum state designs}. Intuitively, a quantum design is a probability distribution over pure states, which replicates the properties of the Haar averaging over the entire space of pure states.
\begin{definition}[Projective $t$-design] 
\label{DEF:2design}
A projective $t$-design is a distribution $\{ q_\psi, \psi \}$ over some finite set of states such that
\begin{equation}
\sum_{\psi} q_\psi \ketbra{\psi}{\psi}^{\otimes t} = \int \tn{d}\psi ~\ketbra{\psi}{\psi}^{\otimes t}\,.
\end{equation}
\end{definition}
An example of a projective 2-design for qubits is given by a set of six Pauli eigenstates, $\mathsf{X}$ chosen with equal probability $\frac{1}{6}$. A similar definition can be used when talking about averaging over the unitary group $U(d)$ of dimension $d$, see~\cite{Roy2009} for details. 
\begin{definition}[Unitary $2$-design]\label{APP:DEF:unit2design}
A set  $U(d)$ of unitary matrices is 2-design if  for any quantum channel $\Lambda$ holds that \cite{Gross2007}
\begin{align}
\frac{1}{|\mathcal{Y}|}\sum_{U_l \in \mathcal{Y}} U_l^\dagger \Lambda (U_l\rho U_l^\dagger) U_l = \int \tn{d}U~ U^\dagger \Lambda (U\rho U^\dagger) U
\end{align}
where $\tn{d}U$ denotes the Haar measure on $U(d)$. An example of a $2$-design for a unitary group $U(d)$ is the Clifford group $\mathsf{Cliff}(d)$ with uniform probability of each element.
\end{definition}

Another useful figure of merit for channels is the diamond distance~\cite{Watrous2018}. 
\begin{definition}[Diamond distance]
The diamond distance between two operators, $\tilde{\Lambda}$ and ${\Lambda}$, is defined through a distance measure on the space of density operators, maximized over all density operators $\rho$,
\begin{align} 
\parallel \tilde{\Lambda} - \Lambda \parallel_{\diamond} = \sup_{\rho } \parallel \tilde{\Lambda}\otimes \mathds{1}(\rho) - \Lambda\otimes \mathds{1}(\rho) \parallel_{1},
\end{align}
where $\parallel \cdot \parallel_1$ is the trace distance. The operational meaning behind the diamond distance definition is that it quantifies the worst-case distinguishability of any two quantum channels when one is given access to entanglement with an auxiliary system.
\end{definition}
From the properties of the diamond distance it follows that,
\begin{align} \label{EQ:diamond_sum_maps}
\parallel \tilde{\Lambda}_N \circ ... \circ \tilde{\Lambda}_1 - \Lambda_N \circ ... \circ \Lambda_1 \parallel_\diamond
 \leq \sum_{j = 1}^N \parallel  \tilde{\Lambda}_j - \Lambda_j \parallel_\diamond.
\end{align}
Note that such a relation cannot be easily found for average fidelity, since, unlike the diamond distance, fidelity is not a metric.

Although the diamond distance offers a convenient theoretical description, it is not as practical as average fidelity. But, since average fidelity and diamond distance both estimate the quality of a quantum channel, there exists a relation between the two. Indeed, it can be shown CIT that
\begin{align}\label{EQ:diamond_avg_fid}
\parallel \tilde{\Lambda} - \Lambda \parallel_{\diamond} \leq 2\sqrt{d(d+1)}\sqrt{1-\bar{F}({\tilde{\Lambda}}}),
\end{align}
where $d$ is the dimension of the underlying quantum system. 

While performing an experiment, for example estimating the average fidelity, one gathers empirical data. To compare the data with theoretical expectation one can use the Hoeffding's inequality~\cite{Hoeffding1963}. It states that the probability of the empirical mean and its expectation differing by more than $\epsilon$ is exponentially small in $n$.
\begin{lemma}[Hoeffding's inequality] If $v_1,...,v_n$ are independent random variables, $0\leq v_i \leq 1$, with empirical mean defined as
\begin{equation}
R = \frac{v}{n} = \frac{\sum_{i=1}^n v_i}{n},
\end{equation}
then an upper bound on the probability that the mean of random variables deviates from its expected value is given by
\begin{equation}
\tn{Pr}[|R - \mathds{E}[R]| \geq \epsilon] \leq 2e^{-2n\epsilon^2} .
\end{equation}
\end{lemma}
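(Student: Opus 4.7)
The plan is to prove the Hoeffding bound via the standard Chernoff/moment-generating-function approach, which is the textbook route for this inequality. First I would reduce the two-sided statement to a one-sided one: by the union bound, $\Pr[|R - \mathds{E}[R]| \geq \epsilon] \leq \Pr[R - \mathds{E}[R] \geq \epsilon] + \Pr[\mathds{E}[R] - R \geq \epsilon]$, so it suffices to bound each one-sided deviation by $e^{-2n\epsilon^2}$ and then add them to pick up the prefactor of $2$.

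Next I would apply the exponential Markov inequality. For any $t > 0$,
\begin{equation}
\Pr[R - \mathds{E}[R] \geq \epsilon] = \Pr\!\left[e^{t \sum_i (v_i - \mathds{E}[v_i])} \geq e^{t n \epsilon}\right] \leq e^{-tn\epsilon}\, \mathds{E}\!\left[e^{t \sum_i (v_i - \mathds{E}[v_i])}\right].
\end{equation}
Independence of the $v_i$ then factorizes the right-hand side into $\prod_i \mathds{E}[e^{t(v_i - \mathds{E}[v_i])}]$.

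The core of the argument is Hoeffding's lemma, which I would establish next: for a random variable $X$ with $a \leq X \leq b$ and $\mathds{E}[X] = 0$, one has $\mathds{E}[e^{tX}] \leq e^{t^2(b-a)^2/8}$. The standard proof uses convexity of $e^{tx}$ on $[a,b]$ to upper-bound it by the chord, takes expectations, and then analyzes the resulting function $\varphi(u) = -\alpha u + \ln(1-\alpha + \alpha e^u)$ via a second-order Taylor expansion, showing $\varphi''(u) \leq 1/4$ so that $\varphi(u) \leq u^2/8$. Applied to $X_i = v_i - \mathds{E}[v_i]$ with $b - a \leq 1$ (since $v_i \in [0,1]$), this gives $\mathds{E}[e^{t(v_i - \mathds{E}[v_i])}] \leq e^{t^2/8}$.

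Putting the pieces together yields $\Pr[R - \mathds{E}[R] \geq \epsilon] \leq e^{-tn\epsilon + nt^2/8}$, and optimizing the free parameter $t$ by setting $t = 4\epsilon$ gives $e^{-2n\epsilon^2}$. The symmetric argument for the lower tail is identical (replace $v_i$ by $-v_i$), and combining both with the union bound yields the claimed factor $2e^{-2n\epsilon^2}$. The only technically delicate step is the convexity/Taylor argument for Hoeffding's lemma itself; everything else is routine manipulation of the moment generating function. Since this is a well-known classical result, I would in practice simply cite~\cite{Hoeffding1963} rather than reproduce the computation in full.
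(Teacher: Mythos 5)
Your argument is the standard and correct Chernoff/moment-generating-function proof of Hoeffding's inequality, with the right parameter choice ($t=4\epsilon$ giving the exponent $-2n\epsilon^2$ for $v_i\in[0,1]$) and the union bound supplying the factor of $2$. The paper itself does not prove this lemma at all but simply cites Hoeffding's original work, exactly as you suggest doing in practice, so there is nothing to reconcile.
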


\begin{lemma}[Choi isomorphism]\label{APP:LEM:choi} For a map $\Omega: \mathcal{H}_{S_1} \rightarrow \mathcal{H}_{S_2}$ the following identity holds:
\begin{align}
&\Tr[\ketbra{\psi}{\psi}_{S_2}~\Omega_{S_1 \rightarrow S_2} (\ketbra{\psi}{\psi}_{S_1})] = \\
& = |S_1|\Tr[ \ketbra{\psi}{\psi}_{S_2} \otimes \ketbra{\psi}{\psi}_{S_1'} ~ \omega_{S_2S_1'}^{\Gamma} ],
\end{align}
where $\omega_{S_2S_1'}^{\Gamma}$ is a Choi state associated with the map $\Omega_{S_1 \rightarrow S_2}$ of the form $\omega_{S_2S_1'}^{\Gamma} = \Omega_{S_1 \rightarrow S_2} \otimes \mathds{1}_{S_1'} (\Phi)$, with $\Phi = \sum_{i,j}\ketbra{ii}{jj}/|S_1'|$ being the maximally entangled state. $\Gamma$ denotes partial transposition of $\omega$ on the system $S_1'$, and $|S_1|$ ($|S_1'|$) is a size of Hilbert space $\mathcal{H}_{S_1}$ ($\mathcal{H}_{S_1'}$).
\end{lemma}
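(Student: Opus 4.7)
The plan is to verify this standard Choi--Jamio\l{}kowski identity by direct computation in the computational basis, exploiting the explicit form of the maximally entangled state $\Phi$ together with the definition of the Choi state.

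First, I would expand the maximally entangled projector as $\Phi = \tfrac{1}{|S_1'|}\sum_{i,j}\ketbra{ii}{jj}_{S_1 S_1'}$ (this is the rank-one projector onto $\ket{\Phi^+} = \tfrac{1}{\sqrt{|S_1'|}}\sum_i \ket{ii}$, with $|S_1|=|S_1'|$ implicit so that $\Omega$ acts on the correct space) and, by linearity of $\Omega$, write
\begin{align}
\omega^{\Gamma}_{S_2 S_1'} \;=\; \frac{1}{|S_1'|}\sum_{i,j}\Omega(\ketbra{i}{j})\otimes \ketbra{i}{j}.
\end{align}
Substituting this into the right-hand side of the claim and using $\braket{\psi|i}\braket{j|\psi} = \overline{\psi_i}\psi_j$ (with $\psi_i := \braket{i|\psi}$), the trace over $S_1'$ collapses to a weighted sum of the operators $\Omega(\ketbra{i}{j})$. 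The explicit prefactor $|S_1|$ on the right-hand side is exactly what cancels the normalization $1/|S_1'|$ of $\Phi$.

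Next, I would collect the weighted sum back inside $\Omega$, the key observation being that the naive computation yields $\Omega\bigl(\sum_{ij}\overline{\psi_i}\psi_j\,\ketbra{i}{j}\bigr) = \Omega(\ketbra{\psi^*}{\psi^*})$ rather than $\Omega(\ketbra{\psi}{\psi})$ itself. This is precisely where the partial-transpose label $\Gamma$ does its work: applying the identity $\Tr[(A\otimes B^T)X] = \Tr[(A\otimes B)X^{\Gamma_{S_1'}}]$ lets us shift the transpose between the $S_1'$ factor $\ketbra{\psi}_{S_1'}$ and the Choi operator, so that $\ketbra{\psi^*}{\psi^*}$ gets converted into $\ketbra{\psi}{\psi}$. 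After this reshuffling the surviving expression is exactly $\Tr[\ketbra{\psi}_{S_2}\,\Omega(\ketbra{\psi}_{S_1})]$, as required.

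The only genuinely nontrivial step is therefore the complex-conjugation/transposition bookkeeping; everything else is elementary linear algebra. The cleanest writeup would first establish the Choi-inversion formula $\Omega(\rho) = |S_1|\,\Tr_{S_1'}[(\mathds{1}_{S_2}\otimes \rho^{T}_{S_1'})\,\omega^{\Gamma}_{S_2 S_1'}]$ via the basis expansion above, then specialise to $\rho = \ketbra{\psi}$, trace against $\ketbra{\psi}_{S_2}$, and finally invoke the partial-transpose identity to remove the transpose on $\rho$, delivering the statement of the lemma.
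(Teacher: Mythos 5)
The paper never actually proves this lemma: it appears in the preliminaries appendix as a standard statement of the Choi--Jamio\l{}kowski isomorphism and is simply invoked later (e.g.\ in the proof of Lemma~\ref{APP:LEM:paulistates_2design}), so there is no in-paper argument to compare yours against. Your direct basis-expansion verification is the natural way to supply the missing proof and is essentially correct: writing $\omega^{\Gamma}_{S_2S_1'} = \frac{1}{|S_1'|}\sum_{i,j}\Omega(\ketbra{i}{j})\otimes\ketbra{j}{i}$ and tracing against $\ketbra{\psi}{\psi}_{S_2}\otimes\ketbra{\psi}{\psi}_{S_1'}$ produces the weights $\psi_i\overline{\psi_j}$, which reassemble into $\Omega(\ketbra{\psi}{\psi})$ precisely because of the partial transpose (without $\Gamma$ one gets $\Omega(\ketbra{\psi^*}{\psi^*})$, as you correctly observe), and the prefactor $|S_1|$ cancels the $1/|S_1'|$ normalization of $\Phi$ only because $S_1'$ is a copy of $S_1$, which you also flag. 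The one slip is in your closing paragraph: the ``Choi-inversion formula'' you propose to establish, $\Omega(\rho) = |S_1|\,\Tr_{S_1'}[(\mathds{1}_{S_2}\otimes \rho^{T})\,\omega^{\Gamma}_{S_2S_1'}]$, applies the transpose twice and actually evaluates to $\Omega(\rho^{T})$; the correct intermediate statement is either $\Omega(\rho) = |S_1|\,\Tr_{S_1'}[(\mathds{1}_{S_2}\otimes \rho^{T})\,\omega_{S_2S_1'}]$ or, after shifting the transpose with the identity you cite, $\Omega(\rho) = |S_1|\,\Tr_{S_1'}[(\mathds{1}_{S_2}\otimes \rho)\,\omega^{\Gamma}_{S_2S_1'}]$, and specializing the latter to $\rho=\ketbra{\psi}{\psi}$ and tracing against $\ketbra{\psi}{\psi}_{S_2}$ yields the lemma directly. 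With that bookkeeping fixed, your argument is a complete and correct proof of the stated identity.
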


\section{The test -- detailed description}\label{APP:test_description}
In this section we provide a mathematical detailed description of our tests. First we consider a general case of the ping-pong test, Test \ref{PROT:generic}. Then we discuss the specific case of the teleportation-based ping-pong test, Test \ref{PROT:teleport}.

\subsection{General ping-pong test}

We describe a general test Test \ref{PROT:generic} as a CPTP map which we will denote ${\mathcal{S}}_{\kt}$. We first consider all the registers available to the nodes. We call $\kt$ the \textit{depth} of the test and assume that $\kt$ is a natural number upper-bounded by given $k$. The time for performing one round $j = 1,...,\kt$ of the protocol is equal for all the rounds, i.e. $\Delta t = t_{j+1} - t_j = t_\tn{send} + t_M + \ell $.

We will describe a {\em round} where node $A$ initiates sending of the state, which implies that $j$ is odd. However, this description is fully symmetric and for even $j$ it is enough to interchange registers of $A$ with registers of $B$. $A$ sends the qubit $\ketbra{\psi}_{A_j^\inn}$ to node $B$ using channel $\mathcal{E}_{A_j^\inn \rightarrow B_{j}}$ which takes time upper-bounded by $t_\tn{send}$. After time $t_\tn{send} + t_M$ the verifier chooses a gate according to distribution $p_\mathsf{G}$ and gives its classical description $\ketbra{g_j}_{G_j}$ to $B$. $B$ applies the quantum gate that corresponds to the description and that we describe with a CPTP map $G_{G_jB_j \rightarrow G_j}$. This takes time $\ell$. After this, at time $t_\tn{send} + t_M + \ell$ the verifier distributes a challenge bit $\ketbra{f_j}_{F_j}$ chosen uniformly at random ($0$ means 'teleport back', $1$ means 'output'). Depending on the challenge, $B$ applies $IN_{F_jB_j^\epr \rightarrow B_{j+1}^\inn}$ for $f_j = 0$ and $OUT_{F_jB_j \rightarrow B_{j}^\out}$ for $f_j = 1$.

\begin{definition} (Honest round $j$) \label{DEF:round_j}
Round $j$ of a general test, where provers are honest, can be described as
\begin{align} 
\hat{\Lambda}_{A_j^\inn F_jG_j \rightarrow B_{j+1}^\inn} =  IN_{F_jB_j^\epr \rightarrow B_{j+1}^\inn} \circ G_{G_jB_j \rightarrow B_j} \circ M_{B_j \rightarrow B_j} \circ \mathcal{E}_{A_j^\inn \rightarrow B_j}
\end{align} whenever the challenge bit is 0, or
\begin{align} 
\hat{\Lambda}_{A_j^\inn  F_jG_j \rightarrow B_{j}^\out} = OUT_{F_jB_j \rightarrow B_{j}^\out} \circ G_{G_jB_j \rightarrow B_j} \circ M_{B_j \rightarrow B_j} \circ \mathcal{E}_{A_j^\inn \rightarrow B_j}
\end{align} whenever the challenge bit is 1. 
\end{definition}

Note that challenge bits form a string of length ${\kt}$, $f_1\dots f_{\kt}$, in registers $F_1\dots F_{\kt}$, consisting of ${\kt}-1$ ones and a single zero bit on ${\kt}$-th position. We denote such a string by $\vec{f}$, i.e. $\vec{f}_{\kt} = \underbrace{1\dots 10}_{\kt}$. For simplicity we will use a short notation for multiple registers, e.g. $F_{[1,{\kt}]}\equiv F_1\dots F_{\kt}$. Similarly, we will denote by $\vec{g}_{\kt}$ a sequence of ${\kt}$ gates chosen by the verifier, each of the gates chosen at the time step defined above. By $G_{[1,{\kt}]}\equiv G_1\cdots G_{k}$ we denote $k$ registers for the choice of a gate.

\begin{definition} 
The ping-pong testing protocol of depth $\kt$ for a state $\ketbra{\psi}\in \mathcal{X}$, a sequence of gates $\vec{g}_{\kt} \in \mathcal{G}^\kt$ and a string of challenges $\vec{f}_{\kt}=1\dots10$ is defined as a CPTP map ${\mathcal{S}}_{\kt}$ such that
\begin{align}
{\mathcal{S}}_{\kt} & \equiv \mathcal{S}_{A_1^\inn  F_{[1,\kt]} G_{[1,\kt]} \rightarrow B_{\kt}^\out}  
\left(\ketbra{\psi}{\psi}_{A_1^\inn} \otimes \ketbra{\vec{f}_\kt}_{F_{[1,\kt]}} \otimes \ketbra{\vec{g}_\kt}_{G_{[1,\kt]}} \right) = \\
 & = \hat{\Lambda}_{A_j^\inn  F_jG_j \rightarrow B_{j}^\out} \circ \bigcirc_{j = 1}^{\kt-1} \hat{\Lambda}_{A_j^\inn F_jG_j \rightarrow B_{j+1}^\inn}  \left(\ketbra{\psi}{\psi}_{A_1^\inn} \otimes \ketbra{\vec{f}_{\kt}}_{F_{[1,{\kt}]}} \otimes \ketbra{\vec{g}_{\kt}}_{G_{[1,{\kt}]}} \right)
\end{align} 
\end{definition}

\subsection{Teleportation-based test}

We now describe a single round $j$ of the test when the quantum communication is performed with teleportation, $\mathcal{T}_\kt$ (see Test \ref{PROT:teleport}). Let node $A$ initiate the teleportation, i.e. $j$ is odd. A round $j$ starts with placing the state to be teleported $\ketbra{\psi}{\psi}_{A_j^\inn}$ in an input register of $A$, $A_j^\inn$. Nodes generate an EPR pair between each other, $\Phi^+_{A_j^\epr B_j^\epr}$. $A$, using the generated pair, teleports the state $\ketbra{\psi}{\psi}_{A_j^\inn}$ to $B$ by performing a Bell state measurement and sending a classical message $m\in M$. This action is described by a CPTP map $\mathcal{B}_{A_j^\inn A_j^\epr \rightarrow M}$. At the same time $B$ applies quantum memory to his half of the EPR pair while waiting for the classical message from $A$ to arrive, which takes time $t_M$. We describe the action of the memory with a CPTP map $M_{B_j^\epr \rightarrow B_j^\epr}$. Upon receiving classical message $B$ now applies a recovery map $\mathcal{R}_{M B_j^\epr \rightarrow B_j^\epr}$ to recover the state which $A$ teleported. Then, $B$ applies a random gate chosen by the verifier from the set of Clifford gates $\mathsf{Cliff}(2)$. This choice is announced by the verifier with a classical register $\ketbra{c_j}_{C_j}$. $B$ then applies the gate to his recovered state, which we describe with a CPTP map $C_{C_jB_j^\epr \rightarrow B_j^\epr}$. This operation takes time $\ell$. 

Now, at time $t_M + \ell$ the verifier announces a flag in register $F_j$ with the challenge bit, $0$: teleport back, $1$: output. The choice of the challenge is uniform and random. Depending on the challenge, $B$ applies $IN_{F_jB_j^\epr \rightarrow B_{j+1}^\inn}$ for $f_j = 0$ and $OUT_{F_jB_j^\epr \rightarrow B_{j}^\out}$ for $f_j = 1$. The whole round $j$ takes time $\Delta t = t_M + \ell$.

\begin{definition} (Round $j$ of the teleportation-based test) \label{DEF:round_j}
We define a $j$-th round of teleportation as a sequence of following maps
\begin{align} \label{DEF:teleport_map0}
\Lambda_{A_j^\inn A_j^\epr B_j^\epr F_jC_j \rightarrow B_{j+1}^\inn} =  IN_{F_jB_j^\epr \rightarrow B_{j+1}^\inn} \circ C_{C_jB_j^\epr \rightarrow B_j^\epr} \circ \mathcal{R}_{M B_j^\epr \rightarrow B_j^\epr} \circ M_{B_j^\epr \rightarrow B_j^\epr} \circ \mathcal{B}_{A_j^\inn A_j^\epr \rightarrow M}
\end{align} whenever the challenge bit is 0, or
\begin{align} \label{DEF:teleport_map1}
\Lambda_{A_j^\inn A_j^\epr B_j^\epr F_jC_j \rightarrow B_{j}^\out} = OUT_{F_jB_j^\epr \rightarrow B_{j}^\out} \circ C_{C_jB_j^\epr \rightarrow B_j^\epr} \circ \mathcal{R}_{M B_j^\epr \rightarrow B_j^\epr} \circ M_{B_j^\epr \rightarrow B_j^\epr} \circ \mathcal{B}_{A_j^\inn A_j^\epr \rightarrow M}
\end{align} whenever the challenge bit is 1. 
\end{definition}
Note that, for simplicity, in the main text we denote $M^\mathcal{T}_j = \mathcal{R}_{M B_j^\epr \rightarrow B_j^\epr} \circ M_{B_j^\epr \rightarrow B_j^\epr} \circ \mathcal{B}_{A_j^\inn A_j^\epr \rightarrow M} $.

Having defined a single round of a protocol we describe the ping-pong teleportation protocol of depth ${\kt}$. Such a protocol is simply a ${\kt}$-round teleportation, where first ${\kt}-1$ maps have form \eqref{DEF:teleport_map0} and the last map outputs the state and so has the form \eqref{DEF:teleport_map1}. 

\begin{definition} \label{DEF:protocol_depth_ki}
The teleportation-based ping-pong testing protocol of depth ${\kt}$ for a state $\ketbra{\psi}\in \mathsf{X}$, a sequence of gates $\vec{c}_{\kt} \in \mathsf{Cliff}^\kt$ and a string of challenges $\vec{f}_{\kt} = 1\dots 10$ is defined as a CPTP map $\mathcal{T}_\kt$ such that
\begin{align}
\mathcal{T}_\kt  \equiv \mathcal{T}_{A_1^\inn A_{[1,{\kt}]}^\epr B_{[1,{\kt}]}^\epr  F_{[1,{\kt}]} C_{[1,{\kt}]} \rightarrow B_{\kt}^\out}  = \Lambda_{A_{{\kt}}^\inn A_{{\kt}}^\epr B_{{\kt}}^\epr F_{\kt} C_{\kt} \rightarrow B_{{\kt}}^\out} \circ \bigcirc_{j = 1}^{{\kt}-1} \Lambda_{A_j^\inn A_j^\epr B_j^\epr F_j C_j \rightarrow B_{j+1}^\inn}
\end{align} 
applied to the input state
\begin{align}
\ketbra{\psi}{\psi}_{A_1^\inn} \otimes \bigotimes_{j=1}^{{\kt}} \Phi^+_{A_j^\epr B_j^\epr} \otimes \ketbra{\vec{f}_{\kt}}_{F_{[1,{\kt}]}} \otimes \ketbra{\vec{c}_{\kt}}_{C_{[1,{\kt}]}}
\end{align}
where $\Lambda$'s are defined as in Def. \ref{DEF:round_j}.
\end{definition}

\subsection{Measurements}

Upon receiving requested state from either $A$ or $B$, $V$ must check its consistency with the distributed state, as well as confirm applying desired gates. This can be achieved by projecting outcomes onto the state $C_{{\kt}}\circ \dots \circ C_{1} (\ketbra{\psi}{\psi})_{B_{\kt}^\out}$, which is the original state rotated with ${\kt}$ Clifford channels.

\begin{definition}[POVM elements for the node $V$]
\label{obs:povm}
Measurements performed by $V$ in the teleportation-based ping-pong test can be described by POVM elements,
\begin{align}
& \Pi_{\cmark}^{\kt} =  C_{\kt}\circ \dots \circ C_{1} (\ketbra{\psi}{\psi}) _{B_{\kt}^\out} \\
& \Pi_{\xmark}^{\kt} =  \mathds{1} - C_{\kt}\circ \dots \circ C_{1} (\ketbra{\psi}{\psi}) _{B_{\kt}^\out} 
\end{align}
for all ${\kt} = 1,\dots ,k$. ${\kt}$ denotes here the output register of the ${\kt}$-th party, depending on the parity either $A$ or $B$. 
\end{definition}

\subsection{Renaming teleportation channel}

Now that we have formalized the testing protocol in detail, we will justify using notation for a teleportation channel used in the main text. That is, we will show that  a teleportation channel with noisy memory acting on $\ketbra{\psi}\otimes \Phi^+$ can be viewed as a channel $M^\mathcal{T}_j$ acting only on $\ketbra{\psi}$.

Recall Definition \ref{DEF:round_j}. In a single round $j$ of the protocol $A$ performs a Bell measurement on the state $\ketbra{\psi}{\psi}_{A_j^\inn}$ and her part of EPR pair. This action is described  described by an operator $\mathcal{B}_{A_j^\inn A_j^\epr \rightarrow M}$, acting on two registers on $A$'s side and producing a classical message $m\in M$ which is then sent to $B$. The initial state $\ketbra{\psi}{\psi}_{A_j^\inn} \otimes \Phi^+_{A_j^\epr B_j^\epr}$ becomes 
\begin{equation}
\begin{split}
\mathcal{B}_{A_j^\inn A_j^\epr \rightarrow M} 
(\ketbra{\psi}{\psi}_{A_j^\inn} \otimes \Phi^+_{A_j^\epr B_j^\epr}) & = 
\Tr_{A_j^\inn A_j^\epr}\left[
\sum_{m\in M} p_m {\Psi'}_{A_j^\inn A_j^\epr, m} \otimes (U_m \ketbra{\psi}{\psi} U_m^\dagger)_{B_j^\epr} \otimes \ketbra{m}{m}_M,
\right] = \\
& = \sum_{m\in M} p_m  (U_m \ketbra{\psi}{\psi} U_m^\dagger)_{B_1^\epr} \otimes \ketbra{m}{m}_M
\end{split}
\end{equation}
where ${\Psi'}_{A_j^\inn A_j^\epr, m}$ is one of four Bell states resulting from the Bell measurement, $p_m \geq 0$, $\sum_mp_m=1$ is a probability of an outcome $m$ occurring. $(U_m \ketbra{\psi}{\psi} U_m^\dagger)_{B_j^\epr}$ is a state on $B$'s register after the Bell measurement. Note that this is simply a unitary applied to the initial state. $\ketbra{m}{m}$ is a classical message register, which $A$ sends to $B$ in order for him to correct the state. Next, $B$ applies a memory $M_{B_j^\epr \rightarrow B_j^\epr}$ to his share of the state:
\begin{align}
	M_{B_j^\epr \rightarrow B_j^\epr} \circ \mathcal{B}_{A_j^\inn A_j^\epr \rightarrow M}
(\ketbra{\psi}{\psi}_{A_j^\inn} \otimes \Phi^+_{A_j^\epr B_j^\epr}) & = 
M_{B_j^\epr \rightarrow B_j^\epr}\sum_{m\in M} p_m  (U_m \ketbra{\psi}{\psi} U_m^\dagger)_{B_j^\epr} \otimes \ketbra{m}{m}_M \\
	& = \sum_{m\in M} p_m M_{B_j^\epr \rightarrow B_j^\epr} (U_m \ketbra{\psi}{\psi} U_m^\dagger)_{B_j^\epr} \otimes \ketbra{m}{m}_M
\end{align}
Upon receiving a classical message $m$ $B$ undoes the unitary operations to recover the teleported state. This operation is described by a map $\mathcal{R}_{M B_j^\epr \rightarrow B_j^\epr}(\cdot) = \Tr_{M}[\sum_m U_m (\cdot) U_m^\dagger \otimes \ketbra{m}{m}_M]$,
\begin{equation}
\begin{split}
& \mathcal{R}_{M B_j^\epr \rightarrow B_j^\epr}\circ M_{B_j^\epr \rightarrow B_j^\epr} \circ \mathcal{B}_{A_j^\inn A_j^\epr \rightarrow M}(\ketbra{\psi}{\psi}_{A_j^\inn} \otimes \Phi^+_{A_j^\epr B_j^\epr}) \\
& = 
\Tr_M \left[
\sum_{m\in M} p_m U_m^\dagger M_{B_j^\epr \rightarrow B_j^\epr} (U_m \ketbra{\psi}{\psi} U_m^\dagger)_{B_j^\epr} U_m \otimes \ketbra{m}{m}_M
\right]\\
& = \sum_{m\in M} p_m U_m^\dagger M_{B_j^\epr \rightarrow B_j^\epr} (U_m \ketbra{\psi}{\psi} U_m^\dagger)_{B_j^\epr} U_m\\
& = \sum_{m\in M} p_m \mathcal{U}_m^\dagger \circ M_{B_j^\epr \rightarrow B_j^\epr} \circ \mathcal{U}_m (\ketbra{\psi}{\psi}) =: M^\mathcal{T}_j (\ketbra{\psi})
\end{split}
\end{equation}

Then, the test of depth $\kt$ can be described as in the main text
\begin{align}
{\mathcal{T}}^{\kt} & =  {\mathcal{T}}_{A_1^\inn A_{[1,\kt]}^\epr B_{[1,\kt]}^\epr  F_{[1,\kt]} C_{[1,\kt]} \rightarrow B_{\kt}^\out}  \left(\ketbra{\psi}{\psi}_{A_1^\inn} \otimes \bigotimes_{j=1}^{\kt} \Phi^+_{A_j^\epr B_j^\epr} \otimes \ketbra{\vec{f}_\kt}_{F_{[1,k]}} \otimes \ketbra{\vec{c}_{\kt}}_{C_{[1,k]}} \right) \\
& \equiv \bigcirc_{j = 1}^{\kt} {\Lambda}_j = \bigcirc_{j = 1}^{\kt}~ {C}_{j} \circ M^\mathcal{T}_{j} \left( \ketbra{\psi}{\psi} \right)
\end{align}

\section{Teleportation and quantum memory}\label{APP:SEC:teleport_memory}

\subsection{Absorbing teleportation noise into the memory} \label{APP:absorbing_noise}
As it is often done in the estimation literature for quantum computing, see e.g.~\cite{Emerson2005,Knill2008,Magesan2011} , we will model teleportation as a perfect operation followed (or preceded) by noise. This will allow us to consider teleportation as a perfect operation i.e.~with perfect Bell measurement and recovery operation as well as perfect EPR pair, and absorb all the associated noise into the quantum memory. 

\subsubsection{Noisy operations}
Assume a Bell state measurement is followed by a local noise, $\mathcal{B}_{A_j^\inn A_j^\epr \rightarrow M} \equiv N^\mathcal{B}\circ\mathcal{B}_{A_j^\inn A_j^\epr \rightarrow M}$. Assume further that the recovery operation is also noisy, but in this case the map is preceded by the noise, $\mathcal{R}_{M B_j^\epr \rightarrow B_j^\epr} \equiv \mathcal{R}_{M B_j^\epr \rightarrow B_j^\epr} \circ N^\mathcal{R}$. Looking at Definition \ref{DEF:round_j} it is now clear that one can redefine $M_{B_j^\epr \rightarrow B_j^\epr}' = N^\mathcal{R} \circ M_{B_j^\epr \rightarrow B_j^\epr} \circ N^\mathcal{B}$ and use $M'$ as a new memory channel.

\subsubsection{Noisy EPR pair}
The situation is similar for a noisy EPR pair. Assume a EPR pair is affected by local noise, i.e. teleportation occurs on a state $N_{A_j^\epr} \otimes N_{B_j^\epr} \left(\Phi^+_{A_j^\epr B_j^\epr}\right)$. Here the maps $N$ are mixed-unitary channels, i.e. have the form $N(\cdot)=\sum_l p_l U_l (\cdot) U_l^\dagger$ , with $p_l$ being a probability and $U_l$ a unitary. Note that this is not the most general type of noise, however the most common ones (e.g. depolarizing, dephasing) can be modeled this way.  
Moreover, note that for an EPR pair it holds that
\begin{align}
U_{A_j^\epr} \otimes U_{B_j^\epr} \left(\Phi^+_{A_j^\epr B_j^\epr}\right) = \tn{id}_{A_j^\epr} \otimes U_{B_j^\epr} U_{A_j^\epr}^T\left(\Phi^+_{A_j^\epr B_j^\epr}\right).
\end{align}
Therefore, using an explicit form of maps $N$ and the above statement, we can write,
\begin{align}
N_{A_j^\epr} \otimes N_{B_j^\epr} \left(\Phi^+_{A_j^\epr B_j^\epr}\right) &= \sum_{l,l'} p_{l,l'} U_{A_j^\epr} \otimes U_{B_j^\epr} \left(\Phi^+_{A_j^\epr B_j^\epr}\right) \\
&= \sum_{l,l'} p_{l,l'} \tn{id}_{A_j^\epr} \otimes U_{B_j^\epr}U_{A_j^\epr} ^T \left(\Phi^+_{A_j^\epr B_j^\epr}\right) \\
& =:  \tn{id}_{A_j^\epr} \otimes N'_{B_j^\epr} \left(\Phi^+_{A_j^\epr B_j^\epr}\right)
\end{align}
In particular, this means that noise acting on the EPR pair, which has the mixed-unitary form, can be absorbed into the memory map,
\begin{align}
&M_{B_j^\epr \rightarrow B_j^\epr}' \circ \mathcal{B}_{A_j^\inn A_j^\epr \rightarrow M} \circ \left( N_{A_j^\epr} \otimes N_{B_j^\epr}\right) \left(\Phi^+_{A_j^\epr B_j^\epr}\right) \\
& = M_{B_j^\epr \rightarrow B_j^\epr}' \circ \mathcal{B}_{A_j^\inn A_j^\epr \rightarrow M} \circ \left( \tn{id}_{A_j^\epr} \otimes N_{B_j^\epr}' \right) \left(\Phi^+_{A_j^\epr B_j^\epr}\right) \\
& = M_{B_j^\epr \rightarrow B_j^\epr}'\circ N_{B_j^\epr}' \circ \mathcal{B}_{A_j^\inn A_j^\epr \rightarrow M}    \left(\Phi^+_{A_j^\epr B_j^\epr}\right) \\
& \equiv M_{B_j^\epr \rightarrow B_j^\epr}''\circ \mathcal{B}_{A_j^\inn A_j^\epr \rightarrow M}    \left(\Phi^+_{A_j^\epr B_j^\epr}\right) 
\end{align}

\section{2-designs}\label{APP:SEC:2designs}

In this appendix we show that for the ping-pong test,  the average of the probability $p_{\cmark|\psi,\vec{c}_{\kt},{\kt}}$ over the six Pauli states is equal to its average over the whole state space according to the Haar measure. To do so, we use the fact that the uniform distribution over set X is a 2-design \cite{Roy2009} and $p_{\cmark|\psi,\vec{c}_{\kt},{\kt}}$ contains a polynomial of degree 2 in $\ket{\psi}$. 
Next, we prove a similar statement when averaging over the Clifford group.
\subsection{Pauli states}

\begin{lemma}\label{APP:LEM:paulistates_2design}
Averaging the probability of success for a single execution of Test \ref{PROT:teleport}, $p_{\cmark|\psi,\vec{c}_{\kt},{\kt}}$, over Pauli states is equal to averaging over all qubit states according to the Haar measure,

\begin{align}
\frac{1}{|\mathsf{X}|} \sum_{\psi \in \mathsf{X}} p_{\cmark|\psi,\vec{c}_{\kt},{\kt}} = \int \tn{d} \psi~ p_{\cmark|\psi,\vec{c}_{\kt},{\kt}}.
\end{align}

\end{lemma}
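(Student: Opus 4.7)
The plan is to exploit the fact that $p_{\cmark|\psi,\vec{c}_\kt,\kt}$ is a degree-$2$ polynomial in the matrix entries of $\ketbra{\psi}{\psi}$, so that the projective $2$-design property of the six Pauli eigenstates (Def.~\ref{DEF:2design}) converts the uniform discrete average into the Haar integral essentially for free. Concretely, I would first unpack the definitions. Recall $\mathcal{T}_\kt$ is a single-qubit CPTP map (the EPR pairs, classical messages and Clifford registers are all internal, and the challenge flag is fixed to output) and $\Pi^{\cmark}_\kt = U\ketbra{\psi}{\psi}U^\dagger$ with $U = C_\kt\cdots C_1$. Thus
\begin{align}
p_{\cmark|\psi,\vec{c}_\kt,\kt} = \Tr\!\left[\mathcal{T}_\kt(\ketbra{\psi}{\psi})\cdot U\ketbra{\psi}{\psi}U^\dagger\right],
\end{align}
which manifestly contains $\ketbra{\psi}{\psi}$ twice.

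Next I would put this into a form against $\ketbra{\psi}{\psi}^{\otimes 2}$. Fix a Kraus representation $\mathcal{T}_\kt(\rho) = \sum_k E_k\,\rho\,E_k^\dagger$. Using cyclicity of the trace and the elementary identity $\Tr[X\ketbra{\psi}{\psi}Y\ketbra{\psi}{\psi}] = \Tr[(X\otimes Y)(\ketbra{\psi}{\psi}\otimes\ketbra{\psi}{\psi})]$ (which follows from $\bra{\psi}X\ket{\psi}\bra{\psi}Y\ket{\psi} = \bra{\psi\psi}X\otimes Y\ket{\psi\psi}$), I can rewrite
\begin{align}
p_{\cmark|\psi,\vec{c}_\kt,\kt} = \Tr\!\left[M_{\vec{c}_\kt,\kt}\cdot\ketbra{\psi}{\psi}^{\otimes 2}\right],\qquad M_{\vec{c}_\kt,\kt} := \sum_k U^\dagger E_k \otimes E_k^\dagger U,
\end{align}
where $M_{\vec{c}_\kt,\kt}$ is an operator that depends on $\kt$ and the chosen gate string $\vec{c}_\kt$, but crucially \emph{not} on $\psi$.

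Finally, by linearity of the trace the averaging over $\psi$ passes onto the second factor, and invoking the 2-design property of $\mathsf{X}$ gives
\begin{align}
\frac{1}{|\mathsf{X}|}\sum_{\psi\in\mathsf{X}} p_{\cmark|\psi,\vec{c}_\kt,\kt}
= \Tr\!\left[M_{\vec{c}_\kt,\kt}\cdot \frac{1}{|\mathsf{X}|}\sum_{\psi\in\mathsf{X}}\ketbra{\psi}{\psi}^{\otimes 2}\right]
= \Tr\!\left[M_{\vec{c}_\kt,\kt}\cdot \int\!\tn{d}\psi\,\ketbra{\psi}{\psi}^{\otimes 2}\right],
\end{align}
and running the same manipulations in reverse recovers $\int\tn{d}\psi\,p_{\cmark|\psi,\vec{c}_\kt,\kt}$. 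The one genuinely non-trivial step is the rearrangement that absorbs the four appearances of $\ket{\psi}$ into a clean $\ketbra{\psi}{\psi}^{\otimes 2}$ acting against a $\psi$-independent operator; once that is in place the rest is a single application of the $2$-design identity.
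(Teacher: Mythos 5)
Your proof is correct, and it reaches the same essential conclusion by a slightly different mechanical route than the paper. The paper first absorbs the Clifford string using unitary invariance of the Hilbert--Schmidt inner product, then invokes the Choi--Jamiolkowski isomorphism (Lemma~\ref{APP:LEM:choi}) to rewrite $\Tr\!\left[\ketbra{\psi}{\psi}\,\Omega(\ketbra{\psi}{\psi})\right]$ as a trace of $\ketbra{\psi}{\psi}^{\otimes 2}$ against the (partially transposed) Choi state of $\left(\bigcirc_{j}C_j\right)^{-1}\circ\mathcal{T}_\kt$; you instead fix a Kraus representation of $\mathcal{T}_\kt$ and use the elementary identity $\Tr[X\ketbra{\psi}{\psi}Y\ketbra{\psi}{\psi}]=\Tr\!\left[(X\otimes Y)\,\ketbra{\psi}{\psi}^{\otimes 2}\right]$, arriving at a $\psi$-independent operator $M_{\vec{c}_\kt,\kt}=\sum_k U^\dagger E_k\otimes E_k^\dagger U$ (your algebra here checks out). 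Both routes end at the same point — a degree-$2$ expression in $\ketbra{\psi}{\psi}$ paired with a fixed operator — after which the projective $2$-design property of $\mathsf{X}$ does all the work. Your version is arguably the more elementary one, since it applies Definition~\ref{DEF:2design} literally and avoids introducing the Choi state and its partial transpose, while the paper's Choi-based phrasing dovetails with notation reused elsewhere in its appendices; both arguments are equally general, holding for an arbitrary (in particular noisy) CPTP realization of $\mathcal{T}_\kt$, which is needed when the lemma is invoked for $\tilde{\mathcal{T}}_\kt$ in Lemma~\ref{LEM:avgfid_expected}.
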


\begin{proof}[Proof of Lem. \ref{LEM:avgfid_expected}] We can write the left-hand side explicitly as,
\begin{align}
\frac{1}{|\mathsf{X}|} \sum_{\psi \in \mathsf{X}} p_{\cmark|\psi,\vec{c}_{\kt},{\kt}} & = \frac{1}{|\mathsf{X}|}\sum_\psi \Tr [{\mathcal{T}}_\kt(\ketbra{\psi}{\psi}_{A_1^{\inn}})\cdot \bigcirc_{j = 1}^{\kt} C_j (\ketbra{\psi}{\psi})_{B_{\kt}^\out}] 
\end{align}
where we explicitly write $A$ and $B$'s input and output registers. Let $X,Y \in \mathcal{L}(\mathcal{H})$ be linear operators over the Hilbert space. The inner product $\braket{X}{Y}:=\Tr[X^\dagger Y]$ is invariant $\forall U: \mathcal{L}(\mathcal{H}) \rightarrow \mathcal{L}(\mathcal{H})$, i.e. $\braket{U(X)}{U(Y)} = \Tr[(U(X))^\dagger \cdot U(Y)] = \Tr[X^\dagger \cdot Y] = \braket{X}{Y}$. Note that $\bigcirc_{j = 1}^{\kt} C_j$ is a unitary channel and therefore, we can write,
\begin{align}
\frac{1}{|\mathsf{X}|} \sum_{\psi \in \mathsf{X}} p_{\cmark|\psi,\vec{c}_{\kt},{\kt}} 
& = \frac{1}{|\mathsf{X}|}\sum_\psi \Tr [\left(\bigcirc_{j = 1}^{\kt} C_j\right)^{-1} \circ {\mathcal{T}}_\kt(\ketbra{\psi}{\psi}_{A_1^{\inn}})\cdot \left(\bigcirc_{j = 1}^{\kt} C_j\right)^{-1} \circ \bigcirc_{j = 1}^{\kt} C_j (\ketbra{\psi}{\psi})_{B_{\kt}^\out}] \\
& = \frac{1}{|\mathsf{X}|}\sum_\psi \Tr [\left(\bigcirc_{j = 1}^{\kt} C_j\right)^{-1} \circ {\mathcal{T}}_\kt(\ketbra{\psi}{\psi}_{A_1^{\inn}})\cdot \ketbra{\psi}{\psi}_{B_{\kt}^\out}] 
\end{align}
Now, using Choi-Jamiolkowski theorem, see Lem. \ref{APP:LEM:choi}, we can write 
\begin{align}
\frac{1}{|\mathsf{X}|} \sum_{\psi \in \mathsf{X}} p_{\cmark|\psi,\vec{c}_{\kt},{\kt}}  = \frac{1}{|\mathsf{X}|}\sum_\psi |A_1^{\inn'}| \Tr [\ketbra{\psi}{\psi}_{A_1^{\inn}}\otimes \ketbra{\psi}{\psi}_{B_{\kt}^\out} \omega^\Gamma_{A_1^{\inn'} B_{\kt}^\out}]
\end{align}
$\omega^\Gamma_{A_1^{\inn'} B_{\kt}^\out}$ is a Choi-Jamiolkowski state associated with the map $\left(\bigcirc_{j = 1}^{\kt} C_j\right)^{-1} \circ \tilde{\mathcal{T}}_\kt$. It is now clear that averaging is taken over a polynomial of degree 2 under the trace and we can use properties of a 2-design. Therefore,
\begin{align}
\frac{1}{|\mathsf{X}|} \sum_{\psi \in \mathsf{X}} p_{\cmark|\psi,\vec{c}_{\kt},{\kt}} & = \int \tn{d}\psi |A_1^{\inn'}| \Tr [\ketbra{\psi}{\psi}_{A_1^{\inn}}\otimes \ketbra{\psi}{\psi}_{B_{\kt}^\out} \omega^\Gamma_{A_1^{\inn'} B_{\kt}^\out}] \\
& = \int \tn{d}\psi
\Tr [{\mathcal{T}}_\kt(\ketbra{\psi}{\psi}_{A_1^{\inn}})\cdot \bigcirc_{j = 1}^{\kt} C_j (\ketbra{\psi}{\psi})_{B_{\kt}^\out}] \\
& = \int \tn{d}\psi~ p_{\cmark|\psi,\vec{c}_{\kt},{\kt}}
\end{align}
where we used Choi-Jamiolkowski isomorphism and properties of the trace again. We define $\bar{F}_{\vec{c}_{\kt},{\kt}} = \int \tn{d}\psi~ p_{\cmark|\psi,\vec{c}_{\kt},{kt}}$ as the average fidelity.
\end{proof}

\subsection{Clifford gates}

Now we will prove that averaging $p_{\cmark|\psi,\vec{c}_{\kt},{\kt}}$ over the Clifford set reproduces averaging over the whole unitary set taken according to the Haar measure.

\begin{lemma}\label{APP:LEM:cliff_2design}
Averaging the probability of success for a single execution of Test \ref{PROT:teleport}, $p_{\cmark|\psi,\vec{c}_{\kt},{\kt}}$, over Pauli states and over Clifford gates is equal to averaging over all qubit states and all 2-qubit unitary gates according to the Haar measure,

\begin{align}
\frac{1}{|\mathsf{X}|}  \sum_{\psi \in \mathsf{X}}\frac{1}{|\mathsf{Cliff}|^\kt} \sum_{\vec{c}_\kt} p_{\cmark|\psi,\vec{c}_{\kt},{\kt}} = \int \tn{d} \psi \int\tn{d} C_1 \dots \int\tn{d} C_\kt ~p_{\cmark|\psi,\vec{c}_{\kt},{\kt}}.
\end{align}
\end{lemma}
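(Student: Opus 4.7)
The plan is to reduce Lemma \ref{APP:LEM:cliff_2design} to two ingredients: (i) Lemma \ref{APP:LEM:paulistates_2design}, which already handles the state average, and (ii) the unitary 2-design property of the Clifford group (Definition \ref{APP:DEF:unit2design}), which will handle each gate average one round at a time. I would first show that for any fixed index $j$ and any fixed values of $\psi$ and $\{C_i\}_{i\neq j}$, the finite sum $\frac{1}{|\mathsf{Cliff}|}\sum_{C_j} p_{\cmark|\psi,\vec{c}_\kt,\kt}$ equals the Haar integral $\int \tn{d}U_j\, p_{\cmark|\psi,\vec{c}_\kt,\kt}\big|_{C_j\to U_j}$, then iterate over $j=1,\dots,\kt$, and finally apply Lemma \ref{APP:LEM:paulistates_2design} (whose proof carries over verbatim to general unitary sequences, since it only uses that $\bigcirc_i C_i$ is a unitary channel) to lift the discrete state sum to the Haar integral over $\ket{\psi}$.

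To establish the per-round claim, I would isolate the $C_j$-dependence in $p_{\cmark|\psi,\vec{c}_\kt,\kt} = \Tr\!\bigl[\mathcal{T}_\kt(\ketbra{\psi}{\psi})\,\Pi_\kt^\cmark\bigr]$. Writing $\sigma = \bigcirc_{i<j}(C_i\circ M^\mathcal{T}_i)(\ketbra{\psi}{\psi})$, $\rho = M^\mathcal{T}_j(\sigma)$, $\tau = \bigcirc_{i<j} C_i(\ketbra{\psi}{\psi})$, $\mathcal{A} = \bigcirc_{i>j}(C_i\circ M^\mathcal{T}_i)$, and $\mathcal{A}' = \bigcirc_{i>j} C_i$, the Hilbert--Schmidt adjoint of $\mathcal{A}$ together with cyclicity of the trace gives
\begin{align*}
p_{\cmark|\psi,\vec{c}_\kt,\kt} \;=\; \Tr\!\bigl[\mathcal{A}(C_j\rho C_j^\dagger)\cdot\mathcal{A}'(C_j\tau C_j^\dagger)\bigr] \;=\; \Tr\!\bigl[\rho\cdot C_j^\dagger\,\Omega(C_j\tau C_j^\dagger)\,C_j\bigr],
\end{align*}
with $\Omega := \mathcal{A}^\dagger\circ\mathcal{A}'$ a fixed CP map that does not depend on $C_j$. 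This is precisely the twirl form appearing under the sum in Definition \ref{APP:DEF:unit2design}, so the 2-design property of $\mathsf{Cliff}$ immediately yields
\begin{align*}
\frac{1}{|\mathsf{Cliff}|}\sum_{C_j\in\mathsf{Cliff}} p_{\cmark|\psi,\vec{c}_\kt,\kt} \;=\; \int \tn{d}U_j\; p_{\cmark|\psi,\vec{c}_\kt,\kt}\big|_{C_j\to U_j}.
\end{align*}

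Iterating this identity for $j = 1, 2, \dots, \kt$ (swapping each new finite sum with the Haar integrals produced by earlier steps via Fubini) converts the full discrete Clifford average into a $\kt$-fold Haar integral; the argument at each step is identical because the ``fixed'' Cliffords of later rounds are simply replaced by their Haar-integrated unitary counterparts, and the twirl structure in $C_j$ is preserved. Lemma \ref{APP:LEM:paulistates_2design} then upgrades the discrete sum over $\mathsf{X}$ to the Haar integral over single-qubit states, completing the proof. The main obstacle is the per-round rewriting: one must recognize that, although $C_j$ enters both the dynamical channel $\mathcal{T}_\kt$ and the measurement operator $\Pi_\kt^\cmark$, the two occurrences can be simultaneously grouped into the single twirl $C_j^\dagger\,\Omega(C_j\cdot C_j^\dagger)\,C_j$ by exploiting the adjoint of $\mathcal{A}$ and the pure-state form of $\tau$; once that algebraic identification is in place, the rest is a direct application of Definition \ref{APP:DEF:unit2design}.
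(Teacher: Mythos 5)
Your proof is correct, but it follows a genuinely different route from the paper's. The paper first moves all the inverse gates to the left of the trace, cancels the $C_\kt$-dependence outright, and then sequentially pulls each remaining Clifford sum inside to form twirls of the \emph{memory channels} $\sum_{C_j} C_j^\dagger \circ M_{j+1}^{\mathcal{T}} \circ C_j$; it uses the fact that a Clifford-twirled channel is depolarizing and hence commutes with all Cliffords to nest these twirls, handles the leftover untwirled $M_1^{\mathcal{T}}$ by inserting an extra gate $E$ via Haar invariance of the state average, converts all discrete twirls to Haar twirls at once using the 2-design property, and finally reverses every manipulation to recover $p_{\cmark|\psi,\vec{c}_\kt,\kt}$ with $E=C_\kt$. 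You instead isolate the dependence on a single gate $C_j$ by passing to the Hilbert--Schmidt adjoint of the downstream dynamics, obtaining the one-shot twirl form $\Tr\bigl[\rho\cdot C_j^\dagger\,\Omega(C_j\tau C_j^\dagger)\,C_j\bigr]$ with $\Omega=\mathcal{A}^\dagger\circ\mathcal{A}'$ fixed, apply the 2-design identity round by round, iterate via Fubini (correctly noting that the per-round argument only needs the other gates to be arbitrary fixed unitaries), and lift the state average at the end via the observation that Lemma~\ref{APP:LEM:paulistates_2design} holds verbatim for arbitrary unitary strings. Your route buys a cleaner, local argument that avoids both the commutation lemma for twirled channels and the reverse-engineering step; the paper's route buys the convenience of only ever twirling genuine channels ($M_j^{\mathcal{T}}$), exactly as Definition~\ref{APP:DEF:unit2design} is stated, and it produces as a by-product the nested-depolarizing form reused in the completeness proof. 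The one point you should note explicitly is that your $\Omega=\mathcal{A}^\dagger\circ\mathcal{A}'$ is completely positive and unital but not trace-preserving, so Definition~\ref{APP:DEF:unit2design} as written (for quantum channels) does not literally apply; this is harmless because the exact 2-design identity for the Clifford group is a statement about second moments and extends by linearity to arbitrary linear superoperators, but the extension deserves a sentence.
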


\begin{proof}
Just like in the previous lemma, let us first use cyclicity of the trace, 
\small
\begin{align}
\tn{LHS} & = \int \tn{d}\psi \frac{1}{|\mathsf{Cliff}|^\kt} \sum_{\vec{c}_\kt} ~\Tr [\left(\bigcirc_{j = 1}^{\kt} C_j\right)^{-1} \circ {\mathcal{T}}_\kt(\ketbra{\psi}{\psi})\cdot \ketbra{\psi}{\psi}] \\
& = \int \tn{d}\psi \frac{1}{|\mathsf{Cliff}|^\kt} \sum_{\vec{c}_\kt} ~\Tr [C_1^\dagger \circ \dots \circ C_\kt^\dagger \circ C_\kt \circ M_\kt^\mathcal{T}\circ C_{\kt-1} \circ M_{\kt-1}^\mathcal{T} \circ \dots \circ C_1 \circ M_1^\mathcal{T} (\ketbra{\psi}{\psi})\cdot \ketbra{\psi}{\psi}] \\
& = \int \tn{d}\psi \frac{1}{|\mathsf{Cliff}|^{\kt-1}} \sum_{C_1,...,C_{\kt-1}\in \mathsf{Cliff}} ~\Tr [C_1^\dagger \circ \dots \circ C_{\kt-1}^\dagger \circ M_\kt^\mathcal{T}\circ C_{\kt-1} \circ M_{\kt-1}^\mathcal{T} \circ \dots \circ C_1 \circ M_1^\mathcal{T} (\ketbra{\psi}{\psi})\cdot \ketbra{\psi}{\psi}] \\
& = \int \tn{d}\psi \frac{1}{|\mathsf{Cliff}|^{\kt-1}} \sum_{C_1,...,C_{\kt-2}\in \mathsf{Cliff}} ~\Tr [C_1^\dagger \circ \dots  \circ  C_{\kt-2}^\dagger
\left(\sum_{C_{\kt-1}\in \mathsf{Cliff}}  C_{\kt-1}^\dagger \circ M_\kt^\mathcal{T}\circ C_{\kt-1} \right)
\circ M_{\kt-1}^\mathcal{T} \circ \dots \circ C_1 \circ M_1^\mathcal{T} (\ketbra{\psi}{\psi})\cdot \ketbra{\psi}{\psi}] ,
\end{align}
\normalsize
where in the last step we pulled the summation over $\kt-1$ under the trace. Note that $\left(\sum_{C_{\kt-1}}  C_{\kt-1}^\dagger \circ M_\kt^\mathcal{T}\circ C_{\kt-1} \right)$ is an unnormalized twirl over $\mathsf{Cliff}$ and therefore it commutes with all Clifford gates  $C \in \mathsf{Cliff}$. By repeating pulling the summation under the trace, we can write,
\small
\begin{align}
\tn{LHS} & = \int \tn{d}\psi \frac{1}{|\mathsf{Cliff}|^{\kt-1}} \sum_{C_1,...,C_{\kt-3}\in \mathsf{Cliff}} ~\Tr \left[C_1^\dagger \circ \dots  \circ  C_{\kt-3}^\dagger
\left(\sum_{C_{\kt-2}\in \mathsf{Cliff}}  C_{\kt-2}^\dagger \circ M_{\kt-1}^\mathcal{T}\circ C_{\kt-2} \right) \circ
\left(\sum_{C_{\kt-1}\in \mathsf{Cliff}}  C_{\kt-1}^\dagger \circ M_\kt^\mathcal{T}\circ C_{\kt-1} \right)\right.\\
& \qquad \qquad \qquad \qquad \qquad \qquad \circ M_{\kt-2}^\mathcal{T} \circ \dots \circ C_1 \circ M_1^\mathcal{T} (\ketbra{\psi}{\psi})\cdot \ketbra{\psi}{\psi} \Bigg] \\
& = \int \tn{d}\psi \frac{1}{|\mathsf{Cliff}|^{\kt-1}}  ~\Tr [\bigcirc_{j = 1}^{\kt-1} \left(\sum_{C_{j}\in \mathsf{Cliff}}  C_{j}^\dagger \circ M_{j+1}^\mathcal{T}\circ C_{j} \right) \circ M_{1}^\mathcal{T} (\ketbra{\psi}) \cdot \ketbra{\psi}].
\end{align}
\normalsize
Now we are left with a rather awkward map $M_{1}^\mathcal{T}$ which is not twirled. However, since the Haar measure is invariant under unitary transformations, for all $E \in \mathsf{Cliff}$ it holds that
\begin{align}
\tn{LHS} & = \int \tn{d}\psi \frac{1}{|\mathsf{Cliff}|^{\kt-1}}  ~\Tr [\bigcirc_{j = 1}^{\kt-1} \left(\sum_{C_{j}\in \mathsf{Cliff}}  C_{j}^\dagger \circ M_{j+1}^\mathcal{T}\circ C_{j} \right) \circ M_{1}^\mathcal{T} \circ E (\ketbra{\psi}) \cdot E (\ketbra{\psi})] \\
& = \int \tn{d}\psi \frac{1}{|\mathsf{Cliff}|^{\kt}} \sum_{E \in \mathsf{Cliff}} ~\Tr [\bigcirc_{j = 1}^{\kt-1} \left(\sum_{C_{j}\in \mathsf{Cliff}}  C_{j}^\dagger \circ M_{j+1}^\mathcal{T}\circ C_{j} \right)  \circ M_{1}^\mathcal{T} \circ E (\ketbra{\psi}) \cdot E(\ketbra{\psi})]
\end{align}
where in the last line we used the fact that the value of the expression does not depend on $E$. Now, using cyclicity of the trace and commutativity properties of $E$, we get
\begin{align}
\tn{LHS}& = \int \tn{d}\psi \frac{1}{|\mathsf{Cliff}|^{\kt}}  ~\Tr [\bigcirc_{j = 1}^{\kt-1} \left(\sum_{C_{j}\in \mathsf{Cliff}}  C_{j}^\dagger \circ M_{j+1}^\mathcal{T}\circ C_{j} \right) \circ\left( \sum_{E \in \mathsf{Cliff}} E^\dagger \circ M_{1}^\mathcal{T} \circ E \right)(\ketbra{\psi}) \cdot (\ketbra{\psi})].
\end{align}
Now we can change discrete averaging to the continuous one by definition of the unitary 2-design, see Def.~\ref{APP:DEF:unit2design} and \cite{Gross2007}. We have
\begin{align}
\tn{LHS}& = \int \tn{d}\psi  ~\Tr [\bigcirc_{j = 1}^{\kt-1} \left(\int \tn{d}C_j~  C_{j}^\dagger \circ M_{j+1}^\mathcal{T}\circ C_{j} \right) \circ\left( \int \tn{d}E~ E^\dagger \circ M_{1}^\mathcal{T} \circ E \right)(\ketbra{\psi}) \cdot (\ketbra{\psi})].
\end{align}
To get back to the expression for $p_{\cmark|\psi,\vec{c}_{\kt},{\kt}}$, we can invert the procedure we just applied, i.e.
\small
\begin{align}
\tn{LHS} & = \int \tn{d}\psi \int \tn{d}E ~\Tr [\bigcirc_{j = 1}^{\kt-1} \left(\int \tn{d}C_j~  C_{j}^\dagger \circ M_{j+1}^\mathcal{T}\circ C_{j} \right) \circ\left( E^\dagger \circ M_{1}^\mathcal{T} \circ E \right)(\ketbra{\psi}) \cdot (\ketbra{\psi})] \\
& = \int \tn{d}\psi \int \tn{d}E ~\Tr [\bigcirc_{j = 1}^{\kt-1} \left(\int \tn{d}C_j~  C_{j}^\dagger \circ M_{j+1}^\mathcal{T}\circ C_{j} \right) \circ M_{1}^\mathcal{T} \circ E (\ketbra{\psi}) \cdot E(\ketbra{\psi})] \\
& = \int \tn{d}\psi \int \tn{d}E \int \tn{d}C_1 \dots \int \tn{d}C_{\kt-1} ~\Tr [C_1^\dagger \circ \dots \circ C_{\kt-1}^\dagger \circ M_\kt^\mathcal{T}\circ C_{\kt-1} \circ M_{\kt-1}^\mathcal{T} \circ \dots \circ C_1 \circ M_1^\mathcal{T} (\ketbra{\psi}{\psi})\cdot \ketbra{\psi}{\psi}] \\
& = \int \tn{d}\psi \int \tn{d}E \int \tn{d}C_1 \dots \int \tn{d}C_{\kt-1} ~\Tr [C_1^\dagger \circ \dots \circ C_{\kt-1}^\dagger \circ E^\dagger \circ E \circ M_\kt^\mathcal{T}\circ C_{\kt-1} \circ M_{\kt-1}^\mathcal{T} \circ \dots \circ C_1 \circ M_1^\mathcal{T} (\ketbra{\psi}{\psi})\cdot \ketbra{\psi}{\psi}] 
\end{align}
\normalsize
If now we put $E = C_\kt$ we obtain the desired result. We define $\bar{\bar{F}}_\kt = \int \tn{d} \psi \int\tn{d} C_1 \dots \int\tn{d} C_\kt ~p_{\cmark|\psi,\vec{c}_{\kt},{\kt}}$ as double-averaged fidelity.
\end{proof}

\section{Completeness and soundness} \label{APP:corr_sound}

\subsection{Exact completeness and soundness}\label{APP:SUBSEC:exact_corr_sound}

To keep this section more compact, we use notation from the main text. That is we express Test \ref{PROT:teleport} as  $\mathcal{T}_\kt = \bigcirc_{j = 1}^{\kt} ~ C_{j} \circ M_{j}^\mathcal{T}$, see Eq. \eqref{EQ:teleport-based_test}.

\begin{proof}[Proof of Thm. \ref{THM:exact_corr}]

First, we prove that Test \ref{PROT:teleport} is exactly correct when the winning threshold $\pavg=1$. That is, for honest $A$ and $B$ and for any $1\leq {\kt} \leq k $ after ${\kt}$ rounds the state that the verifier obtains at output $\kt$ is $\bigcirc_{j=1}^\kt  C_j (\ketbra{\psi}{\psi})$.
To prove this, we need to make sure that for all the rounds preceding ${\kt}$ the state at outputs $j=1,\dots,{\kt}$ are correct.
The above can be proven by induction. For $\kt=1$ the verifier measures $ C_{1} (\ketbra{\psi}{\psi})$. 
On the other hand, $C_{1} \circ M_{1}^\mathcal{T} =  C_{1} (\ketbra{\psi}{\psi})$, since the setup is perfect. Repeating this step inductively we get for all $\kt$,
\begin{align}
\bigcirc_{j = 1}^{\kt} ~ C_{j} \circ M_{j}^\mathcal{T} (\ketbra{\psi}{\psi}) = \bigcirc_{j=1}^\kt  C_j (\ketbra{\psi}{\psi})
\end{align}
Hence, $\pavg = 1$. 
\end{proof}

Before proving Thm. \ref{THM:corr_uniq} we formally prove a known fact related to no-cloning theorem \cite{Wootters1982}.
\begin{lemma}
	Let $V_{A \to A'B}$ be an arbitrary isometry, and let for any qubit state $\ket{\psi}_A$, $\ket{\Psi}_{A'B}:=V \ket{\psi}_A$. 
	If for all $\ket{\psi}$, $\Tr_B(\ketbra{\Psi}_{A'B}) = \ketbra{\psi}_{A'}$, then $\ket{\Psi}_{A'B}=\ket{\psi}_{A'}\otimes \ket{{\rm junk}}_B$, where $\ket{{\rm junk}}$ is a pure state independent of $\ket{\psi}$.
\end{lemma}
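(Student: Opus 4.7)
My plan is to prove the lemma in two steps: first establish that $\ket{\Psi}_{A'B}$ factorizes for each fixed $\ket{\psi}$, then use the linearity of the isometry $V$ to show that the factor on $B$ cannot depend on $\psi$.

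For the first step, I would observe that the hypothesis $\Tr_B(\ketbra{\Psi}_{A'B})=\ketbra{\psi}_{A'}$ says that the reduced state on $A'$ is pure. Since $\ket{\Psi}_{A'B}$ is itself pure, its Schmidt decomposition must contain a single term, and therefore $\ket{\Psi}_{A'B}=\ket{\psi}_{A'}\otimes \ket{\phi_{\psi}}_B$ for some normalized pure state $\ket{\phi_\psi}_B$ that may, a priori, depend on $\psi$.

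For the second (and main) step, I would invoke the linearity of $V$. Pick an orthonormal basis $\{\ket{0},\ket{1}\}$ of the qubit $A$ and write $V\ket{0}=\ket{0}_{A'}\otimes\ket{\phi_0}_B$, $V\ket{1}=\ket{1}_{A'}\otimes\ket{\phi_1}_B$, using the first step. Applying $V$ to $\ket{+}:=(\ket{0}+\ket{1})/\sqrt{2}$ by linearity gives
\begin{equation}
V\ket{+} \;=\; \tfrac{1}{\sqrt{2}}\bigl(\ket{0}_{A'}\otimes\ket{\phi_0}_B + \ket{1}_{A'}\otimes\ket{\phi_1}_B\bigr).
\end{equation}
By the first step applied to $\ket{+}$, this same vector must also equal $\ket{+}_{A'}\otimes\ket{\phi_{+}}_B$ for some $\ket{\phi_+}_B$. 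Expanding and matching coefficients in the basis $\{\ket{0}_{A'},\ket{1}_{A'}\}$ forces $\ket{\phi_0}_B=\ket{\phi_1}_B=\ket{\phi_+}_B$, which I will call $\ket{\mathrm{junk}}_B$. Repeating the same argument with an arbitrary superposition $\alpha\ket{0}+\beta\ket{1}$, or simply using linearity of $V$ on $\ket{\psi}=\alpha\ket{0}+\beta\ket{1}$ and noting that $\alpha\ket{0}_{A'}\otimes\ket{\mathrm{junk}}_B+\beta\ket{1}_{A'}\otimes\ket{\mathrm{junk}}_B=\ket{\psi}_{A'}\otimes\ket{\mathrm{junk}}_B$, yields $\ket{\phi_\psi}_B=\ket{\mathrm{junk}}_B$ for every $\ket{\psi}$.

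The main obstacle here is conceptual rather than computational: the first step only rules out entanglement between $A'$ and $B$ on a per-$\psi$ basis, and alone would still allow $\ket{\phi_\psi}_B$ to be, for example, a faithful copy of $\ket{\psi}$ on $B$ (which would contradict no-cloning). The essential ingredient that prevents this is the linearity of $V$ on the full qubit Hilbert space, which is why working with at least one nontrivial superposition (such as $\ket{+}$) and matching the resulting vectors is unavoidable. Once that matching is done, independence of $\ket{\mathrm{junk}}_B$ from $\ket{\psi}$ follows immediately.
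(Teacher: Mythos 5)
Your proof is correct and follows essentially the same route as the paper's: establish the per-state factorization from purity of the reduced state, write out $V\ket{0}$ and $V\ket{1}$, and use linearity of $V$ on $\ket{+}$ together with the hypothesis applied to $\ket{+}$ to force the $B$-factors to coincide. The only difference is presentational — you make the Schmidt-rank-one step and the final linearity argument for arbitrary $\alpha\ket{0}+\beta\ket{1}$ explicit, which the paper leaves implicit.
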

\begin{proof}
	If the above is true for all $\ket{\psi}$ it is in particular true for $\ket{0}$, namely, $V\ket{0}=\ket{0}\otimes \ket{\sigma_0}$. Similarly $V\ket{1}=\ket{1} \otimes \ket{\sigma_1}$. When now computing the action of $V$ on the state $\ket{+}$, we have $V \ket{+}= \frac{\ket{0}\otimes \ket{\sigma_0} + \ket{1} \otimes \ket{\sigma_1}}{\sqrt{2}}$. But since $\Tr_B(V\ketbra{+}V^{\dagger})=\ketbra{+}$, we must have $\ket{\sigma_0}=\ket{\sigma_1}=:\ket{\rm junk}$
\end{proof}

\begin{corollary} \label{APP:COR:purity}
	Let $\Omega_{A \to A'B}$ be an arbitrary CPTP map, and let for any qubit state $\ketbra{\psi}_A$, $\rho_{A'B}:=\Omega( \ketbra{\psi}_A)$. 
	If for all $\ket{\psi}$, $\Tr_B(\rho_{A'B}) = \ketbra{\psi}_{A'}$, then $\rho_{A'B}=\ketbra{\psi}_{A'}\otimes {\rm junk}_B$, where ${\rm junk}$ is a state independent of $\ketbra{\psi}$.
\end{corollary}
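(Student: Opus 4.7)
The plan is to reduce the statement for general CPTP maps to the isometric case already established in the preceding lemma, using a Stinespring dilation of $\Omega$. Specifically, any CPTP map $\Omega_{A \to A'B}$ admits an isometric dilation $V_{A \to A'BE}$ such that $\Omega(\sigma) = \Tr_E\bigl(V \sigma V^{\dagger}\bigr)$ for all $\sigma$, where $E$ is an auxiliary environment system. First, I would introduce $V$ and set $\ket{\Psi}_{A'BE} := V \ket{\psi}_A$, so that $\rho_{A'B} = \Tr_E\bigl(\ketbra{\Psi}_{A'BE}\bigr)$.

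Next, I would observe that the hypothesis $\Tr_B(\rho_{A'B}) = \ketbra{\psi}_{A'}$ translates directly, via the definition of $\rho_{A'B}$, into the statement $\Tr_{BE}\bigl(\ketbra{\Psi}_{A'BE}\bigr) = \ketbra{\psi}_{A'}$ for every qubit state $\ket{\psi}$. Grouping the systems $B$ and $E$ into a single composite system $\tilde{B} := BE$, this is exactly the hypothesis of the preceding (isometry) lemma applied to $V$ viewed as an isometry $A \to A' \tilde{B}$. That lemma then yields $\ket{\Psi}_{A'BE} = \ket{\psi}_{A'} \otimes \ket{\mathrm{junk}}_{BE}$, where $\ket{\mathrm{junk}}_{BE}$ is a pure state independent of $\ket{\psi}$.

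Finally, I would trace out $E$ to obtain the desired conclusion:
\begin{align*}
\rho_{A'B} \;=\; \Tr_E\bigl(\ketbra{\psi}_{A'} \otimes \ketbra{\mathrm{junk}}_{BE}\bigr) \;=\; \ketbra{\psi}_{A'} \otimes \mathrm{junk}_B,
\end{align*}
where $\mathrm{junk}_B := \Tr_E\bigl(\ketbra{\mathrm{junk}}_{BE}\bigr)$ is a fixed state on $B$ that does not depend on the input $\ketbra{\psi}$.

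I expect no significant obstacles: the entire content is in recognizing that Stinespring dilation plus the tensor-product structure granted by the previous lemma (the no-cloning-style rigidity statement) automatically forces the $B$-part of $\rho_{A'B}$ to be independent of $\ket{\psi}$. The only subtlety worth stating carefully is that the lemma is applied with the composite environment $\tilde{B} = BE$, not just $B$; the reduction to $B$ alone occurs only at the final tracing step and may reduce purity of the junk state on $B$, which is why the corollary states it as a state $\mathrm{junk}_B$ rather than a pure vector.
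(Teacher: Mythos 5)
Your proposal is correct and follows essentially the same route as the paper's own proof: Stinespring-dilate $\Omega$ to an isometry $V_{A\to A'BE}$, apply the preceding isometry lemma with the composite system $BE$ playing the role of $B$, and trace out $E$ to obtain $\rho_{A'B}=\ketbra{\psi}_{A'}\otimes{\rm junk}_B$. Nothing is missing; your remark about the junk state on $B$ being possibly mixed matches the paper's statement as well.
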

\begin{proof}
	By Stinespring dilation $\exists V_{A \to A' B E}$ $\Omega(\cdot)=\Tr_E(V (\cdot) V^\dagger)$. Since $\Tr_B(\rho_{A'B}) = \ketbra{\psi}_{A'}=\tr_{BE}(V \ketbra{\psi} V^\dagger)$ we must have by the above lemma that $V \ketbra{\psi} V^\dagger= \ketbra{\psi}_{A'} \otimes \ketbra{\rm junk'}_{BE}$, and therefore $\rho_{A'B}=\ketbra{\psi}_{A'} \otimes \Tr_{E}(\ketbra{\rm junk'}_{BE})=\ketbra{\psi}_{A'}\otimes {\rm junk}_B$.
\end{proof}

\vspace{1em}
\begin{proof}[Proof of Thm. \ref{THM:corr_uniq}]
Now we prove that Test \ref{PROT:teleport} is exactly sound. That is, if the average probability of success $\pavg = 1$, then nodes $A$ and $B$ have the ability to correctly execute Test \ref{PROT:teleport}. The intuition behind our proof is that challenges given by the verifier impose a certain structure on the provers strategy. We first show that if the nodes win the test with probability 1, then their strategy must produce the correct state at each time step $\kt$. Then, we argue that this implies that the nodes must have passed the state around, and therefore use a quantum channel between them exactly $\kt$ times.

\begin{lemma}\label{APP:LEM:Omega_corrstate} Let $\mathcal{Q}_{\vec{c}_\kt,\kt}$ be an arbitrary strategy of the provers, which can depend on the information available throughout the protocol, i.e. depth $\kt$ and Clifford string $\vec{c}_\kt$. If the average probability of success in Test \ref{PROT:teleport} is $\pavg = 1$, then for all depths $\kt = 1,\dots,k$, all Clifford strings $\vec{c}_\kt$ and all input states $\psi \in \mathsf{X}$, $\mathcal{Q}_{\vec{c}_\kt,\kt}$ outputs the correct state.
\end{lemma}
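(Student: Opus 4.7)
The plan is to unpack the definition of $\pavg$ and exploit the fact that it is an average of bounded quantities that saturates its upper bound, and then use the rank-one structure of the verifier's acceptance POVM.

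First, I would recall from Def.~\ref{DEF:prob_succ_test1} that $\pavg$ is a convex combination (uniform average over $\kt\in\{1,\dots,k\}$, strings $\vec{c}_\kt \in \mathsf{Cliff}^\kt$, and states $\psi\in \mathsf{X}$) of the conditional success probabilities $p_{\cmark|\psi,\vec{c}_\kt, \kt} = \Tr[\mathcal{Q}_{\vec{c}_\kt,\kt}(\ketbra{\psi})\cdot \Pi_\kt^\cmark]$, each of which lies in $[0,1]$. The hypothesis $\pavg = 1$ therefore forces
\begin{align}
p_{\cmark|\psi,\vec{c}_\kt, \kt} = 1 \quad \text{for every } \kt,~\vec{c}_\kt,~\psi\in\mathsf{X}.
\end{align}
This is the only nontrivial use of the hypothesis; the remainder is a structural observation about the verifier's measurement.

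Next, I would observe that by Def.~\ref{obs:povm} the accepting POVM element $\Pi_\kt^\cmark = \bigcirc_{j=1}^{\kt} C_j(\ketbra{\psi})$ is a rank-one projector onto the pure state $\ket{\phi_{\psi,\vec{c}_\kt}} := \bigcirc_{j=1}^\kt C_j(\ket{\psi})$. Letting $\sigma_{\vec{c}_\kt,\kt}^{\psi} := \mathcal{Q}_{\vec{c}_\kt,\kt}(\ketbra{\psi})$ denote the state output by the provers' strategy at round $\kt$, the equation $\Tr[\sigma_{\vec{c}_\kt,\kt}^{\psi}\cdot \ketbra{\phi_{\psi,\vec{c}_\kt}}] = 1$ combined with the fact that $\sigma_{\vec{c}_\kt,\kt}^{\psi}$ is a density operator (trace one, positive semidefinite) forces $\sigma_{\vec{c}_\kt,\kt}^{\psi} = \ketbra{\phi_{\psi,\vec{c}_\kt}}$, i.e. the correct output state.

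I do not anticipate any substantive obstacle here: the step that one might worry about --- extending conclusions about the Pauli inputs $\psi\in\mathsf{X}$ to arbitrary states --- is not needed for the lemma as stated, since the claim is only about $\psi\in\mathsf{X}$. The more delicate argument, namely certifying that a sending channel was actually used in each of the $\kt$ rounds, is postponed to the remaining part of the proof of Thm.~\ref{THM:corr_uniq}, where one purifies $\mathcal{Q}_{\vec{c}_\kt,\kt}$ and applies Cor.~\ref{APP:COR:purity} together with the intermediate-round ($\kt-1$) winning conditions to pin down the structure of the channel used between $A$ and $B$.
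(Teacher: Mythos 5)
Your proposal is correct and follows essentially the same route as the paper's proof: saturation of a uniform average of probabilities in $[0,1]$ forces each $p_{\cmark|\psi,\vec{c}_\kt,\kt}=1$, and since $\Pi^\cmark_\kt$ is a rank-one projector onto $\bigcirc_{j=1}^{\kt}C_j(\ketbra{\psi})$, unit overlap forces the output state to equal it. You merely spell out two steps the paper leaves implicit, so no substantive difference.
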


\begin{proof}
This statement is essentially the inverse of the exact completeness statement. Let us explicitly write the average probability of success,
\begin{align}
\pavg = \frac{1}{k} \sum_{\kt} \frac{1}{|\mathsf{X}|} \sum_{\psi \in \mathsf{X}} \frac{1}{|\mathsf{Cliff}|^{\kt}} \sum_{\vec{c}_{\kt}\in \mathsf{Cliff}^{\kt}}
\Tr\left[ \mathcal{Q}_{\vec{c}_\kt,\kt} \left( \ketbra{\psi} \right) \cdot  \Pi_\cmark^\kt \right]= 1.
\end{align}
This implies that for all states, gates and depths the trace must be equal to 1,
\begin{align}
\forall \psi \in \mathsf{X},~ \forall \vec{c}_{\kt}\in \mathsf{Cliff}^{\kt},~ \forall \kt = 1,\dots,k: \quad
\Tr\left[ \mathcal{Q}_{\vec{c}_\kt,\kt} \left( \ketbra{\psi} \right) \cdot  \Pi_\cmark^\kt \right] = 1.
\end{align}
Therefore, \begin{align}
\forall \psi \in \mathsf{X},~ \forall \vec{c}_{\kt}\in \mathsf{Cliff}^{\kt},~ \forall \kt = 1,\dots,k: \quad
\mathcal{Q}_{\vec{c}_\kt,\kt} \left( \ketbra{\psi} \right) = C_{\kt}\circ \dots \circ C_{1} (\ketbra{\psi}{\psi})
\end{align}
and the state at every $\kt$ must be exactly the one requested by the verifier.
\end{proof}

\begin{lemma}\label{APP:LEM:exactsound_corrstate}
If the average probability of success in Test \ref{PROT:teleport} is $\pavg = 1$, then for all depths $\kt = 1,\dots,k$, all Clifford strings $\vec{c}_\kt$ and all input states $\psi \in \mathsf{X}$, $\mathcal{Q}_{\vec{c}_\kt,\kt}$ uses an exact sending channel $\kt$ times and apply an operation equivalent to the one described by $\vec{c}_\kt$.
\end{lemma}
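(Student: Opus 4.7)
The plan is to prove the lemma by induction on $\kt$, leveraging the previous Lemma \ref{APP:LEM:Omega_corrstate} (which guarantees that $\mathcal{Q}_{\vec{c}_\kt,\kt}(\ketbra{\psi}) = C_\kt \circ \dots \circ C_1(\ketbra{\psi})$ exactly) together with Corollary \ref{APP:COR:purity} (the no-cloning consequence). The base case $\kt=1$ is essentially the argument for one round: before any operation, node $B$ holds a state that is independent of $\ket{\psi}$ (since the input is only on $A$'s register), so the optimal average fidelity achievable on $B$'s side is $\tfrac{1}{2}$, whereas after round $1$ we know from Lemma \ref{APP:LEM:Omega_corrstate} that $B$ outputs $C_1(\ketbra{\psi})$, which has fidelity $1$. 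This fits exactly Definition \ref{DEF:sending_channel_formal} of an exact sending channel.

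For the inductive step, assume the claim holds up to step $\kt-1$. By Lemma \ref{APP:LEM:Omega_corrstate}, at the end of round $\kt-1$ the active node (say $A$, by the parity of $\kt-1$) outputs the pure state $\ketbra{\phi}:=C_{\kt-1}\circ \dots \circ C_1(\ketbra{\psi})$. Viewing the joint strategy up to that point as a CPTP map $\Omega_{A_0 \to A_0 B_0}$ with output $\rho^\psi_{A_0 B_0}$ whose reduction on $A$ equals $\ketbra{\phi}$, Corollary \ref{APP:COR:purity} forces $\rho^\psi_{A_0 B_0} = \ketbra{\phi}_{A_0} \otimes \sigma_{B_0}$, where $\sigma_{B_0}$ is independent of $\ket{\psi}$. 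Consequently the best local operation that $B$ can apply yields an output independent of $\ket{\psi}$, so $\sup_{\Gamma_{B_0}} \int \mathrm{d}\psi\, \Tr[\Gamma_{B_0}(\rho^\psi_{B_0}) \cdot \ketbra{\psi}_{A_0}] = \tfrac{1}{2}$, the trivial guess value.

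Now, by Lemma \ref{APP:LEM:Omega_corrstate} applied at depth $\kt$, after round $\kt$ the (new) active node $B$ outputs $C_\kt(\ketbra{\phi})$, hence $\sup_{\Gamma_{B_0 B_1}} \int \mathrm{d}\psi\, \Tr[\Gamma_{B_0 B_1}(\rho^\psi_{B_0 B_1}) \cdot \ketbra{\psi}_{A_0}] = 1$ (using $\Gamma = C_\kt^\dagger \circ \dots \circ C_1^\dagger$ after the rotation to the original basis). Combining the two equalities, the channel used between rounds $\kt-1$ and $\kt$ satisfies the two conditions of Definition \ref{DEF:sending_channel_formal} for an exact sending channel. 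Together with the fact that the output state agrees with $C_\kt \circ \dots \circ C_1(\ketbra{\psi})$, the operation applied by the receiving node between these two rounds must be equivalent to $C_\kt$ (up to local isometries on any ancillary registers, which do not affect the state on the output register). Iterating this argument through the $k$ rounds yields the claim.

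The main obstacle I foresee is carefully handling the purification/ancillary structure of the provers' strategy: the strategy $\mathcal{Q}_{\vec{c}_\kt,\kt}$ may involve large auxiliary systems on both sides, and we must argue that the local reductions behave as claimed when one side holds the full pure state. This is exactly what Corollary \ref{APP:COR:purity} is designed for, but its application requires identifying the correct system $B$ at each step (the non-active party together with all its ancillas), and verifying that the reduction on the active side is indeed the pure state $C_{\kt-1}\circ\dots\circ C_1(\ketbra{\psi})$, not just a state that passes the verifier's test. This is where Lemma \ref{APP:LEM:Omega_corrstate}, which gives equality (not just high fidelity) for the output state on all inputs $\psi \in \mathsf{X}$ and all gate sequences, is essential.
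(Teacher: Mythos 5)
Your proposal is correct and follows essentially the same route as the paper: it uses Lemma \ref{APP:LEM:Omega_corrstate} to pin down the exact (pure) state held by the active node at steps $\kt-1$ and $\kt$, then invokes Corollary \ref{APP:COR:purity} to conclude the joint state factorizes with the passive node's share independent of $\ket{\psi}$, giving average fidelity $\tfrac12$ before and $1$ after the round, i.e.\ an exact sending channel per Definition \ref{DEF:sending_channel_formal}. The only cosmetic difference is your induction framing; the paper runs the same two-time-step argument directly for each $\kt$, which is equivalent since the inductive hypothesis is never really used.
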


\begin{proof}

In our test at every time step $\kt$ the provers must produce some state. Since at every time step a state has to be defined, $\mathcal{Q}_{\vec{c}_\kt,\kt}$ can be described by
\begin{align}
\mathcal{Q}_{\vec{c}_\kt,\kt} = \bigcirc_{j=1}^\kt \mathcal{E}_{\vec{c}_j,j}
\end{align}
Let $\hat{\Gamma}_{A_{\kt-1}}$ and $\hat{\Gamma}_{B_{\kt-1},B_\kt}$ denote CPTP maps which act on registers of $A$ and $B$ respectively, and output qubit states, and let $\Gamma_{A_{\kt-1}}$ and $\Gamma_{B_{\kt-1},B_\kt}$ be $ \Tr_{B_{\kt-1},B_{\kt}}[\hat{\Gamma}_{A_{\kt-1}} (\cdot)]$ and $\Tr_{A_{\kt-1}}[\hat{\Gamma}_{B_{\kt-1},B_\kt} (\cdot)]$ respectively.
The above fact together with Lem. \ref{APP:LEM:paulistates_2design} and \ref{APP:LEM:exactsound_corrstate} implies that at time steps $\kt$ and $\kt-1$
\begin{align}
&\int \tn{d}\psi
\Tr\left[ \mathcal{E}_{\vec{c}_\kt,\kt} \circ \mathcal{Q}_{\vec{c}_{\kt-1},\kt-1}
  \left(\ketbra{\psi}{\psi}\right) \cdot \Pi_\cmark^{\kt} \right] 
  = 1 \Rightarrow \sup_{\Gamma_{B_{\kt-1},B_\kt}} \int \tn{d}\psi
\Tr\left[ \Gamma_{B_\kt-1,B_\kt} \left( \mathcal{E}_{\vec{c}_\kt,\kt} 
  \left(\rho^{\psi}_{\vec{c}_{\kt-1},\kt-1} \right) \right) \cdot \Pi_\cmark^{\kt} \right] = 1 \label{APP:EQ:supBk}\\
&\int \tn{d}\psi
\Tr\left[ \mathcal{Q}_{\vec{c}_{\kt-1},\kt-1}
 \left(\ketbra{\psi}{\psi} \right) \cdot \Pi_\cmark^{\kt-1} \right] 
 = 1 \Rightarrow \sup_{\Gamma_{A_{\kt-1}}}\int \tn{d}\psi
\Tr\left[ \Gamma_{A_{\kt-1}}
  \left(\rho^{\psi}_{\vec{c}_{\kt-1},\kt-1} \right)  \cdot \Pi_\cmark^{\kt-1} \right] = 1
\end{align}
 Moreover, we've put $\rho^{\psi}_{\vec{c}_{\kt-1},\kt-1}:= \mathcal{Q}_{\vec{c}_{\kt-1},\kt-1} \left(\ketbra{\psi}{\psi} \right)$ to denote a joint state of $A$ and $B$ at time step $\kt-1$. Observe that for all $\kt$, $\Pi_\cmark^{\kt} = C_{\kt}\circ \dots \circ C_{1} (\ketbra{\psi}{\psi})$ projects onto a pure state. Therefore, for all $\kt$, the states at output registers $\kt-1$ for $A$, and $\kt$ for $B$, must be pure, 
\begin{align}
\Gamma_{B_{\kt-1},B_\kt} \left( \mathcal{E}_{\vec{c}_\kt,\kt} 
  \left(\rho^{\psi}_{\vec{c}_{\kt-1},\kt-1} \right) \right) = C_{\kt}\circ \dots \circ C_{1} (\ketbra{\psi}{\psi})_{B_\kt}\\
\Gamma_{A_{\kt-1}} \left(\rho^{\psi}_{\vec{c}_{\kt-1},\kt-1} \right) = C_{\kt-1}\circ \dots \circ C_{1} (\ketbra{\psi}{\psi})_{A_{\kt-1}} \label{APP:EQ:Gamma_A}
\end{align}
Let $\sigma^{\psi} = (\hat{\Gamma}_{A_{\kt-1}} \otimes \mathds{1}) (\rho^{\psi}_{\vec{c}_{\kt-1},\kt-1})$ be the joint state of $A$ and $B$ at time step $\kt-1$, and after applying $\hat{\Gamma}_{A_{\kt-1}} $ on $A$. Using Eq. \eqref{APP:EQ:Gamma_A} we have that,
\begin{align}
 \Tr_{B_{\kt-1}} (\sigma^{\psi}) = C_{\kt-1}\circ \dots \circ C_{1} (\ketbra{\psi}{\psi})_{A_{\kt-1}},
\end{align}
which is a pure state on $A$, and therefore any extension of this state has tensor product form across $A$ and $B$, in particular,
\begin{align}
	\sigma^\psi =  C_{\kt-1}\circ \dots \circ C_{1} (\ketbra{\psi}{\psi})_{A_{\kt-1}} \otimes \sigma_{B_{\kt-1}}, \label{APP:EQ:sigma_form}
\end{align}
where $\sigma_{B_{\kt-1}}$ is a state on $B$ independent of $\psi$ by Corr.~\ref{APP:COR:purity}. Therefore the (maximum) average fidelity of the state on $B$'s side is,
\begin{align}
\sup_{\Gamma_{B_{\kt-1}}}\int \tn{d}\psi
\Tr\left[ \Gamma_{B_{\kt-1}}
  \left(\rho^{\psi}_{\vec{c}_{\kt-1},\kt-1} \right)  \cdot \Pi_\cmark^{\kt-1} \right] = \sup_{\Gamma_{B_{\kt-1}}}\int \tn{d}\psi
\Tr\left[ \sigma_{B_{\kt-1}} \cdot\Pi_\cmark^{\kt-1} \right] = \sup_{\Gamma_{B_{\kt-1}}}
\Tr\left[ \sigma_{B_{\kt-1}} \cdot \frac{\mathds{1}}{2} \right] = \frac{1}{2}
\end{align}
This, together with Eq. \eqref{APP:EQ:supBk}, implies that $\mathcal{E}_{\vec{c}_\kt,\kt}$ is an exact sending channel at time step $\kt$. Since the statement holds for all $\kt$, the provers necessarily use the exact sending channel $k$ times. 

\end{proof}

\end{proof}

\subsection{Completeness and soundness}\label{APP:SUBSEC:corr_sound}

\begin{proof}[Proof of Thm. \ref{THM:eps-corr}. Completeness] 

As stated in the main text, we assume that the quality of operations is quantified by average fidelity and that at every round $j$ the quality of operations is the same, i.e. for all $j$, $\bar{\mu} = \int \tn{d} \psi \Tr[ C_j \circ M^\mathcal{T}_j (\ketbra{\psi}) \cdot C_j(\ketbra{\psi})]$. In the following we bound the average probability of success $\pavg$ in terms of $\bar{\mu}$. 
Let us write $\pavg$ explicitly,
\begin{align}
\pavg = \frac{1}{k}  \sum_\kt \frac{1}{|\mathsf{X}|}\sum_{\psi} \frac{1}{|\mathsf{Cliff}|^\kt} \sum_{\vec{c}_\kt} \Tr \left[ \bigcirc_{j=1}^\kt C_j \circ  M^\mathcal{T}_j (\ketbra{\psi}) \cdot \Pi_\kt^\cmark \right]
\end{align}
From Lem. \ref{APP:LEM:cliff_2design} we have that 
\begin{align}
\pavg & = \frac{1}{k}  \sum_\kt \int \tn{d}\psi \int \tn{d}C_1\dots\int \tn{d}C_\kt \Tr \left[ \left( \bigcirc_{j=1}^\kt C_j \right)^{-1} \circ \bigcirc_{j=1}^\kt C_j \circ M^\mathcal{T}_j (\ketbra{\psi}) \cdot   \ketbra{\psi} \right], \\
& = \frac{1}{k}  \sum_\kt \int \tn{d}\psi \Tr [\bigcirc_{j = 1}^{\kt} ~ \left( \int \tn{d} C_j ~C_{j}^\dagger \circ  M^\mathcal{T}_j \circ C_{j} \right) (\ketbra{\psi}{\psi})\cdot \ketbra{\psi}{\psi}].
\end{align}
Observe that $(M^\mathcal{T}_j)_{twirl} = \int \tn{d} C_j ~C_{j}^\dagger \circ  M^\mathcal{T}_j\circ C_{j}$ is a twirl of the operator $M^\mathcal{T}_j$ \cite{Nielsen2002}. Furthermore, twirling any map is equivalent to the action of a depolarizing channel, i.e. $(M^\mathcal{T}_j)_{twirl}(\rho) = \mathcal{D}_j(\rho) = p  \rho + (1-p) \mathds{1}/2$, for some parameter $p$ and any state $\rho$. Using properties of the depolarizing channel we can write,
\begin{align}
\pavg & = \frac{1}{k}  \sum_\kt \int \tn{d}\psi \Tr [\bigcirc_{j = 1}^{\kt} (M^\mathcal{T}_j)_{twirl}(\ketbra{\psi}{\psi})\cdot \ketbra{\psi}{\psi}] \\
 & = \frac{1}{k}  \sum_\kt \int \tn{d}\psi \Tr [\bigcirc_{j = 1}^{\kt} \mathcal{D}_j (\ketbra{\psi}{\psi})\cdot \ketbra{\psi}{\psi}] \\
 & = \frac{1}{k}\sum_\kt \prod_{j=1}^\kt \bar{F}\left(\mathcal{D}_j\right)\\
 & = \frac{1}{k}\sum_\kt \prod_{j=1}^\kt \bar{F}\left((M^\mathcal{T}_j)_{twirl}\right) \\
\end{align} 
Additionally, the average fidelity of a twirled map is equal to the average fidelity of the same map without a twirl \cite{Nielsen2002}, therefore,
$
\pavg  = \frac{1}{k}\sum_\kt \prod_{j=1}^\kt \bar{F}\left(M^\mathcal{T}_j)\right)
$.
By assumption, $\forall j~ \bar{F}\left(M^\mathcal{T}_j\right) = \bar{\mu}$, and
\begin{align}
\pavg  = \frac{1}{k}\sum_\kt \bar{\mu}^\kt = \frac{1}{k} \frac{\bar{\mu}(\bar{\mu}^k-1)}{k(\bar{\mu}-1)} = h_k(\bar{\mu}).
\end{align}
If we demand that $\pavg \geq t$ then $\bar{\mu} \geq h_k^{-1}(t)$.
\end{proof}

\begin{proof}[Proof of Thm. \ref{THM:eps-sound}. Soundness] 
In the case when the nodes $A$ and $B$ are honest, the soundness statement is the converse of the completeness, see the proof above. Here we prove soundness of Test \ref{PROT:teleport} in the case when the nodes are dishonest ($m$-cheating). Just like before, we will assume that output for a fixed $\kt$ happens at node $B$. 

The idea behind this proof is that we bound the average probability of success of the provers when they use a quantum channel between them, and when they do not. More specifically, let $\rho_{A^\out_{\kt-1}}$ be a state available at $A$'s output at time step $\kt-1$ and $\rho^\out_{B_{\kt}}$ be a state available at $B$'s at time step $\kt$. We show that whenever the provers use the channel, the average fidelity between these two states is bounded by 1. However, whenever they do not use the channel, the average fidelity between these two states is at most as large as the average fidelity between the states at time step $\kt-1$, i.e. $\rho_{A^\out_{\kt-1}}$ and $\rho_{B^\out_{\kt-1}}$. This average fidelity is intrinsically bounded by the approximate cloning theorem \cite{Gisin1997}, and here takes value $\frac{5}{6}$. If the provers are $m$-cheating, they use the channel between at least $m$ times. We prove that, as a consequence, their overall average probability of winning $\pavg$ is upper-bounded by $\frac{1}{k} (m + \frac{5}{6}(k-m))$.

\vspace{1em}
When the provers are $m$-cheating they adapt an arbitrary strategy $\mathcal{Q}^m_{\vec{c}_\kt,\kt}$ which depends on the maximum number of channel uses $m$ between the nodes. It can also depend on all the information available throughout the protocol, i.e. the challenges and gates distributed by the verifier. We assume that the executions of the test are IID (independent and identically distributed) and the probability of winning a single execution $i = 1,\dots,n$ is expressed as
\begin{align}
\forall~ \vec{c}_\kt,\kt,\psi \quad 
p_{\cmark|\psi,\vec{c}_\kt,\kt} = \Tr\left[ \mathcal{Q}^m_{\vec{c}_\kt,\kt} \left(\ketbra{\psi}{\psi}\right) \cdot  \Pi_\cmark^\kt \right].
\end{align}
The configuration of channel uses, i.e. at which time step the provers use the channel between them, does not need to be fixed. At each execution, the provers can choose a particular strategy $\mathcal{Q}^{m,\nu}$ which describes a configuration $\nu$ of channel uses between the nodes. We assume that the provers are \emph{non-adaptive} and throughout an execution $i$ their strategy does not change. Therefore, the fact  whether the provers choose to send the state or not, is independent of the information available throughout the protocol. I.e. $\nu$ is independent of $\kt$ and $\vec{c}_\kt$, and we have $q_\nu \geq 0$, $\sum_\nu q_\nu = 1$, such that 
\begin{align}
\mathcal{Q}^m_{\vec{c}_\kt,\kt} = \sum_\nu q_\nu \mathcal{Q}^{m,\nu}_{\vec{c}_\kt,\kt}.
\end{align}
Note that there are $k\choose m$ such strategies. Furthermore, let us define 
\begin{align}
p_{\cmark|\nu,\psi,\vec{c}_\kt,\kt} := \Tr\left[ \mathcal{Q}^{m,\nu}_{\vec{c}_\kt,\kt} \left( \ketbra{\psi} \right) \cdot  \Pi_\cmark^\kt \right].
\end{align}
Let us rewrite the average probability of success, Eq. \eqref{EQ:avg_prob},
\begin{align}
\pavg & = \frac{1}{2}\left( 
\frac{1}{k}  \sum_{\kt=1}^k \frac{1}{|\mathsf{X}|}\sum_{\psi} \frac{1}{|\mathsf{Cliff}|^\kt} \sum_{\vec{c}_\kt} 
p_{\cmark|\psi,\vec{c}_\kt, \kt} + 
\frac{1}{k}  \sum_{\kt=1}^{k} \frac{1}{|\mathsf{X}|}\sum_{\psi} \frac{1}{|\mathsf{Cliff}|^\kt} \sum_{\vec{c}_\kt} 
p_{\cmark|\psi,\vec{c}_\kt, \kt}  \right) .
\end{align}
Now, we will move the summation in the second component of the sum over $\kt$ -- instead of going through $(1,2,\dots, k-1,k)$ we will set it to go $(k,1,2,\dots,k-1)$,
\begin{align}
\pavg &  = \frac{1}{2}\left( 
\frac{1}{k}  \sum_{\kt=1}^k \frac{1}{|\mathsf{X}|}\sum_{\psi} \frac{1}{|\mathsf{Cliff}|^\kt} \sum_{\vec{c}_\kt} 
p_{\cmark|\psi,\vec{c}_\kt, \kt} + 
\frac{1}{k}  \sum_{\kt=k}^{k-1} \frac{1}{|\mathsf{X}|}\sum_{\psi} \frac{1}{|\mathsf{Cliff}|^\kt} \sum_{\vec{c}_\kt} 
p_{\cmark|\psi,\vec{c}_\kt, \kt}  \right) .
\end{align}
Let us define $p_{\cmark|\psi,\vec{c}_0, 0}:=1$ for round $\kt=0$, which one can interpret as simply giving the state to node $A$ and immediately requesting it back. Now since for $\kt=k$ it holds that $p_{\cmark|\psi,\vec{c}_k, k} \leq 1$, we have $p_{\cmark|\psi,\vec{c}_k, k} \leq p_{\cmark|\psi,\vec{c}_0, 0}$. Therefore,
\begin{align}
\pavg &  \leq \frac{1}{2}\left( 
\frac{1}{k}  \sum_{\kt=1}^k \frac{1}{|\mathsf{X}|}\sum_{\psi} \frac{1}{|\mathsf{Cliff}|^\kt} \sum_{\vec{c}_\kt} 
p_{\cmark|\psi,\vec{c}_\kt, \kt} + 
\frac{1}{k}  \sum_{\kt=0}^{k-1} \frac{1}{|\mathsf{X}|}\sum_{\psi} \frac{1}{|\mathsf{Cliff}|^\kt} \sum_{\vec{c}_\kt} 
p_{\cmark|\psi,\vec{c}_\kt, \kt}  \right) .
\end{align}
Now we write the expression as a single summation,
\begin{align}
\pavg  & \leq \frac{1}{k}  \sum_{\kt=1}^k \frac{1}{|\mathsf{X}|}\sum_{\psi} \frac{1}{|\mathsf{Cliff}|^\kt} \sum_{\vec{c}_\kt} 
\frac{p_{\cmark|\psi,\vec{c}_\kt, \kt} +  p_{\cmark|\psi,\vec{c}_{\kt-1}, \kt-1}}{2}\\
\label{APP:EQ:sound_proof_lin} & = \sum_\nu q_\nu \frac{1}{k}  \sum_{\kt=1}^k \frac{1}{|\mathsf{X}|}\sum_{\psi} \frac{1}{|\mathsf{Cliff}|^\kt} \sum_{\vec{c}_\kt} 
\frac{1}{2}\left(p_{\cmark|\nu, \psi,\vec{c}_\kt, \kt} +  p_{\cmark|\nu, \psi,\vec{c}_{\kt-1}, \kt-1}\right) \\
\label{APP:EQ:sound_proof_2des} & = \sum_\nu q_\nu \frac{1}{k}  \sum_{\kt=1}^k \frac{1}{|\mathsf{Cliff}|^\kt} \sum_{\vec{c}_\kt} 
\frac{1}{2} \left( \bar{F}_{\nu, \vec{c}_\kt, \kt} +  \bar{F}_{\nu,\vec{c}_{\kt-1}, \kt-1}\right)
\end{align}
In line \eqref{APP:EQ:sound_proof_lin} we used the linearity property of the trace, and in line \eqref{APP:EQ:sound_proof_2des} we used 2-design properties of the set $\mathsf{X}$ (see argument in Sec. \ref{APP:LEM:paulistates_2design}) together with the fact that $\mathcal{Q}^{m,\nu}_{\vec{c}_\kt,\kt}$ does not depend on the state.

In our test at every time step $\kt$ the provers must produce some state. Since at every time step a state has to be defined, $\mathcal{Q}^{m,\nu}_{\vec{c}_\kt,\kt}$ can be described by
\begin{align}
\mathcal{Q}^{m,\nu}_{\vec{c}_\kt,\kt} = \bigcirc_{j=1}^\kt \mathcal{E}^{m,\nu}_{\vec{c}_j,j}
\end{align}
Now our goal is to bound the probability of winning $\pavg$ if $\mathcal{Q}^{m,\nu}_{\vec{c}_\kt,\kt}$ has exactly $m$ sending channels $\mathcal{E}$, as defined in Def. \ref{DEF:sending_channel_formal}. We will consider two cases: when the channel $\mathcal{E}$ is a sending channel and when it is not.

1. \emph{$\mathcal{E}$ is a sending channel.} In this case the provers can output the correct state at both time steps, $\kt-1$ and $\kt$. Therefore, in this case we use the trivial bound that each of the fidelities is upper-bounded by 1, and 
\begin{align} \label{APP:EQ:sound_bound_1}
\frac{1}{2} \left(\bar{F}_{\nu, \vec{c}_\kt, \kt} +  \bar{F}_{\nu,\vec{c}_{\kt-1}, \kt-1}\right) \leq 1.
\end{align}

2. \emph{$\mathcal{E}$ is not a sending channel.} 
Consider average fidelity expressions at time steps $\kt$ and $\kt-1$ for the same execution $i$, and assume that $\kt$ is odd and output is requested at $B$'s side. Each of the fidelities can be upper-bounded by its supremum,
\begin{align}
\bar{F}_{\nu, \vec{c}_\kt, \kt} & =  \int \tn{d}\psi
\Tr\left[ \mathcal{E}^{m,\nu}_{\vec{c}_\kt,\kt} \circ \mathcal{Q}^{m,\nu}_{\vec{c}_{\kt-1},\kt-1}
  \left(\ketbra{\psi}{\psi}\right) \cdot \Pi_\cmark^{\kt} \right] \\ 
  & \leq \sup_{\Gamma_{B_{\kt-1},B_\kt}} \int \tn{d}\psi
\Tr\left[ \Gamma_{B_\kt-1,B_\kt} \left( \mathcal{E}^{m,\nu}_{\vec{c}_\kt,\kt} 
  \left(\rho^{m,\nu,\psi}_{\vec{c}_{\kt-1},\kt-1} \right) \right) \cdot \Pi_\cmark^{\kt} \right] 
\end{align}
and
\begin{align}
\bar{F}_{\nu,\vec{c}_{\kt-1}, \kt-1} & = \int \tn{d}\psi\Tr\left[ \mathcal{Q}^{m,\nu}_{\vec{c}_{\kt-1},\kt-1}
 \left(\ketbra{\psi}{\psi} \right) \cdot \Pi_\cmark^{\kt-1} \right] \\ 
 & \leq \sup_{\Gamma_{A_{\kt-1}}}\int \tn{d}\psi
\Tr\left[ \Gamma_{A_{\kt-1}}
  \left(\rho^{m,\nu,\psi}_{\vec{c}_{\kt-1},\kt-1} \right)  \cdot \Pi_\cmark^{\kt-1} \right].
\end{align}
Here $\Gamma_{A_{\kt-1}}$ and $\Gamma_{B_{\kt-1},B_\kt}$ denote CPTP maps which trace out additional registers of $A$ and $B$ and output a qubit state. Moreover, we've put $\rho^{m,\nu,\psi}_{\vec{c}_{\kt-1},\kt-1}:= \mathcal{Q}^{m,\nu}_{\vec{c}_{\kt-1},\kt-1}
 \left(\ketbra{\psi}{\psi} \right)$ to denote a joint state of $A$ and $B$ at time step $\kt-1$.

According to Def. \ref{DEF:sending_channel_formal} if the channel is not sending then we can bound 
\begin{align}
\sup_{\Gamma_{B_{\kt-1},B_\kt}} \int \tn{d}\psi
\Tr\left[ \Gamma_{B_{\kt-1},B_\kt} \left( \mathcal{E}^{m,\nu}_{\vec{c}_\kt,\kt} 
  \left(\rho^{m,\nu,\psi}_{\vec{c}_{\kt-1},\kt-1} \right) \right) \cdot \Pi_\cmark^{\kt-1} \right] \leq 
  \sup_{\Gamma_{B_{\kt-1}}} \int \tn{d}\psi
\Tr\left[ \Gamma_{B_{\kt-1}}  \left(\rho^{m,\nu,\psi}_{\vec{c}_{\kt-1},\kt-1} \right) \cdot \Pi_\cmark^{\kt} \right]
\end{align}
and hence,
\begin{align}
\frac{1}{2} \left(\bar{F}_{\nu, \vec{c}_\kt, \kt} +  \bar{F}_{\nu,\vec{c}_{\kt-1}, \kt-1}\right) \leq \frac{1}{2} \left( \sup_{\Gamma_{B_{\kt-1}}} \int \tn{d}\psi
\Tr\left[ \Gamma_{B_{\kt-1}}  \left(\rho^{m,\nu,\psi}_{\vec{c}_{\kt-1},\kt-1} \right) \cdot \Pi_\cmark^{\kt-1} \right] +  \sup_{\Gamma_{A_{\kt-1}}}\int \tn{d}\psi
\Tr\left[ \Gamma_{A_{\kt-1}}
  \left(\rho^{m,\nu,\psi}_{\vec{c}_{\kt-1},\kt-1} \right)  \cdot \Pi_\cmark^{\kt-1} \right]\right)
\end{align}
The right-hand side of the above equation is bounded by $\frac{5}{6}$ due to the approximate cloning theorem \cite{Gisin1997}. Therefore, we have
\begin{align}\label{APP:EQ:sound_bound_approx_clone}
\frac{1}{2}\left( \bar{F}_{\nu, \vec{c}_\kt, \kt} +  \bar{F}_{\nu,\vec{c}_{\kt-1}, \kt-1} \right) \leq \frac{5}{6}.
\end{align}

There are at most $k-m$ time steps $\kt = 1,\dots,k$ such that the channel $\mathcal{E}$ is not sending. Therefore, using Eqs. \eqref{APP:EQ:sound_bound_1} and \eqref{APP:EQ:sound_bound_approx_clone} we can write \eqref{APP:EQ:sound_proof_2des}
\begin{align}
\pavg & \leq \sum_\nu q_\nu \frac{1}{k}  \sum_{\kt=1}^k \frac{1}{|\mathsf{Cliff}|^\kt} \sum_{\vec{c}_\kt} 
\frac{\bar{F}_{\nu, \vec{c}_\kt, \kt} +  \bar{F}_{\nu,\vec{c}_{\kt-1}, \kt-1}}{2} \\
& \leq \sum_\nu q_\nu \frac{1}{k}\left(m \cdot 1 + (k-m) \cdot \frac{5}{6} \right) \\
& = \frac{1}{k}\left(m + \frac{5}{6}(k-m)  \right)
\end{align}
where in the last line we used the fact that $\sum_\nu q_\nu = 1.$
\end{proof}

\section{Other proofs} \label{APP:other_proofs}

In this appendix we present remaining proofs from Sec. \ref{SUBSEC:performance}. First we prove two statements about expected value of the rate of wins and average fidelity. Then we calculate the probability that our consistency check is satisfied. Finally, we derive a bound on the performance of $k$-round protocols in terms of double-average fidelity.

\subsection{Proof of Lem. \ref{LEM:avgfid_expected}}

\label{APP:SUBSEC:proof_avgfid_expected}

Here we prove that the expected value of rate $R_{\vec{c}_\kt,\kt}$, for specific depth $\kt$ and string of Clifford gates $\vec{c}_\kt$, is equal to fidelity $ \bar{F}_{\vec{c}_\kt,\kt}(\tilde{\mathcal{T}}_\kt)$ averaged over the state space.

\begin{proof}[Proof of Lem. \ref{LEM:avgfid_expected}] By definition, the expected value of $v^i_{\vec{c}_\kt,\kt}$ as,
\begin{align}
\mathds{E}\left[ v^i_{\vec{c}_\kt,\kt} \right]_{\mathsf{X}} & = \frac{1}{|\mathsf{X}|}\sum_\psi \left( p_{\cmark|\psi,\vec{c}_{\kt},{\kt}} \cdot 1 + p_{\xmark|\psi,\vec{c}_{\kt},{\kt}} \cdot 0 \right) = \frac{1}{|\mathsf{X}|}\sum_\psi p_{\cmark|\psi,\vec{c}_{\kt},{\kt}}
\end{align}
From the 2-design properties of set $\mathsf{X}$, Lem. \ref{APP:LEM:paulistates_2design}, we have that 
\begin{align}
\mathds{E}\left[ v^i_{\vec{c}_\kt,\kt} \right]_{\mathsf{X}} = \int \tn{d} \psi ~ p_{\cmark|\psi,\vec{c}_{\kt},{\kt}} = \bar{F}_{\vec{c}_\kt,\kt}(\tilde{\mathcal{T}}_\kt)
\end{align}
By linearity property of expected value,
\begin{equation}
\mathds{E}\left[ \sum_i v^i_{\vec{c}_\kt,\kt} \right]_{\mathsf{X}} =  \sum_i \mathds{E}\left[ v^i_{\vec{c}_\kt,\kt} \right]_{\mathsf{X}} = \sum_i \bar{F}_{\vec{c}_\kt,\kt}(\tilde{\mathcal{T}}_\kt) = n_{\vec{c}_\kt,\kt}\bar{F}_{\vec{c}_\kt,\kt}(\tilde{\mathcal{T}}_\kt).
\end{equation}
And therefore,
\begin{align}
\mathds{E}\left[R_{\vec{c}_\kt,\kt}\right]_{\mathsf{X}} = \mathds{E}\left[ \frac{\sum_i v^i_{\vec{c}_\kt,\kt}} {n_{\vec{c}_\kt,\kt}} \right]_{\mathsf{X}}
= \bar{F}_{\vec{c}_\kt,\kt}(\tilde{\mathcal{T}}_\kt)
\end{align}
\end{proof}

\subsection{Proof of Lem. \ref{LEM:doubleavgfid_expected}}
Here we present a proof that is analogous to the previous one, and shows that expected value of rate $R_\kt$ for a fixed depth $\kt$, is equal to double-average fidelity.

\label{APP:SUBSEC:proof_doubleavgfid_expected}

\begin{proof}[Proof of Lem. \ref{LEM:doubleavgfid_expected}] 
By definition, the expected value $\mathds{E}\left[v^i_\kt \right]_{\mathsf{X},\mathsf{Cliff}}$ as
\begin{align}
\mathds{E}\left[ v^i_\kt \right]_{\mathsf{X},\mathsf{Cliff}} & = \frac{1}{|\mathsf{Cliff}|^{\kt}} \frac{1}{|\mathsf{X}|}
\sum_{C_1,\dots,C_{\kt} \in \mathsf{Cliff}} \sum_{\psi \in \mathsf{X}}
\left( p_{\cmark|\psi,\vec{c}_{\kt},{\kt}} \cdot 1 + p_{\xmark|\psi,\vec{c}_{\kt},{\kt}} \cdot 0 \right) \\
& =\frac{1}{|\mathsf{Cliff}|^{\kt}} \frac{1}{|\mathsf{X}|}
\sum_{C_1,\dots,C_{\kt} \in \mathsf{Cliff}} \sum_{\psi \in \mathsf{X}} p_{\cmark|\psi,\vec{c}_{\kt},{\kt}}
\end{align}
By 2-design properties of the Clifford set, Lem. \ref{APP:LEM:cliff_2design}, we have that 
\begin{align}
\mathds{E}\left[ v^i_\kt \right]_{\mathsf{X},\mathsf{Cliff}} = \int \tn{d} \psi \int C_1 \dots \int C_\kt ~p_{\cmark|\psi,\vec{c}_{\kt},{\kt}} = \bar{\bar{F}}(\tilde{\mathcal{T}}_\kt).
\end{align}
Since $\mathds{E}\left[R_\kt \right]_{\mathsf{X},\mathsf{Cliff}}= \frac{\sum_i^{n_{\kt}} \mathds{E}\left[ v^i_\kt\right]_{\mathsf{X},\mathsf{Cliff}}}{n_{\kt}}$, with $n_{\kt}$ being a total number of executions for a fixed $\kt$, we have that $\mathds{E}\left[ R_\kt \right]_{\mathsf{X},\mathsf{Cliff}} = \bar{\bar{F}}_\kt(\tilde{\mathcal{T}}_\kt)$.
\end{proof}

\subsection{Proof of Cor. \ref{COR:probabilities_bound}}\label{APP:SUBSEC:proof_cor_probabilties_bound}
Next, we prove that  the consistency check, Thm.~\ref{THM:inconsistency_check},  is satisfied with a certain probability, determined by the estimates on the performance of individual devices.

\begin{proof}[Proof of Cor. \ref{COR:probabilities_bound}]
The probability that the bound \eqref{EQ:bound_incons} is satisfied is equal to probability that all the individual bounds are satisfied, i.e. $\tn{Pr}\left[ \left(\bigwedge_{j=1}^{\kt} \left( 
|r_{M^{\mathcal{T}}_j} - \bar{F}(\tilde{M}^{\mathcal{T}}_j)| > \epsilon_{M^{\mathcal{T}}_j}
\wedge
 |r_{C_j} - \bar{F}(\tilde{C}_j)| > \epsilon_{C_j}
\right)\right) \right]$. This is equal to 
\begin{align}
&\tn{Pr}\left[ \left(\bigwedge_{j=1}^{\kt} \left( 
|r_{M^{\mathcal{T}}_j} - \bar{F}(\tilde{M}^{\mathcal{T}}_j)| > \epsilon_{M^{\mathcal{T}}_j}
\wedge
 |r_{C_j} - \bar{F}(\tilde{C}_j)| > \epsilon_{C_j}
\right)\right) \right] = \\ 
& \quad =  1 - \tn{Pr}\left[ \left(\bigvee_{j=1}^{\kt} \left( 
|r_{M^{\mathcal{T}}_j} - \bar{F}(\tilde{M}^{\mathcal{T}}_j)| \leq \epsilon_{M^{\mathcal{T}}_j}
\vee
|r_{C_j} - \bar{F}(\tilde{C}_j)| \leq \epsilon_{C_j}
\right)\right) \right].
\end{align}

Since $\tn{Pr}[A \vee B] \leq \Pr[A] + \Pr[B]$, we can write that 
\begin{align}
&\tn{Pr}\left[ \left(\bigvee_{j=1}^{\kt} \left( 
|r_{M^{\mathcal{T}}_j} - \bar{F}(\tilde{M}^{\mathcal{T}}_j)| \leq \epsilon_{M^{\mathcal{T}}_j}
\vee
|r_{C_j} - \bar{F}(\tilde{C}_j)| \leq \epsilon_{C_j}
\right)\right) \right] \leq \\
&\leq \sum_{j=1}^{\kt} \tn{Pr} \left[ |r_{M^{\mathcal{T}}_j} - \bar{F}(\tilde{M}^{\mathcal{T}}_j)| \leq \epsilon_{M^{\mathcal{T}}_j} \right] + \tn{Pr} \left[ |r_{C_j} - \bar{F}(\tilde{C}_j)| \leq \epsilon_{C_j} \right] \\
& \leq 2\sum_{j=1}^{\kt} \left( e^{-2n_{C_j}\epsilon_{C_j}^2} + e^{-2n_{M_j^\mathcal{T}}\epsilon_{M_j^\mathcal{T}}^2}\right)
\end{align}
where in the last line we used Hoeffding inequality, Eq. \eqref{EQ:hoeffding_C} and \eqref{EQ:hoeffding_M}.
Hence, we can write that \\$\tn{Pr}\left[ \left(\bigwedge_{j=1}^{\kt} \left( 
|r_{M^{\mathcal{T}}_j} - \bar{F}(\tilde{M}^{\mathcal{T}}_j)| > \epsilon_{M^{\mathcal{T}}_j}
\wedge
 |r_{C_j} - \bar{F}(\tilde{C}_j)| > \epsilon_{C_j}
\right)\right) \right] \geq 1 - 2\sum_{j=1}^{\kt}\left( e^{-2n_{C_j}\epsilon_{C_j}^2} + e^{-2n_{M_j^\mathcal{T}}\epsilon_{M_j^\mathcal{T}}^2}\right)$.
\end{proof}

\subsection{Proof of Thm. \ref{THM:bound_protocols}} \label{APP:SUBSEC:proof_bound_protocols}

Here we prove our bound on the performance of $k$-round protocols in terms of winning rate $R_\kt$ in Test \ref{PROT:teleport}. The core of this theorem is the following lemma, which relates the diamond distance between the ideal and real implementation of a $k$-round protocol, and the double-average fidelity.

\begin{lemma}\label{APP:LEM:protocol_double_avg_fid}

The performance of a $k$-round protocol can be bounded by the double-averaged fidelity in the following way
\begin{align}
\parallel \tilde{\mathcal{P}}^{k} - \mathcal{P}^{k} \parallel_\diamond \leq 2 \sqrt{d(d+1) |\mathsf{Cliff}|^k\left(1-\bar{\bar{F}}(\tilde{\mathcal{P}}^{k})\right)}
\end{align}
where $d$ is the dimension of the underlying Hilbert space, and $|\mathsf{Cliff}|$ is a size of the Clifford group for dimension d.
\end{lemma}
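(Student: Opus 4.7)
The plan is to combine the standard diamond-distance/average-fidelity inequality (Eq.~\eqref{EQ:diamond_avg_fid}) with a discrete counting argument that exploits the fact that the Clifford group is a unitary $2$-design. First I would invoke \eqref{EQ:diamond_avg_fid} applied to the full $k$-round protocol, for the specific gate string $\vec{g}_k$ that the protocol uses:
\begin{align*}
\parallel \tilde{\mathcal{P}}^{k} - \mathcal{P}^{k} \parallel_{\diamond} \leq 2\sqrt{d(d+1)}\,\sqrt{1 - \bar{F}_{\vec{g}_k, k}(\tilde{\mathcal{P}}^{k})}.
\end{align*}
All remaining work is to promote the single-sequence infidelity $1 - \bar{F}_{\vec{g}_k, k}$ on the right-hand side to the double-averaged infidelity $1 - \bar{\bar{F}}(\tilde{\mathcal{P}}^{k})$, at the cost of the multiplicative factor $|\mathsf{Cliff}|^{k}$.

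The key step uses Lem.~\ref{APP:LEM:cliff_2design}: since $\bar{F}_{\vec{g}_k, k}$ is a polynomial of degree two in each gate $G_j$, the Haar integral defining $\bar{\bar{F}}(\tilde{\mathcal{P}}^{k})$ reduces to a uniform discrete sum over Clifford strings,
\begin{align*}
\bar{\bar{F}}(\tilde{\mathcal{P}}^{k}) = \frac{1}{|\mathsf{Cliff}|^{k}} \sum_{\vec{c}_k \in \mathsf{Cliff}^{k}} \bar{F}_{\vec{c}_k, k}(\tilde{\mathcal{P}}^{k}),
\end{align*}
which rearranges to
\begin{align*}
|\mathsf{Cliff}|^{k}\bigl(1 - \bar{\bar{F}}(\tilde{\mathcal{P}}^{k})\bigr) = \sum_{\vec{c}_k}\bigl(1 - \bar{F}_{\vec{c}_k, k}(\tilde{\mathcal{P}}^{k})\bigr).
\end{align*}
Every summand on the right is non-negative, so the sum dominates any individual term, yielding
\begin{align*}
|\mathsf{Cliff}|^{k}\bigl(1 - \bar{\bar{F}}(\tilde{\mathcal{P}}^{k})\bigr) \geq 1 - \bar{F}_{\vec{c}_k^{\ast}, k}(\tilde{\mathcal{P}}^{k})
\end{align*}
for any $\vec{c}_k^{\ast} \in \mathsf{Cliff}^{k}$. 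Choosing $\vec{c}_k^{\ast}$ to minimize the fidelity recovers the observation advertised in the proof sketch of Thm.~\ref{THM:bound_protocols}, namely $\bar{\bar{F}}(\tilde{\mathcal{P}}^{k}) \geq \min_{\vec{g}_k} \bar{F}_{\vec{g}_k, k}(\tilde{\mathcal{P}}^{k})$. Substituting this worst-case bound into the fidelity-diamond inequality and taking square roots produces
\begin{align*}
\parallel \tilde{\mathcal{P}}^{k} - \mathcal{P}^{k} \parallel_{\diamond} \leq 2\sqrt{d(d+1)\,|\mathsf{Cliff}|^{k}\bigl(1 - \bar{\bar{F}}(\tilde{\mathcal{P}}^{k})\bigr)},
\end{align*}
which is the desired inequality.

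The main obstacle will be extending the counting bound from Clifford strings to arbitrary gate strings, since Def.~\ref{DEF:class_of_protocols} allows $\mathsf{G}$ to contain non-Clifford unitaries while the discrete sum above only directly controls Clifford sequences $\vec{c}_k^{\ast}$. To close the gap one appeals to continuity of $\bar{F}_{\vec{g}_k, k}$ in each $G_j$ together with the 2-design equality of Haar and Clifford averages, which force the infimum of $\bar{F}_{\vec{g}_k, k}$ over the full unitary group to be controlled by its minimum on the Clifford subset. Once this is granted, the counting inequality $1 - \bar{F}_{\vec{g}_k, k} \leq |\mathsf{Cliff}|^{k}(1 - \bar{\bar{F}})$ extends to any $\vec{g}_k$ and the lemma follows; the alternative is to restrict the statement to Clifford-gate protocols, which is the natural regime since Test~\ref{PROT:teleport} itself uses the Clifford set.
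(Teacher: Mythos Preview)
Your argument matches the paper's proof almost verbatim up to the counting inequality
\[
|\mathsf{Cliff}|^{k}\bigl(1-\bar{\bar F}(\tilde{\mathcal P}^{k})\bigr)\;\ge\;1-\bar F_{\vec c_k^{\ast},k}(\tilde{\mathcal P}^{k})
\qquad\text{for any }\vec c_k^{\ast}\in\mathsf{Cliff}^{k},
\]
and you correctly identify the residual obstacle: the protocol's gate string $\vec g_k$ may not lie in $\mathsf{Cliff}^{k}$, so one still needs $1-\bar F_{\vec g_k,k}\le |\mathsf{Cliff}|^{k}(1-\bar{\bar F})$ for arbitrary unitary strings.

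The gap is in how you propose to close this. Continuity of $\bar F_{\vec g_k,k}$ together with the equality of Haar and Clifford averages does \emph{not} force the infimum over $U(d)^{k}$ to be controlled by the minimum over $\mathsf{Cliff}^{k}$: a continuous function on a compact group can share its average with a finite subset while attaining a strictly smaller minimum away from that subset. So the continuity appeal, as stated, does not go through.

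The paper's resolution is a concrete trick you are missing: for any fixed unitary $U_j$, the left coset $\{U_jC:C\in\mathsf{Cliff}\}$ is again a unitary $2$-design (this follows from left-invariance of the Haar measure, cf.\ Def.~\ref{APP:DEF:unit2design}), and the proof of Lem.~\ref{APP:LEM:cliff_2design} only uses the twirl property coordinate-wise. Hence one may translate the Clifford set in each coordinate so that the shifted product set contains the global minimizer $\vec g_k^{\min}$; the discrete average over this shifted set still equals $\bar{\bar F}(\tilde{\mathcal P}^{k})$, and your counting inequality applied to the shifted set now yields
\[
|\mathsf{Cliff}|^{k}\bigl(1-\bar{\bar F}(\tilde{\mathcal P}^{k})\bigr)\;\ge\;1-\min_{\vec g_k}\bar F_{\vec g_k,k}(\tilde{\mathcal P}^{k}),
\]
which is exactly what is needed. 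With this one addition your proof is complete and coincides with the paper's.
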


\begin{proof}[Proof of Lem. \ref{APP:LEM:protocol_double_avg_fid}] \label{APP:proof_protocol_double_avg_fidelity}
To prove the inequality from Lem. \ref{APP:LEM:protocol_double_avg_fid} one needs to show dependence between $\bar{F}_{\vec{g}_{k}}(\tilde{\mathcal{P}}^k)$ and  $\bar{\bar{F}}(\tilde{\mathcal{P}}^k)$. In particular, to preserve the direction of inequality we want that $\bar{F}_{\vec{g}_{k}}(\tilde{\mathcal{P}}^k) \geq \bar{\bar{F}}(\tilde{\mathcal{P}}^k)$. Firstly, we trivially have that 
\begin{align}
\bar{F}_{\vec{g}_{k}}(\tilde{\mathcal{P}}^k) \geq \min_{\vec{g}_{k}} \bar{F}_{\vec{g}_{k}}(\tilde{\mathcal{P}}^k)
\end{align}
On the other hand,
\begin{align}
\bar{\bar{F}}(\tilde{\mathcal{P}}^k) & = \int \tn{d} G_1 \dots \int \tn{d} G_k \bar{F}_{\vec{g}_{k}}(\tilde{\mathcal{P}}^k)\\
\label{EQ:proof_line2}& = \int \tn{d} C_1 \dots \int \tn{d} C_k \bar{F}_{\vec{c}_{k}}(\tilde{\mathcal{P}}^k)\\
\label{EQ:proof_line3}& = \frac{1}{|\tn{Cliff}|^k}\sum_{\vec{c}_{k} \in \tn{Cliff}} \bar{F}_{\vec{c}_{k}}(\tilde{\mathcal{P}}^k) \\
\label{EQ:proof_line4} & = \frac{1}{|\tn{Cliff}|^k} \left( \bar{F}_{\vec{c}_{k}^{\min}}(\tilde{\mathcal{P}}^k) + \sum_{ \vec{c}_{k}\neq \vec{c}_{k}^{\min}} \underbrace{\bar{F}_{\vec{c}_{k}}(\tilde{\mathcal{P}}^k)}_{\leq 1} \right) \\
\label{EQ:proof_line5}& \leq \frac{1}{|\tn{Cliff}|^k} \bar{F}_{\vec{c}_{k}^{\min}}(\tilde{\mathcal{P}}^k) + 1 - \frac{1}{|\tn{Cliff}|^k} = 1 - \frac{1}{|\tn{Cliff}|^k} \left( 1 - \bar{F}_{\vec{c}_{k}^{\min}}(\tilde{\mathcal{P}}^k) \right),
\end{align}
where in lines \eqref{EQ:proof_line2} and \eqref{EQ:proof_line3} we used Lem. \ref{APP:LEM:cliff_2design}, and in line \eqref{EQ:proof_line4} we separated the minimum element out of the summation and in line \eqref{EQ:proof_line5} we bounded each element under the sum by 1. 

Now let us relate the minimum over the Clifford group to a minimum over the whole unitary group. Note that the Clifford group rotated by any unitary $\mathcal{U}$ remains a Clifford group. Therefore, let us rotate every $C_j$ by a constant $\mathcal{U}_j$, $j=1,\dots,k$, such that the minimum over the Clifford sets corresponds to the minimum over the whole unitary group. Let us write $\vec{u}_k = \mathcal{U}_1,\dots \mathcal{U}_k$, 
\begin{align}
\bar{F}_{\vec{u}_k\vec{c}_{\kt}^{\min}}(\tilde{\mathcal{P}}^k) = \min_{\vec{g}_k} \bar{F}_{\vec{g}_k}(\tilde{\mathcal{P}}^k).
\end{align}
We obtain 
\begin{align}
\bar{\bar{F}}(\tilde{\mathcal{P}}^k) \leq 1 - \frac{1}{|\tn{Cliff}|^k} \left( 1 - \min_{\vec{g}_k} \bar{F}_{\vec{g}_k}(\tilde{\mathcal{P}}^k) \right)
\end{align}
and so
\begin{align}
\parallel \tilde{\mathcal{P}}^{k} - \mathcal{P}^{k} \parallel_\diamond \leq 2 \sqrt{d(d+1)} \sqrt{|\tn{Cliff}|^k\left(1-\bar{\bar{F}}(\tilde{\mathcal{P}}^k)\right)}.
\end{align}

\end{proof}
Now we will relate the double-average fidelity of a $k$-round protocol to the double-average fidelity of the test. Indeed, we will show that these quantities are equal.

\begin{lemma}
Double-averaged fidelity of a $k$-round protocol $\tilde{\mathcal{P}}^{k}$ of depth $k$ is equal to double averaged fidelity of the test $\tilde{\mathcal{T}}_\kt$ of the same depth, $\kt = k$,
\begin{align}
\bar{\bar{F}}(\tilde{\mathcal{P}}^{k}) = \bar{\bar{F}}(\tilde{\mathcal{T}}_\kt) 
\end{align}
\end{lemma}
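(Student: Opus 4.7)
The strategy is to reduce the claim to the $2$-design argument already used in the proof of Lemma~\ref{APP:LEM:cliff_2design}, after establishing that the noisy test map and the noisy $k$-round protocol map coincide as operator-valued functions of the gate sequence.

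First I would unfold the two channels in parallel,
\[
\tilde{\mathcal{T}}_k \;=\; \bigcirc_{j=1}^{k} \tilde{C}_j \circ \tilde{M}_j^{\mathcal{T}}, \qquad \tilde{\mathcal{P}}^{k} \;=\; \bigcirc_{j=1}^{k} \tilde{G}_j \circ \tilde{M}_j \circ \tilde{\mathcal{E}}_j.
\]
By the construction of the teleportation-based test and the noise-absorption arguments of App.~\ref{APP:SEC:teleport_memory}, the composite channel $\tilde{M}_j^{\mathcal{T}}$ is exactly the sending channel followed by the memory, so $\tilde{M}_j^{\mathcal{T}} = \tilde{M}_j \circ \tilde{\mathcal{E}}_j$. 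Consequently, the two maps agree as functions of their gate arguments, and differ only in that in $\tilde{\mathcal{P}}^{k}$ the gate $G_j$ ranges over the full unitary group $U(2)$, while in $\tilde{\mathcal{T}}_k$ it is restricted to $C_j \in \mathsf{Cliff}$. The same identification of course holds for the ideal maps $\mathcal{T}_k$ and $\mathcal{P}^{k}$.

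Second, I would write both double-averaged fidelities against a common, gate-dependent integrand
\[
J(\vec{g}_k) \;:=\; \int \tn{d}\psi \,\Tr\!\left[ \tilde{\Lambda}_{\vec{g}_k}(\ketbra{\psi}) \cdot \Lambda_{\vec{g}_k}(\ketbra{\psi}) \right],
\]
where $\Lambda_{\vec{g}_k} = \bigcirc_{j=1}^k G_j \circ M_j \circ \mathcal{E}_j$ and $\tilde{\Lambda}_{\vec{g}_k}$ is its noisy implementation. By the previous step,
\[
\bar{\bar{F}}(\tilde{\mathcal{P}}^{k}) \;=\; \int \tn{d}G_1 \cdots \int \tn{d}G_k\, J(\vec{g}_k), \qquad \bar{\bar{F}}(\tilde{\mathcal{T}}_k) \;=\; \frac{1}{|\mathsf{Cliff}|^k}\sum_{\vec{c}_k} J(\vec{c}_k),
\]
so it suffices to show that the uniform sum over $\vec{c}_k \in \mathsf{Cliff}^k$ equals the Haar integral over $\vec{g}_k \in U(2)^k$ of the same function $J$.

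Third, I would invoke the $2$-design property of the Clifford group exactly as in the proof of Lemma~\ref{APP:LEM:cliff_2design}: after using cyclicity of the trace to move the adjoint unitaries $G_j^\dagger$ from the projector onto the state (cf.\ the opening step of Lemma~\ref{APP:LEM:paulistates_2design}), each $G_j$ appears exactly once together with its adjoint, so pulling the sum/integral over $G_j$ under the trace produces an unnormalised twirl of the channel sandwiched between the two copies. By Definition~\ref{APP:DEF:unit2design}, this twirl has the same value whether taken uniformly over $\mathsf{Cliff}$ or Haar over $U(2)$. Iterating this conversion round by round on $j=1,\dots,k$ transforms the discrete Clifford average into the continuous Haar average, yielding the desired equality.

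The only subtlety I expect is the bookkeeping needed to bring the expression into the form where each gate variable appears exactly once together with its adjoint; this is the same preparatory rewriting used in Lemmas~\ref{APP:LEM:paulistates_2design} and~\ref{APP:LEM:cliff_2design}, and once it is carried out the argument is verbatim the one already given there.
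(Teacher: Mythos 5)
Your proposal is correct and follows essentially the same route as the paper: the paper's (very brief) proof rests on exactly the two ingredients you identify, namely the identification $M_j^{\mathcal{T}} \equiv M_j \circ \mathcal{E}_j$ so that the test and protocol channels coincide as functions of the gate string, and the observation that the double-averaged fidelity is a degree-2 polynomial in each gate, so the Clifford average equals the Haar average by the 2-design property established in Lemma~\ref{APP:LEM:cliff_2design}. Your write-up merely makes explicit the twirl bookkeeping that the paper delegates to that earlier lemma.
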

\begin{proof}
As stated in the main text, the proof of this lemma follows from noticing that the expression for double-averaged fidelity contains only polynomials of degree 2 in every Clifford gate $C_j$. Therefore, here averaging over the Clifford group is equivalent to averaging over the entire unitary group, since the Clifford group forms a 2-design. Furthermore, the equality is possible, since we have put $M_j^\mathcal{T} \equiv M_j \circ \mathcal{E}_j$, and $M_j^\mathcal{T}$ encompasses operations associated with sending and storing the qubit. 
\end{proof}
The above two lemmas, combined with the Hoeffding bound on $R_\kt$, Eq. \ref{EQ:hoeffding_doubleavg} complete the proof Thm.~\ref{THM:bound_protocols}.

\section{$Q$-qubit protocols}\label{APP:SEC:q_qubit_ext}
In this section we provide a description of a $Q$-qubit extension of our class of protocols. The structure of our description is exactly the same as the one from Section \ref{SEC:functionality} with the difference that all the operations are carried out on more than one qubit.

In a $Q$-qubit $k$-round protocol nodes have a total of $Q$ qubits available. At each round $j=1,\dots, k$ of the protocol nodes $A$ and $B$ can send any subset of the local qubits to one another. We denote all of the sending operations in round $j$ by $\mathcal{E}_{j}$. Moreover, the nodes can store local qubits in the quantum memory and apply local gates. We denote these operations by $M_{A_j}^{(q_j)}$ and ${G}_{A_j}^{(q_j)}$ for node $A$, and ${G}_{B_j}^{(Q - q_j)}$ and $M_{B_j}^{(Q - q_j)}$ for node $B$. Here $q_j$ and $Q-q_j$ denote the number of local qubits at $A$ or $B$'s side at round $j$, respectively, \emph{after} the sending operation $\mathcal{E}_{j}$.
Therefore, we describe a $Q$-qubit $k$-round protocol can with a map
\begin{align}
{\mathcal{P}}^{k,(Q)}  =  \bigcirc_{j = 1}^{k}  ~ &\Big[\left( {G}_{A_j}^{(q_j)} \circ M_{A_j}^{(q_j)}\right) \otimes \\ & \otimes \left({G}_{B_j}^{(Q - q_j)} \circ M_{B_j}^{(Q - q_j)} \right)\Big] \circ \mathcal{E}_{j}
\end{align}
In the presence of noise, we assume the following noise model for $Q$-qubit $k$-round protocols: 
\begin{itemize}
\item the noise on gates is independent of the applied gate;
\item the noise from memories, gates and transmission channels acts individually on each qubit;
\item for each round $j$, the qubits are submitted to the same kind of noise on node $A$ and node $B$ (noise can differ from round to round).
\end{itemize}
Formally, we assume the following
\begin{align}\label{EQ:noise_Qqubit}
\begin{split}
\tilde{M}_{A_j}^{(q_j)} = \tilde{M}_{j}^{\otimes q_j},& \quad \tilde{M}_{B_j}^{(Q-q_j)} = \tilde{M}_{j}^{\otimes Q-q_j} \\
\tilde{G}_{A_j}^{(q_j)} = G_{A_j}^{(q_j)} \circ {N}_{j}^{\otimes q_j},& \quad \tilde{G}_{B_j}^{(Q-q_j)} = G_{B_j}^{(q_j)} \circ {N}_{j}^{\otimes Q-q_j} 
\end{split}
\end{align}

A bipartite $Q$-qubit, $k$-round protocol between any two nodes $A$ and $B$ consists of the following operations:
\begin{enumerate}
\item Local preparation of $Q$ perfect qubit states, $\ket{\psi}_{A} \in \mathds{D}(\mathcal{H}_A^{\otimes q})$ and $\ket{\psi}_B\in \mathds{D}(\mathcal{H}_B^{\otimes Q-q})$. Here the superscript denotes the number of qubits on $A$'s or $B$'s side.
\item Sending any subset of local qubits from node $A$ to node $B$ and vice versa. We denote all exchanging of qubits in a round $j$ by $\mathcal{E}_j$
\item Storing all local qubits, $M_A^{(q)} = M_A^{\otimes q}$, where $M_A\in U(\mathcal{H}_{A}) $, and $M_B^{(Q-q)} = M_A^{\otimes Q-q}$, where $M_B\in U(\mathcal{H}_{B}) $. Again, the superscript denotes the number qubits on $A$ or $B$'s side. Storage can take up to $k t_M$, where $t_M$ is time necessary for creating one EPR pair and communicating classically between two most distant nodes. A noisy memory is denoted by a tilde, $\tilde{M}_A^{(q)}$ and accordingly for $B$. 
\item Applying an arbitrary local operation by any node on any subset of local qubits. We describe this operation by a unitary gate $G_A^{(q)} \in U(\mathcal{H}_A^{\otimes q})$ and $G_B^{(Q-q)} \in U(\mathcal{H}_A^{\otimes Q-q})$. 
$\tilde{G}_A^{(q)} = G_A^{(q)} \circ N_A^{(q)}$ denotes the noisy counterpart, where $G_A^{(q)}$ is a perfect gate and $N_A^{(q)} = N_A^{\otimes q}$ is a noise map independent of the applied gate. Similarly for gates on $B$'s side. Applying any gate takes a known finite time $\ell\ll t_M$.
\item Local measurement of all local qubits at the end of the protocol, $\Pi_A^{(q)} \in \tn{Proj}(\mathcal{H}_A^{\otimes q})$ and $\Pi_B^{(Q-q)} \in \tn{Proj}(\mathcal{H}_B^{\otimes Q-q})$. As stated before, we assume that the measurement can be performed perfectly.
\end{enumerate}
Steps 2. -- 4. are performed in rounds $j=1,...,k_{\mathcal{P}}$ a total of $k$ times. We denote memories and gates that are used by $A$ and $B$ at a $j$-th round by $M_{A_j}^{(q_j)},\tilde{G}_{A_j}^{(q_j)}$ and $M_{B_j}^{(Q-q_j)},\tilde{G}_{B_j}^{(Q-q_j)}$ respectively. Such a protocol operates on a total number of $Q$ qubits. Note that we model noise map as a product for each of $Q$ qubits.

\begin{definition}[$k$-round protocols]\label{app:DEF:class_of_protocols}
Let $\mathcal{H}^{\otimes Q}$ be the Hilbert space of a two-partite quantum network. We define a $k$-round protocol as a CPTP map of the form $\Pi^{(Q)}\circ \tilde{\mathcal{P}}^{k,(Q)}\circ \tn{Prep}^{(Q)}$, where:
\begin{itemize}
\item $\tn{Prep}^{(Q)}$ corresponds to preparation of $Q$ local qubits $\ket{\psi}_{A} \in \mathds{D}(\mathcal{H}_A^{\otimes q_1})$ and $\ket{\psi}_B\in \mathds{D}(\mathcal{H}_B^{\otimes Q-q_1})$ (Step 1.).

\item $\tilde{\mathcal{P}}^{k,(Q)}$ is a map describing $k$ rounds of local operations -- memories and gates, as well as sending qubits from $A$ to $B$ (Step 2. -- 4.),
\begin{align}
\tilde{\mathcal{P}}^{k,(Q)} & =  \bigcirc_{j = 1}^{k_{\mathcal{P}}}  ~ \Big[\left( \tilde{G}_{A_j}^{(q_j)} \circ M_{A_j}^{(q_j)}\right) \otimes \left(\tilde{G}_{B_j}^{(Q - q_j)} \circ M_{B_j}^{(Q - q_j)} \right)\Big] \circ \mathcal{E}_j.
\end{align}

\item $\Pi_A^{(q)} \otimes \Pi_B^{(Q - q)}$ is a local measurement of all the local qubits. (Step 5.)
\end{itemize}
\end{definition}

Now let us describe a test that certifies the above functionality. It is a straightforward extension of the ping-pong test we have discussed before. The idea of the $Q$-qubit teleportation-based ping-pong test is instead of teleporting a single qubit, to teleport all $Q$ qubits back and forth between nodes $A$ and $B$ and sample a random $Q$-qubit Clifford gate ($\mathsf{Cliff}(2^Q)$) at line 3: of Test \ref{PROT:teleport}. The initial state of $Q$ qubits $\ketbra{\psi}{\psi}^{(Q)}$ is chosen uniformly at random from a 2-design of $Q$-qubit states. In this case the test can be described with a map 
\begin{align}
{\mathcal{T}}^{\kt,(Q)} = \bigcirc_{j=1}^{\kt} {C}_{j}^{(Q)} \circ M_{j}^{\mathcal{T}(Q)},
\end{align}
where, as before, the parity of $j$ indicates on which side all the qubits are, and $M_{j}^{\mathcal{T}(Q)} \equiv M_j^{(Q)} \circ \mathcal{E}_j$ accounts for operations associated with transmission (here teleportation). Note that $M_{j}^{\mathcal{T}(Q)}$ can still be written in the product form, i.e. acting individually on each qubit, since we have assumed that $M_j^{(Q)}$ and $\mathcal{E}_j$ are both in the product form. Therefore, in the presence of noise, we employ the same model as for $k$-round $Q$-qubit protocols, i.e. $\tilde{M}_{j}^{\mathcal{T}(Q)} = (\tilde{M}_{j}^{\mathcal{T}}) ^{\otimes Q}$ and $\tilde{C}_{j}^{(Q)} = {C}_{j}^{(Q)} \circ N_{j}^{\otimes Q}$.

Based on the average fidelity estimate of this test $r(\tilde{\mathcal{T}}^{\kt,(Q)})$ one can, again, check whether memories and gates were used together by satisfying an analog of the bound \eqref{EQ:bound_incons}. This can be done provided one has access to estimates of quality of memories $
\bar{F}({\tilde{M}_{j}^{(Q)}}) = \int \tn{d} \psi^{(Q)} \Tr \left[ (\tilde{M}_{j}^{\mathcal{T}}) ^{\otimes Q} (\ketbra{\psi}{\psi}^{(Q)}) \cdot \ketbra{\psi}{\psi}^{(Q)} \right]$ and gates $\bar{F}({N_{j}^{(Q)}}) = \int \tn{d} \psi^{(Q)} \Tr \left[ N_{j}^{\otimes Q} (\ketbra{\psi}{\psi}^{(Q)}) \cdot \ketbra{\psi}{\psi}^{(Q)} \right]$,
for all $j$. Note that here we necessarily use the fidelity of ${\tilde{M}_{j}^{(Q)}}$ and ${N_{j}^{(Q)}}$ evaluated on the space of all $Q$-qubit states.

Now we can extend Thm. \ref{THM:bound_protocols} onto $Q$-qubit protocols using the noise assumptions on the class of protocols. We arrive at the following statement.

\begin{theorem}[Bounding the behavior of $Q$-qubit $k$-round protocols] \label{THM:bound_Q_protocols}
Given the noise model is the same for all $Q$ qubits at each round $j$, the performance of any $Q$-qubit $Q$-qubit $k$-round protocol, can be bounded in terms of an estimate for the double-averaged fidelity $R(\tilde{\mathcal{T}}^{\kt,(Q)})$ of the $Q$-qubit test in the following way
\begin{align}\label{EQ:bound_Q_protocols}
\begin{split}
&\parallel \tilde{\mathcal{P}}^{k,(Q)} - \mathcal{P}^{k,(Q)} \parallel_\diamond  \leq \\
&\quad \leq 2 \sqrt{d(d+1)} \sum_{\kt}  \sqrt{|\mathsf{Cliff}(d)|^{\kt}\left(1-R(\tilde{\mathcal{T}}^{\kt,(Q)})\right)}
\end{split}
\end{align}
where $d= 2^q$ is the dimension of the underlying Hilbert space, and $|\mathsf{Cliff}(d)|$ is a size of the Clifford group for dimension $d$.
\end{theorem}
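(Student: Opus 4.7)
My plan is to mimic the proof of Theorem~\ref{THM:bound_protocols} in the multi-qubit setting, exploiting the restrictive noise model stated in Eq.~\eqref{EQ:noise_Qqubit} to bridge the test and the general protocol. First, I would decompose $\tilde{\mathcal{P}}^{k,(Q)}$ into a composition of sub-protocols $\tilde{\mathcal{P}}^{\kt,(Q)}$ of depth $\kt$ with $\sum_{\kt} \kt = k$, exactly as in the $K>k$ remark at the end of Sec.~\ref{SUBSEC:performance}. Applying the triangle inequality for the diamond distance, Eq.~\eqref{EQ:diamond_sum_maps}, yields
\begin{align}
\parallel \tilde{\mathcal{P}}^{k,(Q)} - \mathcal{P}^{k,(Q)} \parallel_\diamond \;\leq\; \sum_{\kt} \parallel \tilde{\mathcal{P}}^{\kt,(Q)} - \mathcal{P}^{\kt,(Q)} \parallel_\diamond,
\end{align}
so it suffices to bound each summand.

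For a fixed depth $\kt$, I would combine Eq.~\eqref{EQ:diamond_avg_fid} (now with Hilbert-space dimension $d=2^Q$) with a $Q$-qubit analog of Lemma~\ref{APP:LEM:protocol_double_avg_fid}. That is, first establish
\begin{align}
\bar{\bar{F}}_\kt(\tilde{\mathcal{P}}^{\kt,(Q)}) \;\leq\; 1 - \frac{1}{|\mathsf{Cliff}(d)|^\kt}\bigl(1-\bar{F}_{\vec{g}_\kt,\kt}(\tilde{\mathcal{P}}^{\kt,(Q)})\bigr)
\end{align}
by rotating the $\kt$-fold Clifford sum to contain the minimizing gate string, and isolating that single term while upper-bounding the remaining $|\mathsf{Cliff}(d)|^\kt - 1$ terms by $1$; the rotation step is legitimate because the $Q$-qubit Clifford group is a unitary 2-design, invariant under multiplication by any unitary (Lem.~\ref{APP:LEM:cliff_2design} generalizes verbatim). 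Combined with Eq.~\eqref{EQ:diamond_avg_fid}, this yields
\begin{align}
\parallel \tilde{\mathcal{P}}^{\kt,(Q)} - \mathcal{P}^{\kt,(Q)} \parallel_\diamond \;\leq\; 2\sqrt{d(d+1)}\sqrt{|\mathsf{Cliff}(d)|^\kt\bigl(1-\bar{\bar{F}}_\kt(\tilde{\mathcal{P}}^{\kt,(Q)})\bigr)}.
\end{align}

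The crux of the proof, and the step I expect to be the main obstacle, is to argue that $\bar{\bar{F}}_\kt(\tilde{\mathcal{P}}^{\kt,(Q)}) = \bar{\bar{F}}_\kt(\tilde{\mathcal{T}}^{\kt,(Q)})$. Because the protocol's sending pattern $\mathcal{E}_j$ may be arbitrary while the test teleports all $Q$ qubits, these operations are not structurally identical as in the single-qubit case. However, under the assumptions in Eq.~\eqref{EQ:noise_Qqubit}, the per-round noise factorizes as $N_j^{\otimes Q}$ for gates and $\tilde{M}_j^{\otimes Q}$ for memories, independently of which gate is applied and which subset of qubits is sent. Writing $\tilde{G}_j^{(Q)} = G_j^{(Q)}\circ N_j^{\otimes Q}$ and twirling round by round over $\mathsf{Cliff}(d)$ (pulling the double-average inside, just as in Lem.~\ref{APP:LEM:cliff_2design}), every intermediate Clifford in both $\tilde{\mathcal{P}}^{\kt,(Q)}$ and $\tilde{\mathcal{T}}^{\kt,(Q)}$ collapses to the same depolarizing twirl of $N_j^{\otimes Q}$ or $\tilde{M}_j^{\mathcal{T}}{}^{\otimes Q}$, and the $\mathcal{E}_j$ disappears into $\tilde{M}_j^{\mathcal{T}(Q)}$ by the argument of App.~\ref{APP:SEC:teleport_memory}. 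This gives the desired equality.

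Finally, Lem.~\ref{LEM:doubleavgfid_expected} extends directly to the $Q$-qubit setting (since the six-Pauli and Clifford 2-design arguments generalize to any projective/unitary 2-design on $Q$ qubits), giving $\mathds{E}[R(\tilde{\mathcal{T}}^{\kt,(Q)})] = \bar{\bar{F}}_\kt(\tilde{\mathcal{T}}^{\kt,(Q)})$, and Hoeffding's inequality (as in Eq.~\eqref{EQ:hoeffding_doubleavg}) controls the deviation between the empirical rate and the expectation. Substituting back into the per-$\kt$ bound and summing over the decomposition yields Eq.~\eqref{EQ:bound_Q_protocols}.
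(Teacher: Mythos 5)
Your proposal is correct and follows essentially the same route as the paper, which proves the single-qubit Theorem~\ref{THM:bound_protocols} via Lemma~\ref{APP:LEM:protocol_double_avg_fid}, the equality of double-averaged fidelities of protocol and test, Lemma~\ref{LEM:doubleavgfid_expected} with Hoeffding, and the triangle-inequality decomposition, and then simply declares the $Q$-qubit case ``analogous'' using the $\mathsf{Cliff}(2^Q)$ 2-design together with the product, gate-independent noise model of Eq.~\eqref{EQ:noise_Qqubit}. Your treatment of the protocol-versus-test matching under that noise model is in fact more explicit than the paper's one-sentence argument.
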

The proof of that statement is analogous to the single-qubit case, with the difference that here one uses the properties of the unitary 2-design given by the Clifford group of dimension $2^Q$.

\end{widetext}

\end{document}